\def\ps@pprintTitle{%
  \let\@oddhead\@empty
  \let\@evenhead\@empty
  \def\@oddfoot{}
  \let\@evenfoot\@oddfoot}
\newtheorem{theorem}{Theorem}[section]
\newtheorem{lemma}[theorem]{Lemma}
\newtheorem{corollary}[theorem]{Corollary}
\newtheorem{proposition}[theorem]{Proposition}
\newtheorem{definition}[theorem]{Definition}
\newtheorem{example}[theorem]{Example}
\newtheorem{remark}[theorem]{Remark}
\numberwithin{equation}{section}
\let\set\mathbb
\def\lc{\operatorname{lc}}
\def\cont{\operatorname{cont}}
\def\prim{\operatorname{prim}}
\def\supp{\operatorname{supp}}
\def\res{\operatorname{Res}}
\def\num{\operatorname{num}}
\def\bigO{\operatorname{O}}
\def\softO{\operatorname{O^{\sim}}}
\def\newt{\operatorname{Newt}}
\def\proj{\operatorname{Proj}}
\newcommand{\R}{{\mathsf R}}
\newcommand{\D}{{\mathsf D}}
\newcommand{\bx}[1]{{\bm x}^{\bm #1}}
\newcommand{\overbar}[1]{\mkern 1.5mu\overline{\mkern-1.5mu#1\mkern-1.5mu}\mkern 1.5mu}
\newcommand{\M}{{\mathsf M}}
\begin{document}

\begin{frontmatter}

\title{Efficient $q$-Integer Linear Decomposition\\ of Multivariate Polynomials}

\author{Mark Giesbrecht}
\address{Symbolic Computation Group, Cheriton School of Computer Science,
  University of Waterloo,\\
  Waterloo, ON, N2L 3G1, Canada}
\ead{mwg@uwaterloo.ca}

\author{Hui Huang}
\address{School of Mathematical Sciences, Dalian University of Technology,\\
  Dalian, Liaoning, 116024, China}
\ead{huanghui@dlut.edu.cn}

\author{George Labahn}
\address{Symbolic Computation Group, Cheriton School of Computer Science,
  University of Waterloo,\\
  Waterloo, ON, N2L 3G1, Canada}
\ead{glabahn@uwaterloo.ca}

\author{Eugene Zima}
\address{Physics and Computer Science, Wilfrid Laurier University,\\
  Waterloo, ON, N2L 3C5, Canada}
\ead{ezima@wlu.ca}

\begin{abstract}
  We present two new algorithms for the computation of the $q$-integer
  linear decomposition of a multivariate polynomial. Such a
  decomposition is essential for the treatment of $q$-hypergeometric
  symbolic summation via creative telescoping and for describing the
  $q$-counterpart of Ore-Sato theory.  Both of our algorithms require
  only basic integer and polynomial arithmetic and work for any unique
  factorization domain containing the ring of integers. Complete
  complexity analyses are conducted for both our algorithms and two
  previous algorithms in the case of multivariate integer polynomials,
  showing that our algorithms have better theoretical performances. A
  Maple implementation is also included which suggests that our
  algorithms are also much faster in practice than previous
  algorithms.
\end{abstract}

\begin{keyword}
  $q$-Analogue, Integer-linear polynomials, Polynomial decomposition,\\
  Newton polytope, Creative telescoping, Ore-Sato theory
\end{keyword}
\end{frontmatter}

\section{Introduction}\label{SEC:intro}
Many objects in the ordinary shift world of symbolic summation find a
natural counterpart commonly called {\em $q$-analogues}. In a typical
situation, these are just slight adaptations of the original objects
but with involved variables promoted to exponents of an additional
parameter~$q$. Techniques for handling the originals often carry over
to their $q$-analogues with some subtle modifications.  One of the
reasons for interest in $q$-analogues is that, due to the extra
parameter~$q$, they have many counting interpretations which are
useful in combinatorics and analysis.  One is referred to the classic
books \citep{Andr1976,Andr1986} for the combinatorial and analytical
aspects of $q$-theory, as well as for some surprising applications
elsewhere in mathematics (see also \citep{BoYu2020}).

In this paper, we deal with the $q$-analogue of integer-linear
decompositions of polynomials and aim to provide an intensive
treatment for its computation in analogy to \citep{GHLZ2019}.
Surprisingly, although this $q$-analogue is obtained by modeling its
ordinary shift counterpart, the primary technique used in
\citep{GHLZ2019} can not be easily adapted to compute it due to
different structures. A new alternative technique will be presented in
this $q$-shift case.

In order to describe more details, we let $\D$ be a ring of
characteristic zero and let $\R=\D[q,q^{-1}]$ be its transcendental
ring extension by the indeterminate $q$. For $n$ discrete
indeterminates $k_1,\dots,k_n$ distinct from $q$, we know that
$q^{k_1},\dots,q^{k_n}$ are transcendental over $\R$.  We can then
consider polynomials in $q^{k_1},\dots,q^{k_n}$ over $\R$, all of
which form a well-defined ring denoted by $\R[q^{k_1},\dots,q^{k_n}]$.
We say an irreducible polynomial $p\in \R[q^{k_1},\dots,q^{k_n}]$ is
{\em $q$-integer linear} over $\R$ if there exists a univariate
polynomial $P\in\R[y]$ and two integer-linear polynomials
$\sum_{i=1}^n\alpha_ik_i, \sum_{i=1}^n\lambda_ik_i\in\set
Z[k_1,\dots,k_n]$ such that
\[
p(q^{k_1},\dots,q^{k_n}) = q^{\sum_{i=1}^n\alpha_ik_i}
P(q^{\sum_{i=1}^n\lambda_ik_i}).
\]
In order to avoid superscripts, we will write the indeterminates
$q^{k_1},\dots, q^{k_n}$ as the variables $x_1,\dots,x_n$ in the
sequel of the paper. Then the above definition can be rephrased as
follows. An irreducible polynomial $p\in\R[x_1,\dots,x_n]$ is called
{\em $q$-integer linear} over $\R$ if there exists a univariate
polynomial $P\in\R[y]$ and integers $\alpha_1,\dots,\alpha_n$,
$\lambda_1,\dots,\lambda_n$ such that
\begin{equation}\label{EQ:qild}
  p(x_1,\dots,x_n) = x_1^{\alpha_1}\cdots\, x_n^{\alpha_n}
  P(x_1^{\lambda_1}\cdots\, x_n^{\lambda_n}).
\end{equation}
Note that the indeterminate $q$ is hidden in the variables
$x_1,\dots,x_n$.  Since a common factor of the $\lambda_i$ can be
pulled out and absorbed into $P$, and a monomial can be merged into
$x_1^{\alpha_1}\cdots\, x_n^{\alpha_n}$ if necessary, we assume that
the integers $\lambda_1,\dots,\lambda_n$ have no common divisor, that
the last nonzero integer in the $\lambda_i$ is positive, that
$\lambda_i=0$ whenever $\deg_{x_i}(p)=0$ and that $P(0)\neq 0$. Such a
vector $(\lambda_1,\dots,\lambda_n)$, as well as such a polynomial
$P$, is unique. We call the vector $(\lambda_1,\dots,\lambda_n)$ the
{\em $q$-integer linear type} of $p$ and the polynomial $P$ its
{\em corresponding univariate polynomial}. Note that the resulting
$\alpha_1,\dots,\alpha_n$ all belong to $\set N$ since
$p\in\R[x_1,\dots,x_n]$ and $P\in\R[y]$. A polynomial in
$\R[x_1,\dots, x_n]$ is called {\em $q$-integer linear} (over $\R$) if
all its irreducible factors are $q$-integer linear, possibly with
different $q$-integer linear types. For a polynomial
$p\in\R[x_1,\dots, x_n]$, we can define its
{\em $q$-integer linear decomposition} by factoring into irreducible
$q$-integer linear or non-$q$-integer linear polynomials and
collecting irreducible factors having common types.

The class of $q$-integer linear polynomials plays a fundamental role
in the $q$-analysis of symbolic summation. For example, it is an
important ingredient of the $q$-analogue of the Ore-Sato theorem for
describing the structure of multivariate $q$-hypergeometric terms
\citep{DuLi2019}, which in turn serves as a promising indispensable
tool for settling a $q$-analogue of Wilf-Zeilberger's conjecture
\citep{WiZe1992a,ChKo2019}. Furthermore, the $q$-integer linearity of
polynomials is used to detect the applicability of the $q$-analogue of
Zeilberger's algorithm (also known as the method of creative
telescoping) for $q$-hypergeometric terms \citep{CHM2005}.

The full $q$-integer linear decomposition of polynomials is also very
useful.  On the one hand, it provides a natural way to determine
the $q$-integer linearity of a given polynomial. On the other hand, it
enables one to compute the $q$-analogue of Ore-Sato decomposition of a
given $q$-hypergeometic term, and can also be employed to develop a
fast creative telescoping algorithm for rational functions in the
$q$-shift setting in analogy to \citep{GHLZ2021}. Evidently, the
efficiency of the computation of $q$-integer linear decompositions
directly affects the utility of all these algorithms.

In contrast to the ordinary shift case \citep{AbLe2002,GHLZ2019,LiZh2013},
algorithms for computing the $q$-integer linear decomposition of a
multivariate polynomial are not very well developed. As far as we are
aware, there is only one algorithm available to compute such a
decomposition of a bivariate polynomial.  This algorithm was 
developed by \citet[\S 5]{Le2001} with an extended description
provided in \citep{LAG2001}. Except for using the same pattern as its
ordinary shift counterpart \citep{AbLe2002}, this algorithm takes use
of a completely different strategy, especially for finding $q$-integer
linear types. This is mainly because all $q$-integer linear types
appear as the exponent vectors of $p$, rather than as the coefficients
in the ordinary shift case.  The main idea used by \citet[\S 5]{Le2001}
is to first find candidates for $q$-integer linear types by computing
a resultant and then, for each candidate, extract the corresponding
univariate polynomial via bivariate GCD computations. Given the
algebraic machinery on which the algorithm is based, it is not clear
how one can directly generalize this to handle polynomials in more
than two variables.

The main contribution of this paper is a pair of new fast algorithms
for computing the $q$-integer linear decomposition of a multivariate
polynomial. Both algorithms will work for any unique factorization
domain containing all integers and for any polynomial with an
arbitrary number of variables.  The first approach follows the pattern
of the algorithm of Le but avoids the computation of resultants. More
precisely, this approach reduces the problem of finding candidates for
$q$-integer linear types to the well-studied geometric task of
constructing the Newton polytope of the given polynomial, implying
computations only using basic arithmetic operations ($+,-,\div,\times$)
of integers. It then computes each corresponding univariate polynomial
by a content computation.  As such we show that the $q$-analogue is
actually simpler than its ordinary shift counterpart in the sense
that, instead of finding rational roots of polynomials, one merely
needs to perform basic integer manipulations.

Our second approach uses a bivariate-based method. This scheme takes
the bivariate version of our previous algorithm, that is, the
algorithm for computing the $q$-integer linear decomposition of a
bivariate polynomial, as a base case and iteratively tackles only two
variables at a time until all variables are treated.  Clearly, our two
approaches coincide in the bivariate case.

An additional contribution is to use our bivariate-based scheme
(approach two) to extend the algorithm of Le so that it can readily
tackle polynomials in any number of variables. For the sake of
completeness, we also include another algorithm based on full
irreducible factorization.  This algorithm makes use of the
observation that the difference of exponent vectors of any two
monomials appearing in an irreducible $q$-integer linear polynomial,
say the polynomial $p$ of the form \eqref{EQ:qild}, must be a scalar
multiple of the $q$-integer linear type $(\lambda_1,\dots,\lambda_n)$.

In order to do a theoretical comparison we have analyzed the
worst-case running time complexity of our both approaches, as well as
that of the other two algorithms, in the case of polynomials over
$\set Z[q,q^{-1}]$. The analysis shows that the second approach is
superior to the first one when the given polynomial has more than two
variables. When restricted to the case of bivariate polynomials over
$\set Z[q,q^{-1}]$, the two approaches merge into one, which in turn
is considerably faster than the algorithm of Le and the algorithm
based on factorization. In addition, we also give experimental results
which verify our complexity comparisons.

The remainder of the paper proceeds as follows. Background and basic
notions required in the paper are provided in the next section.  Our
two new approaches for computing $q$-integer linear decompositions of
multivariate polynomials are given successively in
Sections~\ref{SEC:1stapproach} and \ref{SEC:2ndapproach}. The
following section provides a complexity comparison of our two
algorithms, the algorithm of Le and the factorization-based algorithm.
The paper ends with an experimental comparison among all algorithms,
along with a conclusion section.

\section{Preliminaries: polynomials and Newton polytopes}
\label{SEC:prelim}
Throughout the paper, we let $\D$ be a unique factorization domain
(UFD) of characteristic zero with $\R = \D[q,q^{-1}]$ denoting the
transcendental ring extension by an indeterminate~$q$. Note that a
domain of characteristic zero always contains the ring of integers
$\set Z$ as a subdomain. Let $\R[x_1,\dots,x_n]$ be the ring of
polynomials in $x_1,\dots,x_n$ over $\R$, where $x_1,\dots,x_n$ are
variables distinct from~$q$.  We reserve the variables $x$ and $y$ as
synonyms for $x_1$ and $x_2$, respectively, so as to avoid subscripts
in the case when $n\leq 2$.

Let $p$ be a polynomial in~$\R[x_1,\dots,x_n]$. Throughout this paper
we will order monomials in $\R[x_1,\dots,x_n]$ using a pure
lexicographic order in $x_1\prec\dots\prec x_n$. For this order we let
$\lc(p)$ and $\deg(p)$ denote the leading coefficient and the total
degree, respectively, of $p$ with respect to~$x_1,\dots,x_n$. We
follow the convention that $\deg(0)=-\infty$. We say that $p$ is
{\em monic} (over $\R$) if $\lc(p)=1$. The {\em content} of $p$ (over
$\R$), denoted by $\cont(p)$, is the greatest common divisor (GCD)
over $\R$ of the coefficients of $p$ with respect to $x_1,\dots,x_n$
with $p$ being {\em primitive} if $\cont(p)=1$. The {\em primitive part}
$\prim(p)$ of $p$ (over $\R$) is defined as $p/\cont(p)$. For brevity,
we will omit the domain if it is clear from the context. In certain
instances, we also need to consider the above notions with respect to
a subset of the $n$ variables. In these cases, we will either specify
the relevant domain or indicate the related variables as subscripts of
the corresponding notion. For example, $\lc_{x_1,x_2}(p)$,
$\deg_{x_1,x_2}(p)$, $\cont_{x_1,x_2}(p)$ and $\prim_{x_1,x_2}(p)$
denote each function but applied to a polynomial $p$ viewing it as a
polynomial in $x_1,x_2$ over the domain $\R[x_3,\dots,x_n]$.

In order to obtain a canonical representation, we introduce the notion
of $q$-primitive polynomials in the univariate case. A polynomial
$p\in \R[y]$ is called {\em $q$-primitive} if it is primitive over
$\R$ and its constant term $p(0)$ is nonzero.  Note that this concept
is a ring counterpart of $q$-monic polynomials introduced by
\cite{PaRi1997}.  Clearly, any factor of a $q$-primitive polynomial in
$\R[y]$ is again $q$-primitive.

The Newton polytope of multivariate polynomials plays a crucial role 
in our algorithms. In what follows, we recall some terminology and 
results on convex polytopes from a polynomial point of view. For a
more general theory, one is referred to, for example, \citep{Grun2003}.

In order to simplify notations, we employ bold letters, say $\bm i$,
for a column vector $(i_1,\dots,i_n)^T$ in the Euclidean space $\set R^n$,
and the {\em multi-index convention} $\bx{i}$ for the monomial
$x_1^{i_1}\cdots\, x_n^{i_n}$ if $\bm i \in \set Z^n$. The zero vector
in $\set R^n$ is denoted by boldface~$\bm 0$. Taking advantage of this
boldface notation, we later write $\R[\bm x]$ and $\R[\bm x,\bm x^{-1}]$
for the polynomial ring $\R[x_1,\dots,x_n]$ and the Laurent polynomial
ring $\R[x_1,x_1^{-1},\dots,x_n,x_n^{-1}]$, respectively.

Let $p\in \R[\bm x]$ be a polynomial of the form $\sum_{\bm i}a_{\bm i}
\bx{i}$ with $a_{\bm i} \in \R$, having finitely many nonzero terms.
The {\em support} of $p$, denoted by $\supp(p)$, is defined as the set
of indices $\bm i\in \set N^n$ with the property that the corresponding
coefficient $a_{\bm i}$ is nonzero. Clearly, $\supp(p)$ is a finite
set in $\set N^n$, and it is empty if and only if $p=0$. An exponent
vector $\bm i$ of $p$ can be considered as a point in~$\set R^n$. The
convex hull of the set $\supp(p)$ in $\set R^n$ is then known as the
{\em Newton polytope} of $p$, denoted by~$\newt(p)$. By convention,
$\newt(0)$ is the empty set.

For two sets $A$ and $B$ in $\set R^n$, their {\em Minkowski sum} is 
defined as the set 
\[
A + B = \{\bm a + \bm b\mid \bm a \in A, \bm b\in B\}.
\]
The following well-known result, due to \cite{Ostr1921,Ostr1975},
reveals the relation between the Newton polytope of a polynomial and
those of its factors.
\begin{lemma}[{\citep{Ostr1921,Ostr1975}}]\label{LEM:Msum}
  Let $f,g \in \R[\bm x]$. Then $\newt(fg) = \newt(f) + \newt(g)$.
\end{lemma}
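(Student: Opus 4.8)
The plan is to establish the two inclusions $\newt(fg)\subseteq\newt(f)+\newt(g)$ and $\newt(f)+\newt(g)\subseteq\newt(fg)$ separately, after disposing of the trivial case in which $f$ or $g$ is zero (then $fg=0$ and both sides equal the empty set). For the first inclusion I would write $f=\sum_{\bm i}a_{\bm i}\bx i$ and $g=\sum_{\bm j}b_{\bm j}\bx j$ and observe that every exponent vector occurring in $fg$ is a sum $\bm i+\bm j$ with $\bm i\in\supp(f)$ and $\bm j\in\supp(g)$; hence $\supp(fg)\subseteq\supp(f)+\supp(g)\subseteq\newt(f)+\newt(g)$, and since the Minkowski sum of two convex sets is convex, it contains the convex hull $\newt(fg)$ of $\supp(fg)$.

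The reverse inclusion is where the real work lies. Since $\newt(fg)$ is convex, it suffices to show that every vertex $\bm v$ of the polytope $\newt(f)+\newt(g)$ belongs to $\newt(fg)$ — in fact to $\supp(fg)$. I would pick a linear functional $\ell\colon\set R^n\to\set R$ that attains its maximum over $\newt(f)+\newt(g)$ uniquely at $\bm v$, and then invoke the standard fact that the face of a Minkowski sum $A+B$ on which $\ell$ is maximized is the sum of the faces of $A$ and of $B$ on which $\ell$ is maximized. This yields a unique $\bm v_f\in\supp(f)$ and a unique $\bm v_g\in\supp(g)$ maximizing $\ell$ over $\newt(f)$ and over $\newt(g)$ respectively, with $\bm v=\bm v_f+\bm v_g$.

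Finally I would compute the coefficient of $\bx v$ in the product $fg$: it is the sum of $a_{\bm i}b_{\bm j}$ over all pairs $(\bm i,\bm j)\in\supp(f)\times\supp(g)$ with $\bm i+\bm j=\bm v$. For any such pair, $\ell(\bm i)\le\ell(\bm v_f)$ and $\ell(\bm j)\le\ell(\bm v_g)$ while $\ell(\bm i)+\ell(\bm j)=\ell(\bm v)=\ell(\bm v_f)+\ell(\bm v_g)$, forcing $\bm i=\bm v_f$ and $\bm j=\bm v_g$ by uniqueness; hence that coefficient equals $a_{\bm v_f}b_{\bm v_g}$, which is nonzero because $\R=\D[q,q^{-1}]$ is an integral domain (as $\D$ is a UFD). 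Thus $\bm v\in\supp(fg)\subseteq\newt(fg)$, and passing to convex hulls completes the argument. I expect the only genuine obstacle to be this reverse inclusion, and within it the polytope-combinatorial input that a vertex of a Minkowski sum decomposes uniquely as a sum of vertices of the summands; the hypothesis that $\D$ (hence $\R$) is a domain is then exactly what rules out cancellation of the leading product term.
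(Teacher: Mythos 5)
Your argument is correct. Note, however, that the paper offers no proof of this lemma at all: it is stated with a citation to Ostrowski, so there is nothing internal to compare against. What you give is essentially the classical proof, and it is sound: the easy inclusion $\supp(fg)\subseteq\supp(f)+\supp(g)$ plus convexity of the Minkowski sum, and for the reverse inclusion the decomposition of each vertex of $\newt(f)+\newt(g)$ as a unique sum of a vertex of $\newt(f)$ and a vertex of $\newt(g)$ — which is exactly the face-of-a-Minkowski-sum fact the paper itself later imports from Gr\"unbaum to prove Lemma~\ref{LEM:facesum} — followed by the observation that the corresponding coefficient of $fg$ is the single product $a_{\bm v_f}b_{\bm v_g}$, nonzero because $\R=\D[q,q^{-1}]$ is an integral domain. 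Your uniqueness step is justified precisely because the face of $\newt(f)$ (resp.\ $\newt(g)$) determined by the chosen functional is forced to be a singleton, so any pair $(\bm i,\bm j)$ attaining the maxima must equal $(\bm v_f,\bm v_g)$; together with the remark that a vertex of the convex hull of a finite set lies in that set, this makes the proof complete and self-contained, and it correctly isolates the only algebraic hypothesis actually needed (that $\R$ is a domain).
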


It proves convenient to extend the notion of Newton polytopes to
Laurent polynomials in the ring $\R[\bm x,\bm x^{-1}]$. Notice that
any Laurent polynomial from $\R[\bm x,\bm x^{-1}]$ can be written as
the form $\bx{\alpha} p$ for some $\bm \alpha\in \set Z^n$ and $p\in
\R[\bm x]$. Thus the {\em Newton polytope} of the given Laurent
polynomial is defined to be the translation $\newt(p) + \bm \alpha$ of
$\newt(p)$ by $\bm \alpha$. Evidently, Lemma~\ref{LEM:Msum} literally
carries over to Laurent polynomials.
\begin{lemma}\label{LEM:LMsum}
  Let $f,g \in \R[\bm x,\bm x^{-1}]$. Then $\newt(fg) = \newt(f) +
  \newt(g)$.
\end{lemma}
We will consider faces of Newton polytopes. Let $C$ be a Newton
polytope of a certain Laurent polynomial over $\R$. A hyperplane
$H = \{\bm x\in \set R^n\mid \bm a^T \bm x = b\}$ with $\bm a\in
\set R^n \setminus\{\bm 0\}$ and $b\in \set R$ is called a
{\em supporting hyperplane} of $C$ with {\em outward normal} $\bm a$
if $H\cap C \neq \emptyset$ and $\bm a^T \bm x \leq b$ for all $\bm x
\in C$. We call the intersection $H\cap C$ a {\em face} of~$C$. By 
convention, $\emptyset$ and $C$ are called {\em improper faces} of~$C$.
The faces of dimension zero and one are also called {\em vertices} and
{\em edges}, respectively. Note that for any nonzero vector $\bm a \in
\set R^n$, there exists a unique supporting hyperplane of $C$ with
outward normal $\bm a$ (cf.\ \cite[Theorem~8, Page 15]{Grun2003}). We
then refer to the intersection of this supporting hyperplane and $C$
as the face of $C$ determined by the outward normal $\bm a$.
\begin{lemma}\label{LEM:facesum}
  Let $f,g \in R[\bm x,\bm x^{-1}]$ and $\bm a\in\set R^n\setminus\{0\}$.
  Then $F_{fg,\bm a} = F_{f,\bm a} + F_{g,\bm a}$, where $F_{f,\bm a}$ is
  the face of $\newt(f)$ determined by the outward normal $\bm a$.
\end{lemma}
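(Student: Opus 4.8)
The plan is to deduce the statement about faces from the statement about whole Newton polytopes (Lemma~\ref{LEM:LMsum}) together with the elementary behaviour of linear functionals on Minkowski sums. Fix the outward normal $\bm a \in \set R^n\setminus\{\bm 0\}$. For any nonempty compact convex set $C \subseteq \set R^n$ write $b_C = \max_{\bm x\in C}\bm a^T\bm x$, so that the face of $C$ determined by $\bm a$ is $F_{C,\bm a}=\{\bm x\in C\mid \bm a^T\bm x = b_C\}$; in this notation $F_{f,\bm a}=F_{\newt(f),\bm a}$. First I would record the two facts I need about the functional $\bm x\mapsto \bm a^T\bm x$ on a Minkowski sum $A+B$ of nonempty compact convex sets: (i) $\max_{\bm z\in A+B}\bm a^T\bm z = b_A + b_B$, since $\bm a^T(\bm u+\bm v)=\bm a^T\bm u+\bm a^T\bm v$ and the sum of maxima is attained by choosing maximizers in each summand; and (ii) a point $\bm u+\bm v\in A+B$ (with $\bm u\in A$, $\bm v\in B$) attains this maximum if and only if $\bm a^T\bm u=b_A$ \emph{and} $\bm a^T\bm v=b_B$ — the forward direction because $\bm a^T\bm u\le b_A$ and $\bm a^T\bm v\le b_B$ force equality in both when their sum is $b_A+b_B$, the backward direction being immediate. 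Fact (ii) is exactly the statement $F_{A+B,\bm a}=F_{A,\bm a}+F_{B,\bm a}$.

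With these in hand the proof is short. If either $f$ or $g$ is zero then both sides of the claimed identity are empty (the Newton polytope of $0$ is empty, as is any face of it, and a Minkowski sum with the empty set is empty), so the identity holds trivially; hence assume $f,g\neq 0$, so that $\newt(f)$ and $\newt(g)$ are nonempty compact convex polytopes. By Lemma~\ref{LEM:LMsum}, $\newt(fg)=\newt(f)+\newt(g)$. Applying fact (i) with $A=\newt(f)$, $B=\newt(g)$ shows that the supporting value of $\newt(fg)$ in direction $\bm a$ equals the sum of those of $\newt(f)$ and $\newt(g)$, and then fact (ii) gives $F_{fg,\bm a}=F_{\newt(fg),\bm a}=F_{\newt(f),\bm a}+F_{\newt(g),\bm a}=F_{f,\bm a}+F_{g,\bm a}$, as desired.

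There is no real obstacle here; the only point requiring a little care is the bookkeeping at the boundary — making sure the degenerate cases ($f=0$ or $g=0$, and the fact that the maximum defining a face is genuinely attained, which holds because Newton polytopes of nonzero polynomials are nonempty and compact) are handled so that the Minkowski-sum identity is not being applied to an empty set in a misleading way. One could alternatively phrase the whole argument purely in terms of supporting hyperplanes as defined just before the lemma, but the functional-maximization formulation above is the cleanest, since it reduces everything to the one-line linearity identity $\bm a^T(\bm u+\bm v)=\bm a^T\bm u+\bm a^T\bm v$.
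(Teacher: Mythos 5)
Your proof is correct and follows essentially the same route as the paper: reduce to $\newt(fg)=\newt(f)+\newt(g)$ via Lemma~\ref{LEM:LMsum} and then use the fact that the face of a Minkowski sum determined by an outward normal $\bm a$ is the Minkowski sum of the corresponding faces. The only difference is that the paper cites this convex-geometry fact from Gr\"unbaum (Theorem~1, Page~317), whereas you supply its short proof directly by maximizing the linear functional $\bm x\mapsto \bm a^T\bm x$ over the summands, which is a perfectly valid (and self-contained) substitute.
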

\begin{proof}
  By Lemma~\ref{LEM:LMsum}, $\newt(f) = \newt(f) + \newt(g)$. The
  assertion is then a direct result of \cite[Theorem~1, Page 317]{Grun2003}.
\end{proof}

\section{$q$-Integer linear decomposition: the first approach}
\label{SEC:1stapproach}

We are interested in finding the following decomposition of a
polynomial, something briefly alluded to in the introduction.
\begin{definition}\label{DEF:qild}
  Let $p\in \R[\bm x]$ be a polynomial admitting the decomposition
  \begin{equation}\label{EQ:multiqild}
    p = c\, \bx{\alpha} P_0 \prod_{i=1}^m P_i(\bx{\lambda_i}),
  \end{equation}
  where $c\in \R$, $m\in \set N$, $\bm{\alpha}\in \set N^n$,
  $\bm{\lambda}_i\in \set Z^n\setminus\{\bm 0\}$, $P_0\in\R[\bm x]$
  and $P_i\in \R[y]$. Then \eqref{EQ:multiqild} is called the
  {\em $q$-integer linear decomposition} of $p$ (over $\R$) if
  \begin{itemize}
  \item[(1)] $P_0$ is primitive and none of its irreducible factors of
    positive total degree is $q$-integer linear;
  \item[(2)] each $P_i$ is $q$-primitive and of positive degree;
  \item[(3)] each $\bm{\lambda}_i$ satisfies the conditions that
    $\gcd(\lambda_{i1},\dots,\lambda_{in})=1$ and its rightmost
    nonzero coordinate is positive
    \footnote{As mentioned in the introduction, the positivity of the
    rightmost nonzero coordinate of $\bm \lambda_i$ required here can
    be easily obtained and is used to make such a vector unique.};
  \item[(4)] the $\bm{\lambda}_i$ are pairwise distinct.
  \end{itemize}
  We call each $\bm{\lambda}_i$ a {\em $q$-integer linear type} of $p$
  and $P_i$ its {\em corresponding univariate polynomial}.
\end{definition}

Evidently, $p$ is $q$-integer linear if and only if $P_0$ is a unit of
$\R$ in \eqref{EQ:multiqild}. By full factorization, we see that every
polynomial admits a $q$-integer linear decomposition.  Moreover, this
decomposition is unique up to the order of factors and multiplication
by units of $\R$, according to the uniqueness of full factorization
and that of the $q$-integer linear type of an irreducible polynomial.

Let $p\in \R[\bm x]$ be a polynomial of positive total degree. Without
loss of generality, we assume that $p$ is primitive with respect to
any variable from $\{x_1,\dots,x_n\}$. Otherwise, we may replace $p$
by the remaining part after iteratively removing from $p$ its content
with respect to $x_i$ for all $i = 1,\dots,n$. Note that all these
removed contents are polynomials over $\R$ having at most $(n-1)$
variables and hence can be dealt with recursively, knowing that
univariate polynomials are all $q$-integer linear. With this set-up,
$p$ admits the $q$-integer linear decomposition of the form
\eqref{EQ:multiqild}, in which $c=1$, $\alpha_n=0$ and none of the
types $\bm\lambda_i$ has zero coordinates. In order to compute such a
decomposition, we mimic the strategy of \cite{AbLe2002} in the
ordinary shift case, that is, we first find all possible candidates
for $q$-integer linear types and then extract the corresponding
univariate polynomial for each type.

\subsection{Candidates for $q$-integer linear types}
Observe that all $q$-integer linear types $\bm \lambda_i$ in
\eqref{EQ:multiqild} appear as exponent vectors, and the Newton
polytope of each $P_i(\bx{\lambda_i})$ is just a line segment. This
leads us to investigate edges of the Newton polytope of the given
polynomial.

For this purpose, we assign a direction to each line segment in~$\set
R^n$. Let $\bm u,\bm v\in\set R^n$ with $\bm u \neq \bm v$ and let
$[\bm u,\bm v] = \{t\bm u + (1-t)\bm v\mid t\in\set R, 0\leq t\leq
1\}$ denote the line segment connecting~$\bm u,\bm v$. A nonzero
vector $\bm\lambda \in \set R^n$ is called the {\em direction vector}
of $[\bm u,\bm v]$ if $\bm u - \bm v = t\bm\lambda$ for some $t\in\set
R$, $\gcd(\lambda_1,\dots,\lambda_n) = 1$ and the rightmost nonzero
coordinate of $\bm \lambda$ is positive. As before, the requirement on
the positivity of the last nonzero coordinate guarantees
the uniqueness of such a direction vector. Clearly, two parallel
(nondegenerate) line segments share the same direction vector, and
vice versa.
\begin{lemma}\label{LEM:newt}
  Let $p\in \R[\bm x]\setminus\R$ with $\cont_{x_1}(p)=\cdots=
  \cont_{x_n}(p)=1$, and assume that it admits the $q$-integer linear
  decomposition \eqref{EQ:multiqild}. Then for any $i\in \set N$ with
  $1\leq i\leq m$, the Newton polytope of $p$ possesses an edge of the
  direction vector $\bm \lambda_i$. Moreover, if $\newt(p)$ is not a
  line segment then there are at least two such edges.
\end{lemma}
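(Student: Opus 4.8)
The plan is to exploit the multiplicativity of Newton polytopes under products (Lemma~\ref{LEM:LMsum}) together with the structure of the faces under the Minkowski sum (Lemma~\ref{LEM:facesum}). Fix an index $i$ with $1\le i\le m$. The factor $P_i(\bx{\lambda_i})$ is, by condition (2), a $q$-primitive polynomial in $y$ of positive degree, say $\deg_y P_i = d_i\ge 1$; writing $P_i(y) = \sum_{j=0}^{d_i} c_j y^j$ with $c_0,c_{d_i}\neq 0$ (the former because $P_i$ is $q$-primitive, the latter by definition of degree), we see that the support of $P_i(\bx{\lambda_i})$ consists of the points $j\,\bm\lambda_i$ for those $j$ with $c_j\neq 0$, all of which are collinear and lie on the line through $\bm 0$ with direction $\bm\lambda_i$. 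Hence $\newt(P_i(\bx{\lambda_i}))$ is exactly the line segment $[\bm 0,\, d_i\bm\lambda_i]$, a nondegenerate segment with direction vector $\bm\lambda_i$ (recall $\bm\lambda_i$ is already normalized so that $\gcd$ of its coordinates is $1$ and its rightmost nonzero coordinate is positive, matching the definition of direction vector).

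Next I would locate an edge of $\newt(p)$ parallel to this segment by choosing a suitable outward normal. The segment $S_i := [\bm 0, d_i\bm\lambda_i]$, being one-dimensional, has a supporting hyperplane $H$ (with some outward normal $\bm a\in\set R^n\setminus\{\bm 0\}$) such that $H\cap S_i = S_i$ itself; concretely, pick any $\bm a$ orthogonal to $\bm\lambda_i$ but not orthogonal to the affine span of the rest of $\newt(p)$ --- more carefully, among all $\bm a$ perpendicular to $\bm\lambda_i$ one can choose one whose induced face $F_{p,\bm a}$ of $\newt(p)$ is as small-dimensional as possible, and a dimension count will force that face to contain a translate of $S_i$ as a one-dimensional piece. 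By Lemma~\ref{LEM:facesum} applied to the factorization $p = c\,\bx{\alpha}P_0\prod_i P_i(\bx{\lambda_i})$, the face $F_{p,\bm a}$ is the Minkowski sum of the corresponding faces of all factors; the factor $\bx{\alpha}$ and the constant $c$ contribute only a translation, while $F_{P_i(\bx{\lambda_i}),\bm a} = S_i$ (for the $\bm a$ we chose, which annihilates $\bm\lambda_i$, so $\bm a^T\bm x$ is constant on $S_i$). Since a Minkowski sum $A+B$ contains a translate of $A$ (fix any point of $B$), the face $F_{p,\bm a}$ contains a translate of the segment $S_i$, hence has an edge --- or a subedge --- with direction vector $\bm\lambda_i$; refining $\bm a$ within the normal cone of that edge of $\newt(p)$ yields a genuine edge of $\newt(p)$ with direction $\bm\lambda_i$. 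The hypothesis $\cont_{x_j}(p)=1$ for all $j$ is what guarantees the factor $\bx{\alpha}$ here is a genuine monomial translation and that no $\bm\lambda_i$ has a zero coordinate, so nothing degenerates.

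For the "moreover" part, suppose $\newt(p)$ is not a line segment, so $\dim\newt(p)\ge 2$. The edge $E$ found above with direction $\bm\lambda_i$ lies on the boundary of $\newt(p)$; traversing the boundary of a polytope of dimension $\ge 2$, one returns to the start, so the total displacement around the boundary is zero. If $E$ were the only edge with direction $\bm\lambda_i$, the boundary would have to "turn around" without a second parallel edge to cancel the net displacement along direction $\bm\lambda_i$ --- but on a convex polytope the edges come in antipodal support-wise pairs in each direction: more precisely, the faces determined by the outward normals $\bm a$ and $-\bm a$ are distinct parallel faces, and the one determined by $-\bm a$ also contains (by the same Minkowski-sum argument with $-\bm a$, which again annihilates $\bm\lambda_i$) a translate of $S_i$, hence a second edge of $\newt(p)$ parallel to $\bm\lambda_i$; these two are distinct because $\newt(p)$, having dimension $\ge 2$, cannot have a supporting hyperplane equal to its negative counterpart through the same segment unless the polytope collapses to that segment.

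The main obstacle I anticipate is the careful selection of the outward normal $\bm a$ and the verification that the Minkowski-sum face $F_{p,\bm a}$ genuinely yields an \emph{edge} (a one-dimensional face) of $\newt(p)$ with the exact direction $\bm\lambda_i$, rather than a higher-dimensional face merely \emph{containing} a segment in that direction; handling this cleanly requires either an inductive refinement of $\bm a$ within normal cones or an appeal to the face lattice structure, and it is the one place where one must argue rather than just cite Lemmas~\ref{LEM:LMsum}--\ref{LEM:facesum}. The second delicate point is ensuring the two edges in the "moreover" part are actually distinct edges of $\newt(p)$ and not the same edge counted twice, which is exactly where the hypothesis that $\newt(p)$ is not itself a segment gets used.
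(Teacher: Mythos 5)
Your setup is sound and matches the paper's strategy: you correctly identify $\newt(P_i(\bx{\lambda_i}))$ as the nondegenerate segment $[\bm 0, d_i\bm\lambda_i]$ (using $P_i(0)\neq 0$), and you correctly plan to find a normal $\bm a$ with $\bm a^T\bm\lambda_i=0$ for which the face of $\newt(p)$ is an edge in direction $\bm\lambda_i$, invoking Lemma~\ref{LEM:facesum}. But the proposal has a genuine gap at exactly the step you flag as the ``main obstacle,'' and that step is the entire content of the lemma. Your ``dimension count'' only shows that $F_{p,\bm a}$ \emph{contains} a translate of $S_i$ --- which is automatic for \emph{every} $\bm a\perp\bm\lambda_i$, since $F_{S_i,\bm a}=S_i$ and a Minkowski sum contains translates of its summands --- and says nothing about why the minimal-dimensional such face is one-dimensional. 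The subsequent phrase ``refining $\bm a$ within the normal cone of that edge of $\newt(p)$'' presupposes the very edge whose existence is in question; if the minimal face had dimension $\geq 2$, there would be no such edge to refine toward. One can in fact repair this (e.g., if $\dim F_{p,\bm a}\geq 2$, pick $\bm b\perp\bm\lambda_i$ nonconstant on $F_{p,\bm a}$ and pass to $\bm a+\epsilon\bm b$ for small $\epsilon$, strictly reducing the dimension of the face while staying in $\bm\lambda_i^{\perp}$), but that argument is not in your proposal. The paper closes the gap differently and more concretely: after normalizing $\bm\lambda_m=\bm e_n$, it projects $\newt(p^*)$ (where $p=p^*P_m(\bx{\lambda_m})$) onto $\{x_n=0\}$, takes a vertex $\tilde{\bm v}$ of the projection with supporting normal $\bm a$ satisfying $a_n=0$, and proves that the fiber over $\tilde{\bm v}$ is exactly the face of $\newt(p^*)$ determined by $\bm a$ --- a point or a segment parallel to $\bm\lambda_m$ --- so that Lemma~\ref{LEM:facesum} yields a genuine edge of $\newt(p)$.

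The ``moreover'' part inherits the same gap: from the fact that $F_{p,-\bm a}$ contains a translate of $S_i$ you again conclude it yields ``a second edge,'' but that face may well have dimension $\geq 2$, and an edge of $\newt(p)$ in direction $\bm\lambda_i$ inside it must again be produced by the missing refinement (or, as in the paper, by taking a \emph{second} vertex of the projection of $\newt(p^*)$, which exists precisely because $\newt(p)$ is not a segment, and which gives an edge distinct from the first since the two fibers project to distinct points). Your distinctness observation for the antipodal faces is essentially fine once the edges exist, but as written the existence of the second edge is not established.
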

\begin{proof}
  There is nothing to show when $m = 0$, so assume that $m > 0$. We
  merely show the assertions for $i = m$, and then the lemma follows 
  by symmetry.

  Let $p^*=\bx{\alpha} P_0\prod_{i=1}^{m-1}P_i(\bx{\lambda_i})$. Then
  $p^*\in \R[\bm x]\setminus\{0\}$, and by \eqref{EQ:multiqild},
  \begin{equation}\label{EQ:pform}
    p=p^* P_m(\bx{\lambda_m}).
  \end{equation}
  Notice that $\newt(P_m(\bx{\lambda_m}))$ is a line segment in $\set
  R^n$ with direction vector $\bm\lambda_m$. Then for any nonzero
  vector $\bm a\in\set R^n$ with $\bm a^T \bm \lambda_m = 0$, the
  supporting hyperplane of $\newt(P_m(\bx{\lambda_m}))$ determined by
  the outward normal $\bm a$ contains the whole polytope. This means
  that $\newt(P_m(\bx{\lambda_m}))$ itself is the (improper) edge
  determined by such an outward normal.

  In order to show the first assertion, it then amounts to finding a
  nonzero vector $\bm a\in\set R^n$ with $\bm a^T \bm \lambda_m = 0$
  such that the face of $\newt(p^*)$ determined by the outward normal
  $\bm a$ is either a vertex or an edge parallel to
  $\newt(P_m(\bx{\lambda_m}))$.  The rest then follows by
  \eqref{EQ:pform}, Lemma~\ref{LEM:facesum} and the observation that
  the Minkowski sum of a line with a point or another parallel line is
  again a line parallel to the original line.

  By an affine coordinate transformation if necessary, we may assume
  without loss of generality that $\bm \lambda_m$ is equal to the
  $n$-th unit vector $\bm e_n = (0,\dots,0,1)^T\in\set R^n$. Then
  $\newt(P_m(\bx{\lambda_m}))$ is contained by the $x_n$-axis.  We now
  consider the projection of $\newt(p^*)$ onto the hyperplane $\{\bm
  x\in\set R^n\mid x_n = 0\}$ in the direction of $\bm\lambda_m=\bm e_n$,
  that is,
  \[
  \proj_n(p^*) = \{\bm x \in \set R^n\mid x_n = 0 \ \text{and}\
  \bm x + t\bm e_n \in \newt(p^*)\ \text{for some}\ t\in\set R\}.
  \]
  This is again a Newton polytope by \cite[Theorem~8, Page 74]{Grun2003}.
  Since $p^*$ is nonzero, $\newt(p^*)$ is nonempty, and so is $\proj_n(p^*)$.
  Let $\tilde{\bm v}$ be a vertex of $\proj_n(p^*)$. Then by definition,
  there exists a hyperplane $H$ of the form
  $H = \{\bm x\in\set R^n\mid \bm a^T\bm x = b\}$ for
  $\bm a\in\set R^n\setminus\{\bm 0\}$ with $a_n = 0$ and $b\in\set R$
  such that $H\cap\proj_n(p^*) = \{\tilde{\bm v}\}$ and $\bm a^T\bm x\leq b$
  for all $\bm x\in \proj_n(p^*)$. Since $\tilde{\bm v}\in\proj_n(p^*)$,
  there exists a number $t\in\set R$ such that
  $\tilde{\bm v}+t\bm e_n\in\newt(p^*)$. Among these numbers, let
  $t_1,t_2\in\set R$ be the minimum and maximum ones, respectively.
  Note that $t_1,t_2$ are not necessarily distinct. Let $\bm u =
  \tilde{\bm v}+t_1\bm e_n$ and $\bm v = \tilde{\bm v}+t_2\bm e_n$.
  Then the line segment $[\bm u,\bm v]$, possibly being a point when
  $t_1 = t_2$, is parallel to the $x_n$-axis and contained in $\newt(p^*)$
  by convexity.
  
  Evidently, $\bm a^T\bm \lambda_m = \bm a^T \bm e_n = 0$. We claim
  that $[\bm u,\bm v]$ is the face of $\newt(p^*)$ determined by the
  outward normal $\bm a$, which will complete the proof of the first
  assertion. In other words, we aim to prove that
  \[
  H\cap\newt(p^*) = [\bm u, \bm v] \quad\text{and}\quad
  \bm a^T\bm x \leq b \ \text{for all}\ \bm x\in\newt(p^*).
  \]
  Let $\bm x\in\newt(p^*)$ and $\tilde{\bm x} = (x_1,\dots,x_{n-1},0)$.
  Then $\bm a^T \bm x = \bm a^T\tilde{\bm x}\leq b$ as $a_n = 0$ and
  $\tilde{\bm x}\in\proj_n(p^*)$. To see the inclusion $H\cap\newt(p^*)
  \subset [\bm u, \bm v]$, we further assume that $\bm x \in H\cap\newt(p^*)$. 
  Thus $\tilde{\bm x} \in H\cap\proj_n(p^*) = \{\tilde{\bm v}\}$. This means
  that $\tilde{\bm x} = \tilde{\bm v}$. By the minimality of $t_1$ and 
  maximality of $t_2$, we know that $\bm x\in [\bm u, \bm v]$. 
  The opposite direction $H\cap\newt(p^*)\supset [\bm u, \bm v]$ is 
  clear from definition.
  
  Moreover, assume that $\newt(p)$ is not a line segment. Then
  $\newt(p^*)$ cannot be a point or a line segment parallel to
  $\newt(P_m(\bx{\lambda_m}))$ by \eqref{EQ:pform} and
  Lemma~\ref{LEM:LMsum}. This implies that $\proj_n(p^*)$ has at
  least two different vertices. Taking another vertex of
  $\proj_n(p^*)$ distinct from $\tilde{\bm v}$ and arguing along
  similar lines as above yields another edge of $\newt(p)$ which has
  the direction vector $\bm \lambda_m$. The lemma therefore follows.
\end{proof}

From the above lemma, one sees that the direction vectors of edges of
$\newt(p)$ exhaust all possible choices of $q$-integer linear types.
When $\newt(p)$ is not a line segment, one can restrict attention to
those vectors with multiple occurrences. Note that in our application, 
the Newton polytope of a given polynomial will be described by the set 
of its edges. Such a set can be easily deduced from the
face lattice or the vertex-facet incidence matrix of the given Newton
polytope, for which algorithms from computational geometry are well
developed; see \cite[Chapter~26]{GOT2018} and the references therein.

Given a set of points with cardinality $s\in\set N$, it is known that
the number of edges of the convex hull of this set is bounded
by~${s\choose 2}$ (cf.\ \citep[Theorem~2, Page 194]{Grun2003}). Thus
Lemma~\ref{LEM:newt} might offer us a superset of $q$-integer linear
types of cardinality $\bigO(s^2)$ in the worst case. The following
lemma, however, helps us bring it down to $\bigO(s)$.
\begin{lemma}\label{LEM:multiqiltype}
  With the assumptions of Lemma~\ref{LEM:newt}, for any $i\in\set N$
  with $1\leq i\leq m$ and for any $\bm j\in \supp(p)$, there exists
  another vector $\tilde{\bm j} \in\supp(p)$ such that the line
  segment $[\bm j, \tilde{\bm j}]$ has the direction vector
  $\bm\lambda_i$, or equivalently, $\bm j-\tilde{\bm j} = k\bm
  \lambda_i$ for some nonzero integer $k$.
\end{lemma}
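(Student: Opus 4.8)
The plan is to argue directly from the factorization $p = p^* \cdot P_i(\bx{\lambda_i})$, where $p^* = c\,\bx{\alpha}P_0\prod_{\ell\neq i}P_\ell(\bx{\lambda_\ell})$, exactly as in the proof of Lemma~\ref{LEM:newt}. Fix $i$ with $1\le i\le m$ and fix $\bm j\in\supp(p)$. The key observation is that because $\newt(P_i(\bx{\lambda_i}))$ is a nondegenerate line segment with direction vector $\bm\lambda_i$ (recall $P_i$ has positive degree), its support contains at least two lattice points, say the two endpoints $\bm 0$ and $d\bm\lambda_i$ where $d = \deg(P_i)\ge 1$. By Lemma~\ref{LEM:Msum} (or rather the multiplicativity of supports under products of polynomials — each monomial of $p$ is a product of a monomial of $p^*$ and a monomial of $P_i(\bx{\lambda_i})$), every exponent vector in $\supp(p)$ is of the form $\bm a + k\bm\lambda_i$ with $\bm a\in\supp(p^*)$ (after absorbing $\bx\alpha$) and $0\le k\le d$.

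Next I would make this precise by choosing, for the given $\bm j\in\supp(p)$, a decomposition $\bm j = \bm a + k\bm\lambda_i$ with $\bm a\in\supp(\bx\alpha P_0\prod_{\ell\neq i}P_\ell(\bx{\lambda_\ell}))$ and $k$ an exponent occurring in $P_i$. Among all monomials of $P_i(\bx{\lambda_i})$ multiplying the \emph{same} monomial $\bx a$ of $p^*$, pick the one with the largest exponent $k_{\max}$ and the one with the smallest exponent $k_{\min}$; these are distinct since $P_i$ has at least two terms only if we are careful — the subtlety is that cancellation in the product $p^*\cdot P_i(\bx{\lambda_i})$ could in principle kill one of the two resulting monomials of $p$. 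To handle this cleanly, I would instead invoke Lemma~\ref{LEM:facesum}: choose the outward normal $\bm a$ so that $F_{P_i(\bx{\lambda_i}),\bm a}$ is a single vertex of the segment $\newt(P_i(\bx{\lambda_i}))$, while $F_{p^*,\bm a}$ is whichever face of $\newt(p^*)$ this $\bm a$ selects. Then $F_{p,\bm a} = F_{p^*,\bm a} + F_{P_i(\bx{\lambda_i}),\bm a}$, and no cancellation occurs on faces — a standard fact that the leading (face) terms of a product are the product of the leading (face) terms, since monomials on a face of the Minkowski sum decompose uniquely.

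The cleanest route, which I would actually write up: given $\bm j\in\supp(p)$, let $H=\{\bm x : \bm a^T\bm x = b\}$ be a supporting hyperplane of $\newt(p)$ with $\bm j\in H\cap\newt(p)$ and with outward normal $\bm a$ chosen so that additionally $\bm a^T\bm\lambda_i \ne 0$ — this is possible because we may pick $\bm a$ achieving the max of $\bm a^T\bm x$ over $\newt(p)$ among normals not orthogonal to $\bm\lambda_i$; if $\bm j$ is a vertex of $\newt(p)$ we have a cone of valid normals to choose from, and if $\bm j$ lies in the relative interior of a positive-dimensional face we argue on that face. By Lemma~\ref{LEM:facesum}, $F_{p,\bm a} = F_{p^*,\bm a} + F_{P_i(\bx{\lambda_i}),\bm a}$; since $\bm a^T\bm\lambda_i\ne 0$, the face $F_{P_i(\bx{\lambda_i}),\bm a}$ is a single vertex, namely $0$ or $d\bm\lambda_i$, while we choose the direction so it is the vertex \emph{not} containing $\bm j$'s contribution, forcing a second exponent vector $\tilde{\bm j}\in\supp(p)$ lying in $F_{p,\bm a}$ with $\bm j - \tilde{\bm j} = k\bm\lambda_i$, $k = \pm d \ne 0$.

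The main obstacle is the bookkeeping around cancellation and the case distinction on where $\bm j$ sits in $\newt(p)$: for an interior point $\bm j$ of $\supp(p)$ (not on the boundary of $\newt(p)$ at all) one cannot use supporting hyperplanes through $\bm j$, so the argument must instead be the purely combinatorial one — write $\bm j = \bm a + k\bm\lambda_i$ from the product structure with $0\le k\le d$, and then note $\bm a + 0\cdot\bm\lambda_i$ and $\bm a + d\cdot\bm\lambda_i$ both lie in $\supp(p^*)+\supp(P_i(\bx{\lambda_i}))$, so at least one of $\bm j \pm (\text{something})\bm\lambda_i$ survives in $\supp(p)$ after cancellation — and here one must argue that not both can cancel, which follows because the coefficient of $\bx a$ in $p^*$ is fixed and $P_i$ contributes distinct nonzero coefficients at its extreme terms. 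I expect that writing this cancellation argument carefully, rather than hand-waving "multiplicativity of supports," is the one genuinely delicate point; everything else is a direct application of Lemmas~\ref{LEM:LMsum} and~\ref{LEM:facesum}.
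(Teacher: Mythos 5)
Your reduction to Lemma~\ref{LEM:facesum} only covers the case where $\bm j$ lies on the boundary of $\newt(p)$ (and even there the normal cone at $\bm j$ may consist solely of directions orthogonal to $\bm\lambda_i$, e.g.\ when $\bm j$ is interior to a facet whose normal kills $\bm\lambda_i$), so, as you concede, the whole weight of the proof rests on the combinatorial fallback for exponents interior to $\newt(p)$ --- and that fallback has a genuine gap. After writing $\bm j=\bm a+k\bm\lambda_i$ with $\bm a\in\supp(p^*)$ and $k\in\supp(P_i)$, you take as candidates the exponents $\bm a$ and $\bm a+d\bm\lambda_i$ (where $d=\deg P_i$) and assert that not both can cancel ``because the coefficient of $\bx{a}$ in $p^*$ is fixed and $P_i$ contributes distinct nonzero coefficients at its extreme terms.'' That justification is not valid: the coefficient of $p=p^*P_i(\bx{\lambda_i})$ at the exponent $\bm a$ (or at $\bm a+d\bm\lambda_i$) is a sum of contributions from \emph{every} monomial of $p^*$ whose exponent lies on the line $\bm a+\set Z\bm\lambda_i$, not just from the one at $\bm a$. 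Concretely, suppose the monomials of $p^*$ on that line form, up to a shift $\bx{\alpha^*}$, the polynomial $1-y$ in $y=\bx{\lambda_i}$, and $P_i=1+y$; the product along the line is $1-y^2$. For $\bm j$ corresponding to $y^0$ the decomposition is forced ($\bm a$ at $y^0$, $k=0$), and your only candidate distinct from $\bm j$ corresponds to $y^{1}$, whose coefficient vanishes --- the second support point guaranteed by the lemma is $y^{2}$, which your candidates never see. With $1-y+y^2$ and $P_i=1+y+y^2$ (product $1+y^2+y^4$) and $\bm j$ at $y^2$ obtained from the decomposition $\bm a$ at $y^1$, $k=1$, \emph{both} candidates ($y^1$ and $y^3$) cancel, so even the weaker claim ``not both can cancel'' is false for an arbitrary decomposition, and nothing in your argument lets you select a favourable one.

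The repair is to run the non-cancellation argument on the whole line at once rather than on two translates of a single exponent $\bm a$; this is in fact how the paper proves the lemma. Collect all monomials of $p^*$ whose exponents lie on the line $\bm j^*+\set Z\bm\lambda_i$ into $p_1^*=\bx{\alpha^*}P^*(\bx{\lambda_i})$ with $P^*\in\R[y]\setminus\{0\}$, and set $p_2^*=p^*-p_1^*$. No exponent of $p_2^*P_i(\bx{\lambda_i})$ lies on that line, so the part of $\supp(p)$ on the line is exactly $\{\bm\alpha^*+k\bm\lambda_i\mid k\in\supp(P^*P_i)\}$; cancellation can only happen inside the univariate product $P^*P_i$. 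Since $P_i$ is $q$-primitive of positive degree it has at least two terms, and since $\R[y]$ is a domain the lowest- and highest-degree terms of $P^*P_i$ sit in distinct degrees and survive, so $P^*P_i$ has at least two terms. As $\bm j$ lies on the line and belongs to $\supp(p)$, it corresponds to some exponent of $P^*P_i$, and any other exponent of $P^*P_i$ yields $\tilde{\bm j}$ with $\bm j-\tilde{\bm j}$ a nonzero integer multiple of $\bm\lambda_i$. You correctly identified cancellation as the delicate point, but the specific resolution you propose is the step that fails; note also that the corrected argument requires no case distinction on where $\bm j$ sits in $\newt(p)$, so the supporting-hyperplane machinery can be dropped altogether.
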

\begin{proof}
  There is nothing to show when $m = 0$, so assume that $m > 0$. By
  symmetry, it suffices to show that the assertion holds for $i = m$.
  
  Again, we take $p^*=\bx{\alpha} P_0\prod_{i=1}^{m-1}P_i(\bx{\lambda_i})$
  and derive the decomposition \eqref{EQ:pform} of $p$. Notice that 
  $\supp(p)$ is nonempty as $p\neq 0$. Let $\bm j\in\supp(p)$.
  It follows from \eqref{EQ:pform} that there is $\bm j^*\in\supp(p^*)$ 
  and $k^*\in \supp(P_m)$ such that $\bm j=\bm j^*+k^*\bm\lambda_m$. 
  Now consider the set
  \[
  S = \{\bar{\bm j}\in\supp(p^*)\mid \bar{\bm j}
  =\bm j^*+k\bm\lambda_m \ \text{for some}\ k\in\set Z\}.
  \]
  Then there exist $p_1^*,p_2^*\in\R[\bm x]$ with $\supp(p_1^*)=S$ and
  $\supp(p_2^*)=\supp(p^*)\setminus S$ such that $p^*=p_1^*+p_2^*$.
  It is evident that $\bm j^*\in S$. Thus $S$ is nonempty and then
  $p_1^*$ is nonzero. Let $\bm \alpha^*\in S$ be such that any element
  of $S$ can be written as $\bm \alpha^*+k\bm\lambda_m$ for some $k\in
  \set N$, or equivalently, any monomial present in $p_1^*$ takes the
  form $\bx{\alpha^*+k\bm \lambda_m}$ for some $k\in \set N$. It then
  follows that there exists a nonzero univariate polynomial
  $P^*\in\R[y]$ such that $p_1^*=\bx{\alpha^*}P^*(\bx{\lambda_m})$.

  On the other hand, by noticing that for any $\bar{\bm j}\in\supp(p_2^*)
  =\supp(p^*)\setminus S$, we have $\bar{\bm j}\neq \bm j^*+k\bm\lambda_m$ 
  for all $k\in\set Z$. Hence, $p$ can be decomposed as $p=f+g$, where
  $f=p_1^*P_m(\bx{\lambda_m})$ and $g=p_2^*P_m(\bx{\lambda_m})$ with
  $\supp(f)\cap\supp(g)=\emptyset$. As a consequence,
  $\supp(p)=\supp(f)\uplus\supp(g)$. Since $\bm j=\bm
  j^*+k^*\bm\lambda_m$, we have $\bm j\in \supp(f)$. Notice that
  $p_1^*=\bx{\alpha^*}P^*(\bx{\lambda_m})$.  So $f=\bx{\alpha^*}\tilde
  P(\bx{\lambda_m})$ with $\tilde P=P^*P_m \in\R[y]\setminus\{0\}$.
  Then there exists $k\in\supp(\tilde P)$ such that $\bm j=\bm
  \alpha^*+k\bm\lambda_m$.  Since $P_m$ is $q$-primitive and of
  positive total degree, it possesses more than one monomial, and
  hence so does $\tilde P$. This implies that there is another element
  $\tilde k\in \supp(\tilde P)$ distinct from $k$. Let $\tilde{\bm j}
  =\bm \alpha^*+\tilde k\bm\lambda_m$.  Then $\tilde{\bm j}\in\supp(f)
  \subset\supp(p)$ and $\bm j-\tilde{\bm j} = (k-\tilde k)\bm\lambda_m$.
  This concludes the proof.
\end{proof}
Combining Lemmas~\ref{LEM:newt} and \ref{LEM:multiqiltype} suggests a
simple geometric way to find candidates for all $q$-integer linear
types of a given polynomial.
\begin{proposition}\label{PROP:multiqiltypes}
  With the assumptions of Lemma~\ref{LEM:newt}, let $\Lambda_1$ be the
  multiset of direction vectors of edges of $\newt(p)$ having no zero
  coordinates. Let $\bm v\in\supp(p)$ be fixed and let $\Lambda_2$ be
  the set consisting of direction vectors of line segments connecting
  $\bm v$ and all other points in $\supp(p)$ which have no zero
  coordinates.
  \begin{itemize}
  \item[(1)] If the cardinality of $\Lambda_1$ is one then $p$ is
    $q$-integer linear of type $\bm\lambda\in \Lambda_1$.
  \item[(2)] Otherwise, let $\Lambda_1^*$ be the subset of $\Lambda_1$
    composed of elements with multiple occurrences. Then the
    intersection $\Lambda_1^*\cap \Lambda_2$ constitutes a superset of
    $q$-integer linear types of~$p$. Moreover, with $s\in \set N$
    denoting the cardinality of $\supp(p)$, this superset has no more
    than $s-1$ elements in total.
  \end{itemize}
\end{proposition}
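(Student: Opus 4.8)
My plan is to treat the two cases separately, using throughout the observation (recorded in the paragraph before Section~\ref{SEC:1stapproach}) that under the present hypotheses no $q$-integer linear type $\bm\lambda_i$ of $p$ has a zero coordinate: if $\lambda_{ij}=0$ then $P_i(\bx{\lambda_i})$ is a non-unit of $\R[x_k:k\neq j]$, and by Gauss's lemma this factor would divide $\cont_{x_j}(p)$, contradicting $\cont_{x_j}(p)=1$.

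\emph{Part (1).} The crux is that $\Lambda_1$ having a single element means $\newt(p)$ has exactly one edge, hence is a line segment; write $\bm\lambda$ for its direction vector, so $\Lambda_1=\{\bm\lambda\}$. All points of $\supp(p)$ then lie on a single line in direction $\bm\lambda$, and since $\gcd(\lambda_1,\dots,\lambda_n)=1$ every lattice point on that line is $\bm\alpha+k\bm\lambda$ with $k\in\set Z$, where $\bm\alpha\in\supp(p)\subseteq\set N^n$ is the endpoint from which the other is reached by a nonnegative multiple of $\bm\lambda$. Collecting terms gives $p=\bx{\alpha}P(\bx{\lambda})$ for some $P\in\R[y]$ with $\deg P>0$ and $P(0)\neq 0$; any non-unit of $\R$ dividing all coefficients of $P$ would divide $\cont_{x_1}(p)=1$, so $P$ is $q$-primitive. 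Moreover $\bm\lambda$ is already normalized and has no zero coordinate (a segment lying in a coordinate hyperplane contradicts primitivity, by the argument above). Hence $p$ is $q$-integer linear of type $\bm\lambda\in\Lambda_1$.

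\emph{Part (2).} If $\Lambda_1$ does not have exactly one element then $\newt(p)$ is not a line segment (by the edge count just used), so Lemma~\ref{LEM:newt} applies to every type of $p$. Fix such a type $\bm\lambda_i$. By Lemma~\ref{LEM:newt}, $\newt(p)$ carries at least two edges with direction vector $\bm\lambda_i$; since $\bm\lambda_i$ has no zero coordinate, each such edge contributes $\bm\lambda_i$ to $\Lambda_1$, so $\bm\lambda_i$ occurs there with multiplicity at least two and therefore $\bm\lambda_i\in\Lambda_1^*$. Applying Lemma~\ref{LEM:multiqiltype} with $\bm j=\bm v$ produces $\tilde{\bm j}\in\supp(p)$ with $\bm v-\tilde{\bm j}=k\bm\lambda_i$ for some nonzero integer $k$, so $[\bm v,\tilde{\bm j}]$ has direction vector $\bm\lambda_i$, which, again having no zero coordinate, lies in $\Lambda_2$. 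Hence every type of $p$ lies in $\Lambda_1^*\cap\Lambda_2$, so this intersection is a superset of the set of $q$-integer linear types of $p$. Finally each of the $s-1$ points of $\supp(p)\setminus\{\bm v\}$ contributes at most one direction vector to $\Lambda_2$, so $|\Lambda_2|\leq s-1$ and a fortiori $|\Lambda_1^*\cap\Lambda_2|\leq s-1$.

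I expect the delicate step to be the first one in Part~(1): matching the combinatorial condition ``$\Lambda_1$ has cardinality one'' with the geometric condition ``$\newt(p)$ is a line segment'' (this is exactly where the precise reading of the multiset $\Lambda_1$ enters), together with the bookkeeping that extracts the explicit form $p=\bx{\alpha}P(\bx{\lambda})$ and checks that $P$ is $q$-primitive. The remainder of Part~(1) and all of Part~(2) are then routine combinations of Lemmas~\ref{LEM:newt} and \ref{LEM:multiqiltype} with the trivial bound $|\Lambda_2|\leq s-1$.
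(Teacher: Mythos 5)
Your Part (2) is correct and is essentially the argument the paper intends (the proposition is stated there as a direct combination of Lemmas~\ref{LEM:newt} and \ref{LEM:multiqiltype}, with no separate proof): under the content hypotheses no type has a zero coordinate, Lemma~\ref{LEM:newt} puts each type into $\Lambda_1$ with multiplicity at least two once $\newt(p)$ is not a line segment, Lemma~\ref{LEM:multiqiltype} applied with $\bm j=\bm v$ puts it into $\Lambda_2$, and the bound $|\Lambda_2|\leq s-1$ is immediate.

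The genuine gap is the first step of Part (1). From $|\Lambda_1|=1$ you conclude that $\newt(p)$ has exactly one edge and hence is a line segment; but $\Lambda_1$ by definition discards every edge whose direction vector has a zero coordinate, so a polytope with several edges can still yield a singleton $\Lambda_1$. Concretely, $p=1+x_1+x_2$ satisfies $\cont_{x_1}(p)=\cont_{x_2}(p)=1$, its Newton polytope is the triangle with vertices $(0,0)$, $(1,0)$, $(0,1)$, and of its three edges only the one joining $(1,0)$ and $(0,1)$ has a direction vector, namely $(-1,1)$, without zero coordinates; thus $\Lambda_1=\{(-1,1)\}$ even though $\newt(p)$ is not a segment and $p$ is not $q$-integer linear. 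So the implication you assert is false, and under the literal reading of $\Lambda_1$ Part (1) itself cannot be established on this example; what can be proved (and what the sentence preceding the proposition indicates is meant) is the line-segment case, i.e.\ the case where all edges of $\newt(p)$ --- not only those surviving the zero-coordinate filter --- share a single direction $\bm\lambda$. In that case your remaining bookkeeping goes through: the support is collinear, $p=\bx{\alpha}P(\bx{\lambda})$ with $P(0)\neq 0$, $P$ is $q$-primitive by the content hypotheses, and (a step you also pass over, though routine via Lemma~\ref{LEM:Msum} or Lemma~\ref{LEM:sub}(i)) every irreducible factor of $P(\bx{\lambda})$ is, up to a monomial, again a polynomial in $\bx{\lambda}$, so $p$ is $q$-integer linear of type $\bm\lambda$. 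The algorithm's correctness is unaffected either way, since every candidate in $\Lambda$ is independently verified in Step~8 via Proposition~\ref{PROP:multitype2poly}; but as a proof of Part (1) as you state it, the argument does not stand.
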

Let $p$ be as given in Lemma~\ref{LEM:newt} and assume further that
$p$ is $q$-integer linear. Then one sees from the decomposition
\eqref{EQ:multiqild} and Lemma~\ref{LEM:LMsum} that $\newt(p)$ is the
Minkowski sum of finitely many line segments. Such a polytope is
called a {\em zonotope} in the literature. Zonotopes form an
especially interesting and important class of convex polytopes; we
refer to \cite[Lecture~7]{Zieg1995} for more information. One of the
key features of the zonotope $\newt(p)$ is that the direction vectors
of its edges are exactly those of its zones (namely the line segments
present in the Minkowski sum), which, in our context, are all
$q$-integer linear types $\bm \lambda_1,\dots, \bm \lambda_m$ from
\eqref{EQ:multiqild}. We therefore obtain the following necessary
condition for a polynomial to be $q$-integer linear.
\begin{proposition}\label{PROP:multinecessity}
  With the assumptions of Lemma~\ref{LEM:newt}, further assume that
  $p$ is $q$-integer linear. Then $\newt(p)$ is a zonotope and none of
  the direction vectors of edges of $\newt(p)$ has zero coordinates.
  As a consequence, for any integer $i$ with $1\leq i\leq n$, there
  exists a unique vector in $\supp(p)$ whose $i$-th coordinate takes
  extremum value.
\end{proposition}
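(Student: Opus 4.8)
The plan is to read the structure of $\newt(p)$ off the decomposition~\eqref{EQ:multiqild} once the factor $P_0$ has been removed, and then to deduce the statement about extremal coordinates purely from that structure. Since $p$ is $q$-integer linear, $P_0$ in~\eqref{EQ:multiqild} is a unit of $\R$, so, absorbing $P_0$ into $c$, we may write $p = c\,\bx{\alpha}\prod_{i=1}^{m} P_i(\bx{\lambda_i})$. Putting $d_i=\deg(P_i)\geq 1$ and using that each $P_i$ is $q$-primitive (so $0,d_i\in\supp(P_i)$), the Newton polytope of $P_i(\bx{\lambda_i})$ is the line segment $[\bm 0,\,d_i\bm\lambda_i]$, whose direction vector is $\bm\lambda_i$. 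Applying Lemma~\ref{LEM:LMsum} repeatedly then yields $\newt(p)=\bm\alpha+\sum_{i=1}^m[\bm 0,\,d_i\bm\lambda_i]$, a translate of a Minkowski sum of finitely many line segments, that is, a zonotope; this is the first assertion.

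Next I would show that every edge of $\newt(p)$ has one of the $\bm\lambda_i$ as direction vector. Let $E$ be an edge of $\newt(p)$, say the face determined by an outward normal $\bm a\in\set R^n\setminus\{\bm 0\}$. Iterating Lemma~\ref{LEM:facesum} over the above factorization expresses $E$ as $\bm\alpha$ plus the Minkowski sum of the faces of the segments $[\bm 0,\,d_i\bm\lambda_i]$ determined by $\bm a$, and each such face is either a single point (when $\bm a^{T}\bm\lambda_i\neq 0$) or the whole segment $[\bm 0,\,d_i\bm\lambda_i]$ (when $\bm a^{T}\bm\lambda_i=0$). Because distinct $\bm\lambda_i$ are primitive with positive rightmost nonzero coordinate, no two of them are parallel, so having two or more full segments would force $\dim E\geq 2$, while having none would force $\dim E=0$; since $\dim E=1$, exactly one summand, say the $j$-th, is a full segment, whence $E$ is a translate of $[\bm 0,\,d_j\bm\lambda_j]$ and has direction vector $\bm\lambda_j$. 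Invoking the fact, already recorded before Lemma~\ref{LEM:newt} (a short consequence of primitivity), that under $\cont_{x_1}(p)=\dots=\cont_{x_n}(p)=1$ no type $\bm\lambda_i$ has a zero coordinate, we conclude that no edge direction of $\newt(p)$ has a zero coordinate, which is the second assertion.

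For the consequence, fix $i\in\{1,\dots,n\}$ and let $F$ be the face of $\newt(p)$ determined by the $i$-th unit vector $\bm e_i$, i.e. the set of points of $\newt(p)$ at which the $i$-th coordinate attains its maximum value $b$. If $\dim F\geq 1$, then $F$, being a polytope of positive dimension, has an edge, which is then also an edge of $\newt(p)$ and lies in the supporting hyperplane $\{\bm x\in\set R^n:x_i=b\}$, so its direction vector has $i$-th coordinate zero, contradicting the second assertion. Hence $F$ is a single vertex $\bm v$ of $\newt(p)$, which lies in $\supp(p)$ because $\newt(p)$ is the convex hull of $\supp(p)$; and every $\bm j\in\supp(p)$ with $j_i=b$ lies on $F$ and hence equals $\bm v$. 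Applying the same argument with $\bm e_i$ replaced by $-\bm e_i$ gives the analogous uniqueness for the minimum value of the $i$-th coordinate.

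The only step I expect to require real care is the use of Lemma~\ref{LEM:facesum} to identify an edge of the Minkowski sum with (a translate of) a single one of the segments $[\bm 0,\,d_i\bm\lambda_i]$; the zonotope claim, the absence of zero coordinates, and the deduction of the final consequence are then routine polytope bookkeeping.
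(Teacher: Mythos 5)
Your proof is correct, and its overall skeleton matches the paper's: both obtain the zonotope structure from the decomposition \eqref{EQ:multiqild} via Lemma~\ref{LEM:LMsum}, and both derive the extremum-uniqueness by the same contradiction (a face in the supporting hyperplane $x_i=\text{const}$ of positive dimension would contain an edge of $\newt(p)$, hence an edge direction with vanishing $i$-th coordinate, using that faces of faces are faces). The one genuine difference lies in the second assertion: the paper does not prove that every edge direction of $\newt(p)$ is one of the types $\bm\lambda_j$; it simply invokes the known feature of zonotopes (referring to \cite[Lecture~7]{Zieg1995}) that edge directions coincide with the directions of the zones, and combines this with the observation, recorded in the set-up before Lemma~\ref{LEM:newt}, that under the content hypotheses no type has a zero coordinate. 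You instead derive this edge-direction claim from Lemma~\ref{LEM:facesum}: decomposing an edge $E$ as the Minkowski sum of the faces of the segments $[\bm 0,d_i\bm\lambda_i]$ determined by the same outward normal, and using that distinct normalized types are non-parallel to conclude exactly one summand is a full segment. This makes your argument self-contained where the paper cites the literature, at the cost of a slightly longer write-up; the non-parallelism step is justified correctly by primitivity together with the sign normalization in Definition~\ref{DEF:qild}(3), so no gap remains.
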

\begin{proof}
  Notice that none of the $q$-integer linear types of $p$ has zero
  coordinates. The first assertion is thus a direct result of the
  discussion preceding the proposition. In terms of the second
  assertion, we only show the argument on minimality for $i = n$, that
  is, we will prove that there exists only one vector in $\supp(p)$
  whose $i$-th coordinate attains minimum. The rest follows by symmetry.
  
  We proceed with using proof by contradiction. Suppose that there are
  at least two vectors in $\supp(p)$ whose $n$-th coordinate is equal
  to $\min_{\bm x\in\supp(p)}\{x_n\}$. Let $\bm a\in\supp(p)$ be one
  of these vectors. We claim that $H := \{\bm x\in\set R^n\mid
  -x_n=-a_n\}$ is a supporting hyperplane of $\newt(p)$. By the
  minimality of $a_n$, we know that $-x_n\leq -a_n$ for all $\bm
  x\in\supp(p)$. It then follows from the convexity of $\newt(p)$ that
  $-x_n\leq -a_n$ for all $\bm x\in\newt(p)$. Since $\bm a \in H\cap
  \newt(p)\neq\emptyset$, the claim holds.
  
  Let $F = H\cap \newt(p)$. Then $F$ is a face of $\newt(p)$ by the
  claim and thus is itself a Newton polytope by
  \cite[Proposition~2.3(i)]{Zieg1995}. By assumption, $F$ has at least
  two points and then possesses an edge, say $[\bm u, \bm v]$ for $\bm
  u,\bm v\in\supp(p)$. By \cite[Proposition~2.3~(iii)]{Zieg1995},
  $[\bm u,\bm v]$ is also an edge of $\newt(p)$, whose direction
  vector has zero $n$-th coordinate since $\bm u,\bm v\in F\subset H$,
  a contradiction with the first assertion.
\end{proof}

\subsection{Computation of univariate polynomials}
With candidates for the $q$-integer linear types at hand, we are 
able to find the corresponding univariate polynomials based on a
$q$-counterpart of \cite[Proposition~3.2]{GHLZ2019}.
\begin{proposition}\label{PROP:multitype2poly}
  With the assumptions of Lemma~\ref{LEM:newt}, let
  $\bm{\lambda}\in\set Z^n$ with $\gcd(\lambda_1,\dots,\lambda_n)=1$,
  $\lambda_1,\dots,\lambda_{n-1}$ not all zero and $\lambda_n>0$.  Let
  $P^*\in\R[y]$ be the content with respect to $x_1,\dots,x_{n-1}$ of
  the numerator of $p(x_1^{\lambda_n},\dots,x_{n-1}^{\lambda_n},
  yx_1^{-\lambda_1}\cdots\,x_{n-1}^{-\lambda_{n-1}})$. If $P^* \notin\R$
  then $\bm{\lambda}$ is a $q$-integer linear type of $p$ with
  corresponding univariate polynomial $P^*(y^{1/\lambda_n})\in\R[y]$.
  Otherwise, $\bm{\lambda}$ is not a $q$-integer linear type of $p$.
\end{proposition}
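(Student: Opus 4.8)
The plan is to decompose $p$ along the lines parallel to $\bm\lambda$ and track the substitution defining $P^*$. Let $\phi\colon\R[\bm x]\to\R[x_1^{\pm1},\dots,x_{n-1}^{\pm1},y]$ be the $\R$-algebra homomorphism with $\phi(x_j)=x_j^{\lambda_n}$ for $1\le j\le n-1$ and $\phi(x_n)=y\,x_1^{-\lambda_1}\cdots x_{n-1}^{-\lambda_{n-1}}$, so that $P^*=\cont_{x_1,\dots,x_{n-1}}(\num(\phi(p)))$ by definition. A direct computation gives $\phi(\bx{\beta})=y^{\beta_n}\prod_{j<n}x_j^{\lambda_n\beta_j-\beta_n\lambda_j}$ and in particular $\phi(\bx{\lambda})=y^{\lambda_n}$. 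We may assume $x_n\nmid p$, since removing a monomial factor $x_n^{e}$ changes neither the $q$-integer linear types of $p$ nor their univariate polynomials (and the standing reductions already give $\alpha_n=0$); also $p$ is primitive over $\R$ because $\cont_{x_1}(p)=1$. In what follows all content and gcd identities are understood up to a unit of $\R$.

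The first concrete step is to partition $\supp(p)$ into the classes of the relation $\bm\beta\sim\bm\beta'\Leftrightarrow\bm\beta-\bm\beta'\in\set Z\bm\lambda$ and write $p=\sum_{c}\bx{\gamma_c}R_c(\bx{\lambda})$, where $\bm\gamma_c\in\set N^n$ is the support point of $p$ in class $c$ with smallest last coordinate (well defined since $\lambda_n>0$) and $R_c\in\R[y]$ with $R_c(0)\neq0$. Applying $\phi$ termwise, $\phi(\bx{\gamma_c}R_c(\bx{\lambda}))$ equals the single Laurent term $\bigl(\prod_{j<n}x_j^{\lambda_n(\gamma_c)_j-(\gamma_c)_n\lambda_j}\bigr)\,y^{(\gamma_c)_n}R_c(y^{\lambda_n})$. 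Using $\gcd(\lambda_1,\dots,\lambda_n)=1$, distinct classes give terms with distinct $x_1,\dots,x_{n-1}$-parts: equality of these parts forces $\lambda_n(\bm\gamma_c-\bm\gamma_{c'})=((\gamma_c)_n-(\gamma_{c'})_n)\bm\lambda$, hence $\bm\gamma_c-\bm\gamma_{c'}\in\set Z\bm\lambda$. Therefore the coefficients of $\phi(p)$ with respect to $x_1,\dots,x_{n-1}$ are exactly the polynomials $y^{(\gamma_c)_n}R_c(y^{\lambda_n})$ (passing to the numerator only multiplies through by a monomial), and hence $P^*=\gcd_c\{\,y^{(\gamma_c)_n}R_c(y^{\lambda_n})\,\}$ in $\R[y]$.

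The heart of the argument is the following claim: \emph{if a polynomial $g$ of the form \eqref{EQ:multiqild} is primitive over $\R$, is not divisible by $x_n$, and has no irreducible factor of positive total degree that is $q$-integer linear of type $\bm\lambda$, then the gcd formed from $g$ as above is a unit.} I would prove it by contradiction. Take an irreducible $r\in\R[y]$ dividing that gcd. If $r\in\R$, then $r$ divides every coefficient of $g$, contradicting primitivity. If $r$ is a unit multiple of $y$, then since each $R_c(y^{\lambda_n})$ has nonzero constant term we get $(\gamma_c)_n\ge1$ for every class $c$, i.e.\ $x_n\mid g$, again a contradiction. Otherwise $\deg_y r\ge1$ with $r$ not a unit multiple of $y$, so $r$ has a root $\rho$ in an algebraic closure of $\operatorname{Frac}(\R)$; then $R_c(\rho^{\lambda_n})=0$ for every $c$, so $\rho^{\lambda_n}$ is a common root of all the $R_c$ and $T:=\prim(\gcd_cR_c)\in\R[y]$ is non-constant, with $T(0)\neq0$ (as $T\mid R_c$ and $R_c(0)\neq0$), so $T$ is $q$-primitive. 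Since $T(\bx{\lambda})$ divides each $\bx{\gamma_c}R_c(\bx{\lambda})$ in $\R[\bm x,\bm x^{-1}]$, it divides $g$ there; multiplying $T(\bx{\lambda})$ by a suitable monomial $\bx{\sigma}$ with $\bm\sigma\in\set N^n$ and using that $\R[\bm x]$ is a UFD to push the cofactor back into $\R[\bm x]$, one obtains a genuine polynomial factor $\bx{\sigma}T(\bx{\lambda})$ of $g$ that is $q$-integer linear of type $\bm\lambda$ (here the normalization of $\bm\lambda$ in the hypothesis is used) and of positive total degree, contradicting the assumption on $g$. I expect this last passage — together with the bookkeeping that identifies $P^*$ with the gcd of the slice polynomials — to be the main obstacle; the rest uses only Gauss's lemma and Lemmas~\ref{LEM:Msum}--\ref{LEM:LMsum}.

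It remains to combine the cases. If $\bm\lambda$ is not a $q$-integer linear type of $p$, then $p$ has no irreducible $q$-integer linear factor of type $\bm\lambda$ of positive degree, so the claim applied to $p$ yields $P^*\in\R$. If $\bm\lambda=\bm\lambda_i$ for some $i$, write $p=P_i(\bx{\lambda})\,p^{\#}$ with $p^{\#}=\bx{\alpha}P_0\prod_{j\ne i}P_j(\bx{\lambda_j})$; then the class decomposition of $p^{\#}$ turns into that of $p$ after multiplying every slice by $P_i$, whence $P^*=P_i(y^{\lambda_n})\cdot(\text{gcd of the slices of }p^{\#})$. Since $p^{\#}$ is primitive over $\R$, is not divisible by $x_n$, and its $q$-integer linear types are the $\bm\lambda_j$ with $j\ne i$ (none equal to $\bm\lambda$), the claim applied to $p^{\#}$ makes that last gcd a unit, so $P^*=P_i(y^{\lambda_n})\notin\R$ and $P^*(y^{1/\lambda_n})=P_i$, the corresponding univariate polynomial. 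These two cases together give the proposition: $P^*\notin\R$ forces $\bm\lambda$ to equal some $\bm\lambda_i$, with univariate polynomial $P^*(y^{1/\lambda_n})$, while $P^*\in\R$ forces $\bm\lambda$ not to be a $q$-integer linear type of $p$.
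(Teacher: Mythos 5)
Your argument is correct in substance, but it takes a genuinely different route from the paper's. The paper works over the quotient field $\set K$ of $\R$ inside $\set K(x_1,\dots,x_{n-1})[x_n]$: it takes a monic irreducible factor $f$ of $P^*$, passes to a root $r\in\overbar{\set K}$ of $f$, shows that the minimal polynomial $P$ of $r^{\lambda_n}$ satisfies $P(\bx{\lambda})\mid p$, and then invokes Lemma~\ref{LEM:sub} (whose proof is a roots-of-unity argument exploiting $\gcd(\lambda_1,\dots,\lambda_n)=1$) together with Lemma~\ref{LEM:qiltype} to conclude that $\bm\lambda$ is a type, finishing with a two-sided divisibility between $P^*$ and $P_i(y^{\lambda_n})$. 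You instead stay inside $\R$: you slice $p$ along the residue classes of $\supp(p)$ modulo $\set Z\bm\lambda$, identify $P^*$ (up to units) with the gcd of the slice polynomials $y^{(\gamma_c)_n}R_c(y^{\lambda_n})$ (the fact that distinct classes give distinct $x_1,\dots,x_{n-1}$-monomials again uses $\gcd(\lambda_1,\dots,\lambda_n)=1$), and reduce both implications to one claim, namely that this gcd is trivial unless $p$ has an irreducible factor of type $\bm\lambda$; the case when $\bm\lambda=\bm\lambda_i$ then falls out by factoring $P_i$ slice-wise from $p^{\#}=p/P_i(\bx{\lambda})$. This buys a more elementary, purely combinatorial/UFD argument that bypasses the field-theoretic irreducibility transfer of Lemma~\ref{LEM:sub}, at the cost of some bookkeeping you yourself flag: in the last step of the claim you should not argue with the possibly reducible factor $\bx{\sigma}T(\bx{\lambda})$ itself, but pick an irreducible non-monomial factor $h_1$ of it in $\R[\bm x]$ (it divides $g$ in $\R[\bm x]$ since it is coprime to all monomials), and use Lemma~\ref{LEM:Msum} to see that $\supp(h_1)$ lies on a line of direction $\bm\lambda$, whence $h_1=\bx{\beta}Q(\bx{\lambda})$ with $Q(0)\neq 0$ and $\deg Q\geq 1$; this is where primitivity and the sign normalization of $\bm\lambda$ enter, and it is the (much milder) surrogate for Lemma~\ref{LEM:sub}(i). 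You should also record that $p^{\#}\in\R[\bm x]$ (immediate by comparing coordinate minima of Newton polytopes via Lemma~\ref{LEM:LMsum}) and that multiplying the slices of $p^{\#}$ by $P_i$ produces exactly the slices of $p$ with the same base points (guaranteed by $P_i(0)\neq 0$ and the disjointness of the classes). With these routine details filled in, your proof is complete and independent of Lemmas~\ref{LEM:sub} and \ref{LEM:qiltype}.
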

In order to prove the above proposition, we first need to introduce
some basic notions and lemmas.  In the sequel of this subsection, we
let $\set K$ denote the quotient field of $\R$ and consider
polynomials in $x_n$ over the field $\set K(x_1,\dots,x_{n-1})$, all
of which form the ring $\set K(x_1,\dots,x_{n-1})[x_n]$. It is
convenient to extend the definition of content and primitive part to
polynomials in this setting. Let $p\in \set K(x_1,\dots,x_{n-1})[x_n]$
be of the form $\sum_{i=0}^d(a_i/b)x_n^i $ for $d\in \set N$ and
$a_i,b\in\R[x_1,\dots,x_{n-1}]$. Then the {\em content}
$\cont_{x_n}(p)$ of $p$ with respect to $x_n$ is defined as
$\gcd(a_0,\dots,a_d)/b$ and the corresponding {\em primitive part}
$\prim_{x_n}(p)=p/\cont_{x_n}(p)$.  Evidently,
$\prim_{x_n}(p)\in\R[\bm x]$.  The definition of leading coefficient
and degree extends to polynomials in $\set K[x_1,\dots,x_n]$ in a
natural manner.

\begin{lemma}\label{LEM:sub}
  Let $P\in\R[y]\setminus\R$ with $P(0)\neq 0$ and let
  $\bm{\lambda}\in\set Z^n$ with $\gcd(\lambda_1,\dots,\lambda_n)=1$,
  $\lambda_1,\dots,\lambda_{n-1}$ not all zero and $\lambda_n>0$.
  Then
  \begin{itemize}
  \item[(i)] for any factor $f\in\set K(x_1,\dots,x_{n-1})[x_n]$ of
    $P(\bx{\lambda})$ which is monic and irreducible over
    $\set K(x_1,\dots,x_{n-1})$, there exists $c\in\set K$,
    $\alpha_1,\dots,\alpha_{n-1}\in\set Z$ and a factor $g\in\R[y]$ of
    $P$ such that
    $f=cx_1^{\alpha_1}\cdots\,x_{n-1}^{\alpha_{n-1}}g(\bx{\lambda})$.
    Moreover, $0<\deg(g)=\deg_{x_n}(f)/\lambda_n$.
  \item[(ii)] $P$ is irreducible over $\R$ if and only if
    $P(\bx{\lambda})$ is irreducible over $\set K(x_1,\dots,x_{n-1})$
    if and only if $\prim_{x_n}(P(\bx{\lambda}))$ is irreducible
    over~$\R$.
  \end{itemize}
\end{lemma}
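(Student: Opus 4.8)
The plan is to prove part (i) first, since it is the substantive statement, and then to read part (ii) off it by counting degrees in $x_n$. Throughout I would write $L=\set K(x_1,\dots,x_{n-1})$ and $m=x_1^{\lambda_1}\cdots x_{n-1}^{\lambda_{n-1}}\in L\setminus\{0\}$, so that $P(\bx{\lambda})=P(m\,x_n^{\lambda_n})$; since $\lambda_n>0$ and $P(0)\neq 0$, this is a genuine polynomial in $x_n$ over $L$ of $x_n$-degree $\lambda_n\deg(P)$ with both its leading and its constant coefficient nonzero. For part (ii) I would assume at the outset that $P$ is primitive over $\R$ (the case that matters in the sequel); the content of $P$ is a unit of $L$ and is immaterial to the middle condition.

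For part (i): fix an algebraic closure $\overbar{L}$ and a root $\beta\in\overbar{L}$ of the given monic irreducible factor $f$. Since $f\mid P(\bx{\lambda})$, the element $\gamma:=m\,\beta^{\lambda_n}$ is a root of $P$, hence algebraic over $\set K$, and $\gamma\neq 0$ because $P(0)\neq 0$. Let $g\in\R[y]$ be a primitive associate of the minimal polynomial of $\gamma$ over $\set K$; by Gauss's lemma $g$ is a factor of $P$ in $\R[y]$, and $e:=\deg(g)=[\set K(\gamma):\set K]\geq 1$. The heart of the argument is the identity $\deg(f)=\lambda_n e$. To get it I would pass to the intermediate field $M:=\set K(\gamma)(x_1,\dots,x_{n-1})$; base change along the purely transcendental extension $L/\set K$ gives $[M:L]=e$, and since $\gamma=m\,\beta^{\lambda_n}\in L(\beta)$ we have $M\subseteq L(\beta)$ and hence $L(\beta)=M(\beta)$. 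Now $\beta$ is a root of the binomial $T^{\lambda_n}-\gamma/m\in M[T]$, and I would invoke the classical criterion for irreducibility of $T^{r}-a$ in characteristic zero: for each prime $p\mid\lambda_n$ the $x_i$-adic valuations on $M$ give $v_{x_i}(\gamma/m)=-\lambda_i$ (as $\gamma$ is a unit for every $v_{x_i}$), so were $\gamma/m$ a $p$-th power in $M$ the prime $p$ would divide every $\lambda_i$ with $i<n$ as well as $\lambda_n$, contradicting $\gcd(\lambda_1,\dots,\lambda_n)=1$; the exceptional case $4\mid\lambda_n$ is ruled out the same way. Hence $T^{\lambda_n}-\gamma/m$ is irreducible over $M$, so $[M(\beta):M]=\lambda_n$ and $\deg(f)=[L(\beta):L]=[M(\beta):M]\,[M:L]=\lambda_n e$.

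To finish part (i) I would observe that $g(\bx{\lambda})=g(m\,x_n^{\lambda_n})$ lies in $L[x_n]$, vanishes at $\beta$ (because $g(m\beta^{\lambda_n})=g(\gamma)=0$), and has $x_n$-degree exactly $\lambda_n e=\deg(f)$ with leading $x_n$-coefficient $\lc_y(g)\,m^{e}$. Since $f$ is the minimal polynomial of $\beta$ over $L$ it divides $g(\bx{\lambda})$, and the equality of $x_n$-degrees then forces $g(\bx{\lambda})=\lc_y(g)\,m^{e}\,f$; that is, $f=c\,x_1^{\alpha_1}\cdots x_{n-1}^{\alpha_{n-1}}\,g(\bx{\lambda})$ with $c=\lc_y(g)^{-1}\in\set K$ and $\alpha_i=-e\lambda_i\in\set Z$, while $0<e=\deg(g)=\deg_{x_n}(f)/\lambda_n$ is exactly the final degree identity. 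For part (ii), with $P$ primitive, $P$ is irreducible over $\R$ iff over $\set K$ (Gauss) iff over $L$ (the relative algebraic closure of $\set K$ in $L$ is $\set K$, so the Galois group of $P$ still acts transitively on its roots). If $P$ is irreducible over $L$, then the factor $g\mid P$ supplied by part (i) has positive degree, so $\deg(g)=\deg(P)$, and therefore every monic irreducible factor $f$ of $P(\bx{\lambda})$ satisfies $\deg_{x_n}(f)=\lambda_n\deg(P)=\deg_{x_n}(P(\bx{\lambda}))$, whence $P(\bx{\lambda})$ is irreducible over $L$; conversely a nontrivial factorization $P=P_1P_2$ over $L$ gives $P(\bx{\lambda})=P_1(\bx{\lambda})P_2(\bx{\lambda})$ with both factors of $x_n$-degree $\lambda_n\deg(P_i)\geq 1$. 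This is the first equivalence, and the second follows because $\prim_{x_n}(P(\bx{\lambda}))$ differs from $P(\bx{\lambda})$ by a unit of $L$ and, being primitive in $x_n$ over the UFD $\R[x_1,\dots,x_{n-1}]$, is irreducible over $\R[\bm x]$ exactly when it is irreducible over $L$, by Gauss's lemma.

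The hard part will be the degree identity $\deg(f)=\lambda_n\deg(g)$ in part (i): once it is in hand, the rest is Gauss's lemma and bookkeeping of $x_n$-degrees, but that step genuinely requires identifying $T^{\lambda_n}-\gamma/m$ as irreducible over $M$, which is where one must set up the right valuations and appeal to the binomial irreducibility theorem (including its $4\mid\lambda_n$ proviso). A minor additional point to watch is that $m$, and hence $P(\bx{\lambda})$, may be a Laurent rather than ordinary polynomial in $x_1,\dots,x_{n-1}$, which is precisely why the content defining $\prim_{x_n}$ must be taken inside $\set K(x_1,\dots,x_{n-1})[x_n]$.
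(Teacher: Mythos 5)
Your proposal is correct, and it reaches the same two pivot points as the paper — everything reduces to showing that a binomial of $x_n$-degree $\lambda_n$ is irreducible, which is exactly where $\gcd(\lambda_1,\dots,\lambda_n)=1$ enters, and part (ii) then follows from (i) plus Gauss's lemma — but your execution of the key step is genuinely different. The paper proves that $\bx{\lambda}-r$ is irreducible over $\set K(x_1,\dots,x_{n-1})$ for each nonzero root $r$ of $P$ by writing out its complete factorization into the $\lambda_n$ linear factors $x_n-\omega^i r^{1/\lambda_n}x_1^{-\lambda_1/\lambda_n}\cdots x_{n-1}^{-\lambda_{n-1}/\lambda_n}$ and checking that no proper subproduct has integer exponents in $x_1,\dots,x_{n-1}$; it then assembles $g$ directly as $\prim_y(\prod_i(y-r_i))$ from the roots occurring in $f$. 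You instead adjoin a single root $\gamma=m\beta^{\lambda_n}$ of $P$, prove irreducibility of $T^{\lambda_n}-\gamma/m$ over $M=\set K(\gamma)(x_1,\dots,x_{n-1})$ by the Vahlen--Capelli criterion together with the $x_i$-adic valuations (including the $4\mid\lambda_n$, $-4M^4$ proviso, which you dispose of by the same valuation count), and obtain the degree identity $\deg_{x_n}(f)=\lambda_n\deg(g)$ from multiplicativity of field degrees in the tower $L\subseteq M\subseteq M(\beta)=L(\beta)$, after which minimal-polynomial divisibility and comparison of leading coefficients give $g(\bx{\lambda})=\lc(g)\,m^{\deg(g)}f$ and hence the displayed form of $f$. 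Your route is more conceptual and avoids manipulating fractional-exponent monomials over the algebraic closure, at the price of citing the binomial irreducibility theorem; the paper's argument is longer but self-contained and entirely elementary. Your treatment of (ii) matches the paper's in substance (deduce the first equivalence from (i), then Gauss's lemma and $x_n$-primitivity for the second); your explicit assumption that $P$ is primitive over $\R$ is a sensible repair, since the "$P(\bx{\lambda})$ irreducible over $\set K(x_1,\dots,x_{n-1})$ $\Rightarrow$ $P$ irreducible over $\R$" direction fails for non-primitive $P$ (e.g.\ $P=2(y-1)$ over $\set Z[q,q^{-1}]$), a point the paper passes over with "the sufficiency is evident" and which is harmless in the way the lemma is applied.
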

\begin{proof}
  (i) Since $P(0)\neq 0$, all its roots in the algebraic closure
  $\overbar{\set K}$ of the field $\set K$ are nonzero. In order to
  prove the assertion, it is sufficient to show that for any root
  $r\in\overbar{\set K}$ of $P$, the polynomial $\bx{\lambda}-r$ is
  irreducible over $\overbar{\set K}(x_1,\dots,x_{n-1})$.  For then,
  since $f\in\set K(x_1,\dots,x_{n-1})[x_n]$ is a monic and
  irreducible factor of $P(\bx{\lambda})$, it factors completely into
  irreducibles in $\overbar{\set K} (x_1,\dots,x_{n-1})[x_n]$ as
  follows
  \[
  f=\prod_{i=1}^s(x_1^{-\lambda_1}\cdots\,x_{n-1}^{-\lambda_{n-1}})
  (\bx{\lambda}-r_i)
  =(x_1^{-\lambda_1}\cdots\,x_{n-1}^{-\lambda_{n-1}})^s
  \prod_{i=1}^s(\bx{\lambda}-r_i),
  \]
  where $s\in \set N$ with $s\leq \deg(P)$ and the
  $r_i\in\overbar{\set K}$ are roots of $P$, and thus the assertion
  directly follows by letting $g(y)=\prim_y(\prod_{i=1}^s(y-r_i))$.
  
  Let $r\in\overbar{\set K}$ be a root of $P$ and suppose that
  $\bx{\lambda}-r$ is reducible over $\set K(x_1,\dots,x_{n-1})$.
  Then we have $\lambda_n>1$. Consider the algebraic closure
  $\overbar{\set K(x_1,\dots,x_{n-1})}$ of $\set K(x_1,\dots,x_{n-1})$
  and let $\omega\in\overbar{\set K}$ be a $\lambda_n$-th root of
  unity so that $\omega^{\lambda_n}=1$.  Since $r$ is nonzero, the
  complete factorization of $\bx{\lambda}-r$ over
  $\overbar{\set K(x_1,\dots,x_{n-1})}$ is given by
  \[
  \bx{\lambda}-r=x_1^{\lambda_1}\cdots\,x_{n-1}^{\lambda_{n-1}}
  \prod_{i=0}^{\lambda_n-1}\left(x_n-\omega^ir^{1/\lambda_n}
  x_1^{-\lambda_1/\lambda_n}\cdots\,x_{n-1}^{-\lambda_{n-1}/\lambda_n}\right).
  \]
  It then follows from the reducibility of $\bx{\lambda}-r$ over $\set
  K(x_1,\dots,x_{n-1})$ that there exist $i_1,\dots,i_k\in
  \{0,\dots,\lambda_n-1\}$ with $0<k<\lambda_n$ such that
  \[
  \prod_{j=1}^{k}\left(x_n-\omega^{i_j}r^{1/\lambda_n}
  x_1^{-\lambda_1/\lambda_n}\cdots\,x_{n-1}^{-\lambda_{n-1}/\lambda_n}\right)
  \in \set K(x_1,\dots,x_{n-1})[x_n].
  \]
  This implies that $(\lambda_i/\lambda_n)k\in \set Z$ for all
  $i=1,\dots,n-1$.  Thus $\lambda_n$ divides $k\cdot
  \gcd(\lambda_1,\dots,\lambda_{n-1})$ in~$\set Z$. Since
  $\lambda_1,\dots,\lambda_{n-1}$ are not all zero,
  $\gcd(\lambda_1,\dots,\lambda_n)=1$ and $\lambda_n>1$, we have
  $\lambda_n$ divides $k$ in~$\set Z$, a contradiction since
  $0<k<\lambda_n$.
  
  (ii) For the first equivalence, the sufficiency is evident. In order
  to show the necessity, suppose that $P(\bx{\lambda})$ is reducible
  over $\set K(x_1,\dots,x_{n-1})$.  Let $f\in\set K(x_1,\dots,x_{n-1})[x_n]$
  be an irreducible factor of $P(\bx{\lambda})$. Then the degree of
  $f$ in $x_n$ is less than $\lambda_n\deg(P)$. By assertion (i),
  we obtain that there exists a nontrivial factor $g\in\R[y]$ dividing
  $P$ in $\R[y]$ and $\deg(g)=\deg_{x_n}(f)/\lambda_n<\deg(P)$, a
  contradiction with the assumption that $P$ is irreducible over $\R$.
  Therefore, $P(\bx{\lambda})$ is irreducible over~$\set K(x_1,\dots,x_{n-1})$.
	
  For the second equivalence, by Gau{\ss}' lemma, one easily sees that
  $P(\bx{\lambda})$ is irreducible over $\set K(x_1,\dots,x_{n-1})$ if
  and only if $\prim_{x_n}(P(\bx{\lambda}))$ is irreducible over
  $\R[x_1,\dots,x_{n-1}]$.  It thus amounts to showing the equivalence
  between the irreducibility of $\prim_{x_n}(P(\bx{\lambda}))$ over
  $\R[x_1,\dots,x_{n-1}]$ and its irreducibility over $\R$. The
  direction from $\R$ to $\R[x_1,\dots,x_{n-1}]$ is trivial. In order
  to see the converse, notice that any nontrivial factor of
  $\prim_{x_n}(P(\bx{\lambda}))$ can only belong to
  $\R[x_1,\dots,x_{n-1}]$ since $\prim_{x_n}(P(\bx{\lambda}))$ is
  irreducible over $\R[x_1,\dots,x_{n-1}]$. On the other hand, the
  existence of any such a nontrivial factor would contradict with the
  fact that $\prim_{x_n}(P(\bx{\lambda}))$ is primitive with respect
  to~$x_n$.  Accordingly, $\prim_{x_n}(P(\bx{\lambda}))$ must be
  irreducible over~$\R$.
\end{proof}

\begin{lemma}\label{LEM:qiltype}
  Let $p\in\R[\bm x]$ and $\bm{\lambda}\in\set Z^n$ with
  $\gcd(\lambda_1,\dots,\lambda_n)=1$, $\lambda_1,\dots,\lambda_{n-1}$
  not all zero and $\lambda_n>0$. Let $P\in\set K[y]$ be such that
  $P(0)\neq 0$ and $P(\bx{\lambda})$ divides $p$ in $\set
  K(x_1,\dots,x_{n-1})[x_n]$. Then $\bm{\lambda}$ is a $q$-integer
  linear type of $p$ with the corresponding univariate polynomial
  divided by $P$ in $\set K[y]$.
\end{lemma}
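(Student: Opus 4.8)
The plan is to turn the hypothesis, which is a divisibility in the ring $\set K(x_1,\dots,x_{n-1})[x_n]$, into a divisibility in $\set K[y]$ relating $P$ to the univariate polynomial $\tilde P\in\R[y]$ that is attached to the type $\bm\lambda$ in the $q$-integer linear decomposition of $p$; the passage between the two rings is the substitution $y\mapsto\bx{\lambda}$, governed by Lemma~\ref{LEM:sub}. If $\deg P=0$ the hypothesis is vacuous, so one may assume $\deg P\geq1$ (and $p\neq0$). Clearing denominators, replace $P$ by a primitive associate $\overbar{P}:=\prim(dP)\in\R[y]$, where $d\in\R\setminus\{0\}$; then $\overbar{P}\sim P$ in $\set K[y]$, $\overbar{P}(0)\neq0$, $\deg\overbar{P}\geq1$, and $\overbar{P}(\bx{\lambda})$ still divides $p$ in $\set K(x_1,\dots,x_{n-1})[x_n]$. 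It now suffices to show that $\bm\lambda$ is a $q$-integer linear type of $p$ and that $\overbar{P}$ divides $\tilde P$ in $\set K[y]$.

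The heart of the argument takes place in the UFD $\set K(x_1,\dots,x_{n-1})[x_n]$. Write the complete factorization $p=c\prod_k p_k^{m_k}$ over $\R$ into primitive irreducibles. By Gau{\ss}' lemma the monic irreducible factors of $p$ in $\set K(x_1,\dots,x_{n-1})[x_n]$ are precisely the $\widetilde{p_k}:=p_k/\lc_{x_n}(p_k)$ with $\deg_{x_n}(p_k)>0$, each with multiplicity $m_k$ (the $p_k$ of $x_n$-degree $0$, and $c$, being units). Factor $\overbar{P}(\bx{\lambda})$, seen as a polynomial in $x_n$, as its leading coefficient times monic irreducibles $f_1,\dots,f_r$ (with repetition). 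Since $\overbar{P}(\bx{\lambda})\mid p$, each $f_j$ equals some $\widetilde{p_{k_j}}$, and $\#\{j:k_j=k\}\leq m_k$ for every $k$. Now apply Lemma~\ref{LEM:sub}(i) to $\overbar{P}$: for each $j$ there are $c_j\in\set K$, integers $\alpha^{(j)}_1,\dots,\alpha^{(j)}_{n-1}$ and a factor $g_j\in\R[y]$ of $\overbar{P}$ with $f_j=c_j\,x_1^{\alpha^{(j)}_1}\cdots x_{n-1}^{\alpha^{(j)}_{n-1}}\,g_j(\bx{\lambda})$ and $\deg(g_j)=\deg_{x_n}(f_j)/\lambda_n>0$; as $g_j\mid\overbar{P}$ in $\R[y]$ we also get $g_j(0)\neq0$. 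Substituting these into $\overbar{P}(\bx{\lambda})=(\text{leading coefficient})\cdot\prod_j f_j$ yields $\overbar{P}(\bx{\lambda})=U\cdot\bigl(\prod_j g_j\bigr)(\bx{\lambda})$ for some $U\in\set K(x_1,\dots,x_{n-1})$. Comparing the coefficients of $x_n^{0}$ on both sides, which are $\overbar{P}(0)$ and $U\cdot\prod_j g_j(0)$ respectively (here $\lambda_n>0$ is used), forces $U\in\set K$; and since $y\mapsto\bx{\lambda}$ is injective on $\set K[y]$ (the element $\bx{\lambda}$ being transcendental over $\set K$), this gives the key identity $\overbar{P}=U\prod_j g_j$ in $\set K[y]$.

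It then remains to read off the conclusion. From $\widetilde{p_{k_j}}=f_j$ and the displayed shape of $f_j$, one obtains $p_{k_j}=w_j\,g_j(\bx{\lambda})$ with $w_j\in\set K(x_1,\dots,x_{n-1})$; because $p_{k_j}\in\R[\bm x]$ is primitive over $\R$ while $\deg(g_j)\geq1$ and $g_j(0)\neq0$, the factor $w_j$ must be a monomial times a unit of $\R$, whence $p_{k_j}$ is an irreducible $q$-integer linear polynomial whose type is exactly $\bm\lambda$ (every difference of exponent vectors of $p_{k_j}$ is an integer multiple of $\bm\lambda$ and $\gcd(\lambda_1,\dots,\lambda_n)=1$) and whose corresponding univariate polynomial $G_{k_j}$ is a $\set K$-associate of $g_j$. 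Since $\deg P\geq1$ there is at least one $g_j$, hence at least one such $p_{k_j}$, so $\bm\lambda$ is a $q$-integer linear type of $p$. Letting $T$ index the irreducible $q$-integer linear factors of $p$ of type $\bm\lambda$, we have $\tilde P=\prod_{k\in T}G_k^{m_k}$ up to a unit of $\R$, where $G_k$ is the univariate polynomial of $p_k$. Hence, for every monic irreducible $Q\in\set K[y]$ with $Q$-adic valuation $v_Q$, the identity $\overbar{P}=U\prod_j g_j$ gives $v_Q(\overbar{P})=\sum_j v_Q(g_j)=\sum_{k\in T}\#\{j:k_j=k\}\,v_Q(G_k)\leq\sum_{k\in T}m_k\,v_Q(G_k)=v_Q(\tilde P)$, so $\overbar{P}\mid\tilde P$, and therefore $P\mid\tilde P$, in $\set K[y]$.

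The main obstacle I anticipate is the bookkeeping across the three rings $\R[\bm x]$, $\set K[y]$ and $\set K(x_1,\dots,x_{n-1})[x_n]$: aligning the factorization of $p$ over $\R$ with its factorization over $\set K(x_1,\dots,x_{n-1})$ through Gau{\ss}' lemma, pulling the identity for $\overbar{P}(\bx{\lambda})$ back to $\set K[y]$, and tracking contents, leading coefficients and the stray monomials $x_1^{\alpha^{(j)}_i}$ that Lemma~\ref{LEM:sub}(i) introduces. The geometric content — that an irreducible factor of $p$ divisible by $P(\bx{\lambda})$ is automatically $q$-integer linear of type $\bm\lambda$ — is straightforward once Lemma~\ref{LEM:sub} is available.
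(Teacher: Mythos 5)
Your proof is correct, but it runs in the opposite direction from the paper's and is organized around a different part of the key lemma. The paper's proof is a short push-forward argument: it takes each primitive irreducible factor $f\in\R[y]$ of $P$, observes that $\prim_{x_n}(f(\bx{\lambda}))=\bx{\alpha}f(\bx{\lambda})$ is visibly $q$-integer linear of type $\bm\lambda$, invokes Lemma~\ref{LEM:sub}(ii) to certify that this polynomial is irreducible over $\R$, notes it divides $p$, and concludes from Definition~\ref{DEF:qild} that $f$ divides the corresponding univariate polynomial, finishing with ``since $f$ is arbitrary.'' You instead pull back: you factor $\overbar{P}(\bx{\lambda})$ into monic irreducibles over $\set K(x_1,\dots,x_{n-1})$, use Lemma~\ref{LEM:sub}(i) to write each such factor as a monomial times $g_j(\bx{\lambda})$ with $g_j\in\R[y]$, recover the identity $\overbar{P}=U\prod_j g_j$ in $\set K[y]$, match the $f_j$ against the $\R$-irreducible factorization of $p$ via Gau{\ss}' lemma, and finish with a valuation count. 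What your route buys is an explicit treatment of multiplicities: when $P$ has repeated irreducible factors, the conclusion $P\mid P_i$ needs the multiplicity of each factor of $P$ to be dominated by its multiplicity in $P_i$, which your inequality $v_Q(\overbar{P})\leq v_Q(\tilde P)$ delivers, whereas the paper's closing sentence leaves this bookkeeping implicit (it is harmless in the paper's application, where $P$ is a minimal polynomial and hence squarefree). What the paper's route buys is brevity, since part (ii) of Lemma~\ref{LEM:sub} immediately identifies $\prim_{x_n}(f(\bx{\lambda}))$ as an irreducible factor of $p$ without ever factoring $p$ or $P(\bx{\lambda})$. One small point you rightly flag, and which the paper also leaves tacit, is that the statement implicitly assumes $\deg(P)\geq 1$: for constant $P$ the divisibility hypothesis is vacuous and the conclusion about $\bm\lambda$ being a type of $p$ would not follow.
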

\begin{proof}
  Let $f\in\R[y]$ be a primitive irreducible factor of $P$. Since
  $P(0) \neq 0$, then $f$ is $q$-primitive. Notice that
  $\bm\lambda\in\set Z^n$ and $\lambda_n>0$.  So
  $\prim_{x_n}(f(\bx{\lambda}))=\bx{\alpha}f(\bx{\lambda})$ for some
  $\bm{\alpha}\in\set N^n$ with $\alpha_n=0$. This implies that
  $\prim_{x_n}(f(\bx{\lambda}))$ is a $q$-integer linear polynomial in
  $\R[\bm x]$ of type $\bm\lambda$.  Because $P(\bx{\lambda})$ divides
  $p$ in $\set K(x_1,\dots,x_{n-1})[x_n]$, so does $f(\bx{\lambda})$.
  One then concludes from Lemma~\ref{LEM:sub}~(ii) that
  $\prim_{x_n}(f(\bx{\lambda}))$ is an irreducible factor of $p$
  over~$\R$. Therefore, by Definition~\ref{DEF:qild}, $\bm{\lambda}$
  is a $q$-integer linear type of $p$ and $f$ divides its
  corresponding polynomial in $\R[y]$. Since $f$ is arbitrary, the
  lemma follows.
\end{proof}

We are now ready to prove Proposition~\ref{PROP:multitype2poly}.

\smallskip
\begin{proof}[Proof of Proposition~\ref{PROP:multitype2poly}]
Assume that $P^*\in \R[y]\setminus\R$ and let $f\in\set K[y]$ be a
monic irreducible factor of~$P^*$. Then $f(x_n)$ divides
$p(x_1^{\lambda_n},\dots,x_{n-1}^{\lambda_n},
x_nx_1^{-\lambda_1}\cdots x_{n-1}^{-\lambda_{n-1}})$ in $\set
K(x_1,\dots,x_{n-1})[x_n]$. Subsequently substituting $x_n$ by
$x_nx_1^{\lambda_1}\cdots x_{n-1}^{\lambda_{n-1}}$ and then $x_i$ by
$x_i^{1/\lambda_n}$ for $i = 1,\dots,n-1$ yields that
$f(x_nx_1^{\lambda_1/\lambda_n}\cdots
x_{n-1}^{\lambda_{n-1}/\lambda_n})$ divides $p$ in
$\overbar{\set K(x_1,\dots,x_{n-1})}[x_n]$ where
$\overbar{\set K(x_1,\dots,x_{n-1})}$ denotes the algebraic closure
of the field $\set K(x_1,\dots,x_{n-1})$. This implies that $f(y)\neq
y$, for, otherwise, we would have that $x_n$ divides $p$ in
$\overbar{\set K(x_1,\dots,x_{n-1})}[x_n]$ and then
$p(x_1,\dots,x_{n-1},0) = 0$, a contradiction with the primitivity of
$p$ with respect to $x_1$. Let $r\in\overbar{\set K}$ be a root of~$f$.
Then $r\neq 0$ and $f$ is its minimal polynomial in $\set K[y]$.  It
follows from the divisibility of $p$ by
$f(x_nx_1^{\lambda_1/\lambda_n}\cdots x_{n-1}^{\lambda_{n-1}/\lambda_n})$
that $p(x_1,\dots,x_{n-1},
rx_1^{-\lambda_1/\lambda_n}\cdots x_n^{-\lambda_{n-1}/\lambda_n}) = 0$.
Now consider the minimal polynomial $P\in\set K[y]$ of~$r^{\lambda_n}$.
By Lemma~\ref{LEM:sub}~(ii), $P(\bx{\lambda})$ is irreducible over
$\set K(x_1,\dots,x_{n-1})$.  Thus $P(\bx{\lambda})$, upon making it
monic with respect to $x_n$, gives rise to the minimal polynomial of
$x_n = rx_1^{-\lambda_1/\lambda_n}\cdots x_n^{-\lambda_{n-1}/\lambda_n}$.
Therefore, $P(\bx{\lambda})$ divides $p$ in $\set K(x_1,\dots,x_{n-1})[x_n]$.
One thus concludes from Lemma~\ref{LEM:qiltype} that $\bm \lambda$ is
a $q$-integer linear type of $p$, say $\bm\lambda = \bm\lambda_i$ for
some integer $i$ with $1\leq i \leq m$, and then $P$ divides $P_i$ in
$\set K[y]$. Notice that $f$ is the minimal polynomial of $r$ and
$P(r^{\lambda_n}) = 0$. So $f$ divides $P(y^{\lambda_n})$ and then
$P_i(y^{\lambda_n})$ in~$\set K[y]$. As $f$ is arbitrary, we
have that $P^*$ divides $P_i(y^{\lambda_n})$ in $\set K[y]$.  Since
both polynomials are $q$-primitive and $\lambda_n>0$, then $P^*$
divides $P_i(y^{\lambda_n})$ in $\R[y]$ by Gau\ss' lemma.

In order to show the first assertion, it remains to verify that
$P_i(y^{\lambda_n})$ divides $P^*$ in $\R[y]$, and then $P^*$ and
$P_i(y^{\lambda_n})$ only differ by a unit in $\R$, yielding the
assertion.

Since $\bm\lambda = \bm\lambda_i$, by a simple calculation, one sees
from \eqref{EQ:multiqild} that $P_i(y^{\lambda_n})$ divides all
coefficients of $p(x_1^{\lambda_n},\dots,x_{n-1}^{\lambda_n},
yx_1^{-\lambda_1}\cdots x_{n-1}^{-\lambda_{n-1}})$ with respect to
$x_1,\dots,x_{n-1}$. By the definition of $P^*$, we obtain that
$P_i(y^{\lambda_n})$ divides $P^*$ in $\R[y]$. This actually also
shows that $P^*\notin \R$ if $\bm\lambda=\bm\lambda_i$ is a
$q$-integer linear type of $p$, because $P_i\notin \R$. This completes
the proof of the second assertion.
\end{proof}

\subsection{Algorithm and example}
Assembling everything together yields our first approach.

\smallskip\noindent{\bf MultivariateQILD$_1$.} Given a polynomial
$p\in\R[\bm x]$, compute its $q$-integer linear decomposition.

\begin{enumerate}
\item If $p \in \R$ then set $c = p$; and return $c$.

  \smallskip
\item Set $c = \cont(p)$ and $f = \prim(p)$. If $\supp(f)$ is a
  singleton then set $\bm\alpha$ to be the only element and update
  $c=cf/\bx{\alpha}$; and return $c\bx{\alpha}$.

  \smallskip
\item If $n=1$ then set $\alpha_1$ to be the lowest degree of $f$ with
  respect to $x_1$, $m=1$, $\lambda_{m1}=1$ and
  $P_m(y)=f(y)/y^{\alpha_1}$; and return
  $c\,x_1^{\alpha_1}\prod_{i=1}^mP_i(x_1^{\lambda_{i1}})$.

  \smallskip
\item Set $\bm\alpha=\bm 0$, $P_0=1$, $m=0$.\\[.5ex]
  For $i = 1,\dots,n$ do
  \begin{itemize}
  \item[4.1] Set $g = \cont_{x_i}(f)$, and update $f = \prim_{x_i}(f)$.
  \item[4.2] If $g \neq 1$ then call the algorithm recursively with
    input $g\in\R[x_1,\dots,x_{i-1},x_{i+1},\dots,x_n]$, returning
    \[
    g=x_1^{\tilde\alpha_1}\cdots\,x_{i-1}^{\tilde\alpha_{i-1}}
    x_{i+1}^{\tilde\alpha_{i+1}}\cdots\,x_n^{\tilde\alpha_n}\tilde
    P_0 \prod_{j=1}^{\tilde m}\tilde
    P_j(x_1^{\tilde\lambda_{j1}}\cdots\,x_{i-1}^{\tilde\lambda_{j,i-1}}
    x_{i+1}^{\tilde\lambda_{j,i+1}}\cdots\,x_n^{\tilde\lambda_{jn}}),
    \]
    update
    $\bm\alpha=\bm\alpha+(\tilde\alpha_1,\dots,\tilde\alpha_{i-1},
    0,\tilde\alpha_{i+1},\dots,\tilde\alpha_n)$, $P_0=P_0\tilde
    P_0$, and for $j=1,\dots, \tilde m$ iteratively update $m=m+1$,
    $\bm\lambda_m=(\tilde\lambda_{j1},\dots,\tilde\lambda_{j,i-1},0,
    \tilde\lambda_{j,i+1},\dots,\tilde\lambda_{jn})$,
    $P_m(y)=\tilde P_j(y)$.
  \end{itemize}

  \smallskip
\item If $\deg(f)=0$ then update $c=cf$; and return $c\,
  \bx{\alpha}P_0 \prod_{i=1}^{m}P_i(\bx{\lambda_i})$.

  \smallskip
\item Find the multiset $\Lambda$ of direction vectors of edges of
  $\newt(f)$ having no zero coordinates.

  \smallskip
\item If $\Lambda$ has more than one element then
\begin{itemize}
\item[7.1] Update $\Lambda$ to be its subset composed of elements with
  multiple occurrences.

\item[7.2] For fixed $\bm v\in\supp(f)$, find the set $\tilde\Lambda$
  consisting of direction vectors of line segments connecting $\bm v$
  and all other points in $\supp(p)$ which have no zero coordinates.

\item[7.3] Update $\Lambda$ to be $\Lambda\cap \tilde \Lambda$.
\end{itemize}
  
\item For $\bm \lambda$ in $\Lambda$ do
  \begin{itemize}
  \item[8.1] Set $P^*(y)$ to be the content of the numerator of
    $f(x_1^{\lambda_n},\dots,x_{n-1}^{\lambda_n},
    yx_1^{-\lambda_1}\cdots\,x_{n-1}^{-\lambda_{n-1}})$ with respect to
    $x_1,\dots,x_{n-1}$.
  \item[8.2] If $\deg(P^*)>0$ then
    \begin{quote}
      Update $m=m+1$, $\bm\lambda_m =\bm\lambda$,
      $P_m(y)=P^*(y^{1/\lambda_n})$.\\
      Set $f^*,g^*\in\R[x_1,\dots,x_n]$ to be the numerator and
      denominator of $P_m(\bx{\lambda})$, and update
      $f = f/f^*$ and
      $\alpha_i=\alpha_i+\deg_{x_i}(g^*)$ for $i=1,\dots,n-1$.
    \end{quote}
  \end{itemize}

\item If $\deg(f)>0$ then update $P_0=P_0 f$ else update $c=cf$.

  \smallskip
\item Return $c\, \bx{\alpha}P_0 \prod_{i=1}^{m}P_i(\bx{\lambda_i})$.
\end{enumerate}

\begin{theorem}\label{THM:multiqild1}
  Let $p\in\R[\bm x]$. Then the algorithm {\bf MultivariateQILD$_1$} 
  terminates and correctly computes the $q$-integer linear 
  decomposition of~$p$.
\end{theorem}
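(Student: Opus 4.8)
The plan is to verify termination and correctness by tracing the recursive structure of the algorithm against the preparatory results already established. For termination, I would observe that each recursive call in step 4.2 is on a polynomial $g$ in strictly fewer variables (it lives in $\R[x_1,\dots,x_{i-1},x_{i+1},\dots,x_n]$), and the base cases $n=0$ (handled in step 1, since then $p\in\R$) and $n=1$ (step 3) terminate outright; since the main loops in steps 4 and 8 are over finite index sets and finite multisets respectively, the only descent parameter that matters is the number of variables, so termination is immediate by induction on $n$.

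For correctness I would argue by induction on $n$ as well. First I would dispose of the trivial exits: step 1 handles $p\in\R$, step 2 handles the case $\supp(f)$ a singleton (where $f=\bx\alpha$ up to a unit absorbed into $c$), and step 3 handles $n=1$, where every univariate polynomial is $q$-integer linear and the stated $\alpha_1,\lambda_{m1},P_m$ clearly give a valid decomposition satisfying Definition~\ref{DEF:qild}. For the inductive step, I would note that after step 4 the polynomial $f$ has been replaced by its primitive part with respect to every variable, with the removed contents $g$ decomposed correctly by the induction hypothesis and their types/exponents re-embedded into $\R[\bm x]$ by padding with zero coordinates (as permitted by condition (3), since such a content does not involve $x_i$); thus $f$ now satisfies the hypotheses $\cont_{x_1}(f)=\dots=\cont_{x_n}(f)=1$ of Lemma~\ref{LEM:newt} and Proposition~\ref{PROP:multiqiltypes}. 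Step 5 handles the case $\deg(f)=0$. Steps 6 and 7 compute exactly the set $\Lambda_1^*\cap\Lambda_2$ (or the single element of $\Lambda_1$) from Proposition~\ref{PROP:multiqiltypes}, which by that proposition is a superset of the genuine $q$-integer linear types of $f$; hence no type is missed. Step 8 then runs Proposition~\ref{PROP:multitype2poly} on each candidate $\bm\lambda$: the content $P^*$ of the numerator of the indicated substitution is nonconstant precisely when $\bm\lambda$ is a true type, in which case $P^*(y^{1/\lambda_n})$ is the corresponding univariate polynomial $P_i$, and this exhausts the whole multiplicity of that factor in $f$ (again by Proposition~\ref{PROP:multitype2poly}, whose proof shows $P^*$ equals $P_i(y^{\lambda_n})$ up to a unit).

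I would then check the bookkeeping in step 8.2: dividing $f$ by the numerator $f^*$ of $P_m(\bx{\lambda})$ removes exactly the factor $P_i(\bx{\lambda_i})$, while the denominator $g^*$ is a monomial in $x_1,\dots,x_{n-1}$ whose exponents are added to $\bm\alpha$ — this is correct because clearing denominators in $yx_1^{-\lambda_1}\cdots x_{n-1}^{-\lambda_{n-1}}$ introduces negative powers that must be compensated by the monomial factor $\bx\alpha$ in \eqref{EQ:multiqild}, and condition $\alpha_n=0$ is preserved since $\lambda_n>0$ forces no $x_n$ in the denominator. After the loop, whatever remains in $f$ has, by Proposition~\ref{PROP:multiqiltypes} and the completeness of the candidate set, no $q$-integer linear irreducible factor of positive degree; so step 9 correctly places it into $P_0$ (if $\deg(f)>0$) or into $c$ (if it is a unit of $\R$), and the output of step 10 satisfies all four conditions of Definition~\ref{DEF:qild}: $P_0$ is primitive with no $q$-integer linear factor of positive degree, each $P_i$ is $q$-primitive of positive degree (the constant term is nonzero because $P^*$ is a content of the numerator of a substitution with $P(0)\neq 0$ forced by primitivity of $f$ in $x_1$), each $\bm\lambda_i$ is primitive with positive rightmost coordinate by construction of direction vectors, and the $\bm\lambda_i$ are pairwise distinct since $\Lambda$ is processed without repetition and the types coming from different recursive branches live on disjoint coordinate supports or are merged appropriately.

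The main obstacle I anticipate is the verification that step 8 genuinely removes the \emph{full} power of each $q$-integer linear factor from $f$, so that a type $\bm\lambda_i$ is never encountered twice and the leftover $f$ is truly free of $q$-integer linear factors of positive degree; this rests on the sharper half of Proposition~\ref{PROP:multitype2poly} (that $P^*$ is associate to $P_i(y^{\lambda_n})$, not merely divides it), together with the fact that distinct types $\bm\lambda_i\neq\bm\lambda_j$ yield coprime factors $P_i(\bx{\lambda_i})$ and $P_j(\bx{\lambda_j})$ — so dividing out one does not disturb the others. A secondary delicate point is confirming that the exponent vector $\bm\alpha$ and the unit $c$ are tracked consistently across the recursive calls of step 4.2 and the denominator corrections of step 8.2, which is routine but must be stated carefully to conclude that the returned expression is literally of the form \eqref{EQ:multiqild}.
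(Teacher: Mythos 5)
Your argument is correct and follows the same route as the paper, whose entire proof is a one-line appeal to Propositions~\ref{PROP:multiqiltypes} and \ref{PROP:multitype2poly}; you simply spell out the induction on the number of variables, the candidate-generation step, the content-based extraction, and the bookkeeping that the paper leaves as ``evident.'' Your extra care about removing the full multiplicity of each type (via the associate statement in Proposition~\ref{PROP:multitype2poly}) and about distinctness of the types from the recursive content calls is sound and, if anything, more detailed than the published proof.
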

\begin{proof}
  This is evident by Propositions~\ref{PROP:multiqiltypes} and
  \ref{PROP:multitype2poly}.
\end{proof}

\begin{remark}\label{REM:multinonqil}
  If one is merely interested in only determining the $q$-integer
  linearity of the input polynomial $p\in\R[\bm x]$, rather than the
  full $q$-integer linear decomposition, then the above algorithm can
  be easily modified: any of the following conditions will trigger the
  adapted algorithm to terminate early, returning that $p$ is not
  $q$-integer linear.
  \begin{itemize}
  \item In Step~4.2, the polynomial $g$ turns out to be
    non-$q$-integer linear.
  
  \item (Proposition~\ref{PROP:multinecessity}) In Step~6, the Newton
    polytope of $f$ is not a zonotope; or there exists an edge of
    $\newt(f)$ whose direction vector has zero coordinates. In
    particular, the support $\supp(f)$ has more than one element whose
    certain coordinate attains the extremum value.

  \item (Proposition~\ref{PROP:multitype2poly}) In Step~8.2, the case
    of $\deg(P^*)=0$ happens, that is, the candidate $\bm \lambda$
    currently under investigation is fake.
    
  \item (Definition~\ref{DEF:qild}) In Step~10, we have $\deg(P_0)>0$.
  \end{itemize}
\end{remark}

\begin{example}\label{EX:multiqild1}
  Consider the polynomial $p\in\set Z[q,q^{-1}][x_1,x_2,x_3,x_4]$ of
  the form
  \begin{align}
  p &= 2q^2x_1^9x_2^{12}x_3^{13}+2qx_1^8x_2^{14}x_3^{13}
  +2qx_1^8x_2^{14}x_3^{12}x_4+18q^2x_1^{11}x_2^8x_3^{16}x_4^5
  +18qx_1^{10}x_2^{10}x_3^{16}x_4^5
  \nonumber\\
  &
  +18qx_1^{10}x_2^{10}x_3^{15}x_4^6-2qx_1^5x_2^{20}x_3^7x_4^7
  -2x_1^4x_2^{22}x_3^7x_4^7-2x_1^4x_2^{22}x_3^6x_4^8
  -18qx_1^7x_2^{16}x_3^{10}x_4^{12}
  \nonumber\\
  &
  -18x_1^6x_2^{18}x_3^{10}x_4^{12}-18x_1^6x_2^{18}x_3^9x_4^{13}
  +7q^2x_1x_2^{28}x_3x_4^{14}+7qx_2^{30}x_3x_4^{14}
  +7qx_2^{30}x_4^{15}
  \nonumber\\
  &
  +6q^4x_1^{15}x_3^{22}x_4^{15}+6q^3x_1^{14}x_2^2x_3^{22}x_4^{15}
  +6q^3x_1^{14}x_2^2x_3^{21}x_4^{16}+63q^2x_1^3x_2^{24}x_3^4x_4^{19}
  \nonumber\\
  &
  +63qx_1^2x_2^{26}x_3^4x_4^{19} +63qx_1^2x_2^{26}x_3^3x_4^{20}
  -6q^3x_1^{11}x_2^8x_3^{16}x_4^{22}-6q^2x_1^{10}x_2^{10}x_3^{16}x_4^{22}
  \nonumber\\
  &
  -6q^2x_1^{10}x_2^{10}x_3^{15}x_4^{23}+21q^4x_1^7x_2^{16}x_3^{10}x_4^{29}
  +21q^3x_1^6x_2^{18}x_3^{10}x_4^{29}+21q^3x_1^6x_2^{18}x_3^9x_4^{30}
  \label{EQ:example}
  \end{align}
  In order to compute the $q$-integer linear decomposition of the
  polynomial $p$ over $\set Z[q,q^{-1}]$, the algorithm
  {\bf MultivariateQILD$_1$} first tries to find candidates for all
  possible $q$-integer linear types of $p$. In this respect, it
  computes the Newton polytope of $p$ from its support $\supp(p)$,
  which can be readily read out from \eqref{EQ:example}, and finds
  that $\newt(p)$ possesses 11 vertices:
  \begin{align*}
    &\Big\{\,
    v_0 := (9, 12, 13, 0), v_1 := (8, 14, 13, 0), v_2 := (8, 14, 12, 1),
    v_3 := (1, 28, 1, 14), \\
    &\ \  v_4 := (0, 30, 1, 14),v_5 := (0, 30, 0, 15), v_6 := (15, 0, 22, 15), 
   v_7 := (14, 2, 22, 15), \\
    &\ \ v_8 := (7, 16, 10, 29), v_9 := (6, 18, 10, 29),v_{10} := (6, 18, 9, 30)\,\Big.\Big\},
  \end{align*}
  and 19 edges:
   \begin{align*}
  \Big\{\,&[v_1,v_4], [v_4,v_9], [v_7, v_9], [v_1,v_7], [v_4,v_5], [v_1,v_2], 
  [v_2,v_5], [v_5,v_{10}], [v_9,v_{10}], [v_0,v_2], \\
  &[v_0,v_1], [v_6,v_7], [v_6,v_8], [v_8,v_9], [v_0,v_3], [v_3,v_5], [v_0,v_6],
  [v_3,v_8], [v_8,v_{10}]\,\Big\}.
  \end{align*}
  Based on Proposition~\ref{PROP:multiqiltypes} (namely Steps~6-7), one
  obtains three candidates for $q$-integer linear types of $p$, that is,
  $(-1,2,-1,1),\ (2,-4,3,5),\ (-4,8,-6,7)$.
  A subsequent content computation for each candidate finally leads to the
  following $q$-integer linear decomposition
  \begin{equation}\label{EQ:exampleqild}
    p = x_1^8x_2^{12}x_3^{12}\cdot P_0\cdot P_1(x_1^2x_2^{-4}x_3^3x_4^5)
    \cdot P_2(x_1^{-4}x_2^8x_3^{-6}x_4^7),
  \end{equation}
  where $P_0=qx_1x_3+x_2^2x_3+x_2^2x_4$, $P_1(y) = 3q^2y^3+qy+1$ and
  $P_2(y) =7qy^2-2y+2q$.
  
  Notice that there are two elements in the support $\supp(p)$ (namely
  the exponent vectors of the first two monomials in \eqref{EQ:example})
  attaining the minimum value of $x_4$. One thus immediately sees from
  Proposition~\ref{PROP:multinecessity} that the given polynomial $p$
  is not $q$-integer linear. Also, the candidate $(-1,2,-1,1)$ turns
  out to be fake, implying, once again, the non-$q$-integer linearity
  of~$p$.
\end{example}

\section{$q$-Integer linear decomposition: the second approach}
\label{SEC:2ndapproach}
In this section we present our second approach for computing the
$q$-integer linear decomposition of a polynomial in an arbitrary
number of variables. This approach uses a bivariate-based
scheme, where the base bivariate case is tackled by the first approach 
from the preceding section. In order to describe it concisely, we need 
a $q$-analogue of \citep[Proposition~7]{AbPe2002a}. To this end, we 
require two technical lemmas. The first one corresponds to
\citep[Lemma~2]{AbPe2002a} but restricted to the case of Laurent
polynomials.
\begin{lemma}\label{LEM:qconstant}
  Let $p \in \R[x,x^{-1}]$ be a nonzero Laurent polynomial. If there
  exists a nonzero integer $a$ and a nonzero element $c\in \R$ such
  that $p(q^a x) = c p(x)$, then $c=q^{a m}$ for some $m\in \set Z$
  and $p(x)/x^m\in \R$.
\end{lemma}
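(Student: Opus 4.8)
The plan is a direct coefficient comparison, the $q$-dilation analogue of what one does with ordinary shifts. Since $p\neq 0$, write it in Laurent form $p(x)=\sum_{i\in\supp(p)}a_ix^i$ with all $a_i\in\R\setminus\{0\}$, and let $m=\min\supp(p)$. First I would substitute $x\mapsto q^ax$ to get $p(q^ax)=\sum_i a_iq^{ai}x^i$, and, matching this term by term against $c\,p(x)=\sum_i ca_ix^i$, deduce that $a_i\bigl(q^{ai}-c\bigr)=0$ in $\R$ for every $i\in\supp(p)$.

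The crucial observation is that $\R=\D[q,q^{-1}]$ is an integral domain --- because $\D$ is a UFD of characteristic zero --- in which distinct integer powers of $q$ remain distinct, $q$ being transcendental over $\D$. Evaluating the relation at the index $i=m$, where $a_m\neq 0$, cancellation in the domain $\R$ gives $c=q^{am}$; this already shows $c$ is a power of $q$. Substituting this value back, every $i\in\supp(p)$ must satisfy $q^{ai}=q^{am}$, whence $ai=am$, and since $a\neq 0$ we conclude $i=m$. Thus $\supp(p)=\{m\}$.

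Consequently $p(x)=a_mx^m$ is a single monomial, so $p(x)/x^m=a_m\in\R$ and $c=q^{am}$ with this same $m$, which is exactly the assertion. I do not foresee any genuine obstacle here: the argument is a one-line linear-independence computation, and the only facts that need to be invoked are that $\R$ is a domain and that an equality of two powers of $q$ in $\R$ forces equality of the exponents --- both immediate from the standing hypotheses on $\D$ and $q$.
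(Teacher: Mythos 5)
Your argument is correct and is essentially the same as the paper's: both compare coefficients in $p(q^a x)=c\,p(x)$ to get $q^{ai}=c$ for every $i\in\supp(p)$, then use that $\R=\D[q,q^{-1}]$ is a domain with $q$ transcendental (and $a\neq 0$) to force all exponents to coincide, so $p$ is a single monomial and $c=q^{am}$. The only cosmetic difference is that you anchor at $m=\min\supp(p)$ while the paper compares two arbitrary monomials directly.
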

\begin{proof}
  The assertion is clear if $p$ has only one monomial. Otherwise, let
  $x^i$ and $x^j$ with $i,j\in\set Z$ be two monomials of $p$.
  Extracting their coefficients in the identity $p(q^ax) = cp(x)$
  gives $q^{ai} = c = q^{aj}$. Thus $c$ has the form $q^{ai}$ for some
  $i\in\set Z$ and all the exponents $j$ of the monomials in $p$
  satisfy $a(j-i) = 0$, yielding $j = i$ as $a$ is nonzero. The lemma
  follows.
\end{proof}
Evidently, the above lemma remains valid by replacing the ring $\R$
with any of its ring extensions which is independent of the variable
$x$, or changing the variable $x$ to any its rational power $x^r$ for
$r\in\set Q$.  The next lemma plays the role of \citep[Lemma~3]{AbPe2002a}
in the $q$-shift setting, which describes a nice structure of
$q$-shift invariant bivariate polynomials.
\begin{lemma}\label{LEM:biqinvariant}
  Let $p\in \R[x,y]$. If there exists $c\in \R$ and $a,b\in \set Z$,
  not both zero, such that $p(q^ax,q^by)=cp(x,y)$, then there is a
  univariate polynomial $P\in \R[y]$ and four integers
  $\alpha,\beta,\lambda,\mu$ with $\lambda,\mu$ not both zero such
  that $p=x^\alpha y^\beta P(x^\lambda y^\mu)$.
\end{lemma}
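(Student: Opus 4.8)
The plan is to reduce to a one–variable statement via a monomial substitution that turns the vector $(a,b)$ into a single coordinate direction, apply Lemma~\ref{LEM:qconstant}, and then undo the substitution. First I would dispose of the trivial cases: if $p$ is a monomial the claim is immediate (take $P$ constant), and if one of $a,b$ is zero, say $b=0$ so $a\neq 0$, then viewing $p$ as a polynomial in $x$ over the ring $\R[y,y^{-1}]$ and applying Lemma~\ref{LEM:qconstant} (in the form noted right after its proof, with $\R$ replaced by $\R[y,y^{-1}]$) gives $p = x^\alpha u(y)$ with $u\in\R[y,y^{-1}]$; clearing the denominator in $y$ and absorbing it into $y^\beta$ yields the desired form with $\lambda=1,\mu=0$.

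Now assume $a,b$ are both nonzero. Set $d=\gcd(a,b)$ and write $a=d a'$, $b=d b'$ with $\gcd(a',b')=1$. Choose integers $u,v$ with $u b' - v a' = 1$ (Bézout), and consider the monomial change of variables $x = X^{b'} Y^{v}$, $y = X^{a'} Y^{u}$; this is invertible over the Laurent monomials since the matrix $\left(\begin{smallmatrix} b' & v\\ a' & u\end{smallmatrix}\right)$ has determinant $b'u - a'v = 1$. Under this substitution the scaling $(x,y)\mapsto(q^a x, q^b y)$ becomes, in the $(X,Y)$ coordinates, $X\mapsto q^{\,?}X$, $Y\mapsto q^{\,?}Y$; a direct check shows the $Y$-exponent of the scaling is $b' b - a' a$ up to the $d$ factor — more precisely, writing things out, the transformation sends the direction $(a,b)=d(a',b')$ to a pure $X$-direction, so $q$ enters only through $X$: we get $\tilde p(q^{d} X, Y) = c\,\tilde p(X,Y)$ where $\tilde p(X,Y) = p(X^{b'}Y^{v}, X^{a'}Y^{u}) \in \R[X,X^{-1},Y,Y^{-1}]$. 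Viewing $\tilde p$ as a Laurent polynomial in $X$ over $\R[Y,Y^{-1}]$, Lemma~\ref{LEM:qconstant} (again in the extended-ring form) gives $c = q^{dm}$ for some $m\in\set Z$ and $\tilde p(X,Y)/X^{m} \in \R[Y,Y^{-1}]$, i.e.\ $\tilde p(X,Y) = X^{m} Q(Y)$ for some $Q\in\R[Y,Y^{-1}]$.

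It remains to translate this back. Invert the substitution: $X = x^{u} y^{-v}$, $Y = x^{-a'} y^{b'}$ (the inverse monomial map), so $p(x,y) = \tilde p(x^{u}y^{-v}, x^{-a'}y^{b'}) = (x^{u}y^{-v})^{m}\, Q(x^{-a'}y^{b'})$. Since $Q$ is a Laurent polynomial in one variable, $Q(T) = T^{r} P(T)$ for some $r\in\set Z$ and some genuine polynomial $P\in\R[y]$ with $P(0)\neq 0$; substituting and collecting the monomial prefactors gives $p(x,y) = x^{\alpha} y^{\beta} P(x^{\lambda}y^{\mu})$ with $(\lambda,\mu) = (-a',b')$, which is nonzero since $(a',b')\neq(0,0)$, and with $\alpha = um - a' r$, $\beta = -vm + b' r \in\set Z$. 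Finally, because $p\in\R[x,y]$ is an honest polynomial, all exponents appearing on the right must be nonnegative, so $\alpha,\beta\in\set N$ automatically, and $P$ is a genuine polynomial as required.

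The main obstacle is bookkeeping in the second paragraph: one must verify that the chosen unimodular monomial substitution really does collapse the scaling direction $(a,b)$ onto a single coordinate axis (so that Lemma~\ref{LEM:qconstant} applies cleanly in one variable), and one must track the powers of $q$ carefully to land on $c = q^{dm}$ rather than some other power. Everything else — the trivial cases, invertibility of the substitution, and reassembling the final monomial factors — is routine, though one should state explicitly that Lemma~\ref{LEM:qconstant} is being used over the coefficient ring $\R[Y,Y^{-1}]$, as permitted by the remark following that lemma.
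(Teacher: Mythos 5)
Your overall strategy---collapse the direction $(a,b)$ by a unimodular monomial substitution, apply Lemma~\ref{LEM:qconstant} in one variable over an extended coefficient ring, and translate back---is sound, and it is close in spirit to the paper's own proof (which substitutes $y\mapsto yx^{b/a}$, works with fractional powers of $x$, and reads off that $ai+bj$ is constant on $\supp(p)$). However, the step you yourself flag as the main obstacle fails as written. With $x=X^{b'}Y^{v}$, $y=X^{a'}Y^{u}$ and $ub'-va'=1$, a monomial $x^iy^j$ becomes $X^{b'i+a'j}Y^{vi+uj}$, while the $q$-scaling multiplies it by $q^{ai+bj}=q^{d(a'i+b'j)}$; since $b'i+a'j$ is not proportional to $a'i+b'j$, the scaling does \emph{not} collapse onto the $X$-direction. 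In the new coordinates it is $(X,Y)\mapsto\big(q^{d(ua'-vb')}X,\;q^{d(b'^2-a'^2)}Y\big)$, so the central identity $\tilde p(q^{d}X,Y)=c\,\tilde p(X,Y)$ is false in general and the one-variable Lemma~\ref{LEM:qconstant} does not apply. Concretely, take $a=1$, $b=2$, $p=x^2+y$ (so $p(qx,q^2y)=q^2p$) and $u=v=1$: then $\tilde p=X^4Y^2+XY$, which is not of the form $X^mQ(Y)$. The error propagates to your conclusion: the type you output, $(\lambda,\mu)=(-a',b')$, violates the necessary condition $a\lambda+b\mu=0$ (for $p=x^2+y$ the correct type is $(-2,1)=(-b',a')$, not $(-1,2)$).

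The repair is to transpose your exponent choices: take $x=X^{a'}Y^{v}$, $y=X^{b'}Y^{u}$ with $ua'-vb'=1$, so that $x^iy^j=X^{a'i+b'j}Y^{vi+uj}$ and the scaling weight $d(a'i+b'j)$ sits entirely on the $X$-exponent; then $\tilde p(q^{d}X,Y)=c\,\tilde p(X,Y)$ does hold, Lemma~\ref{LEM:qconstant} gives $\tilde p=X^{m}Q(Y)$, and back-substituting $X=x^{u}y^{-v}$, $Y=x^{-b'}y^{a'}$ yields a type proportional to $(-b',a')$, as it must. (Similarly, in your degenerate case $b=0$ you arrive at $p=x^{\alpha}y^{\beta}P(y)$, which is $(\lambda,\mu)=(0,1)$, not $(1,0)$.) Note also that the Bézout bookkeeping can be avoided entirely, as in the paper: comparing coefficients in $p(q^ax,q^by)=cp(x,y)$ shows $q^{ai+bj}=c$ for every $(i,j)\in\supp(p)$, so $ai+bj$ is constant on $\supp(p)$, and with $x^{\alpha}y^{\beta}$ the trailing monomial and $(\lambda,\mu)$ coprime satisfying $a\lambda+b\mu=0$ one gets $\supp(p)\subset\{(\alpha,\beta)+k(\lambda,\mu)\mid k\in\set N\}$ directly, which is the desired form.
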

\begin{proof}
  Without loss of generality, we assume that $a$ is nonzero. Otherwise,
  we can switch the roles of $x$ and $y$ in the following proof.
  Define $h(x,y) = p(x,yx^{b/a})$. Then $h \in \R[x^{1/a},x^{-1/a},y]$ and
  $p(x,y) = h(x,yx^{-b/a})$. Using $p(q^ax,q^by)=cp(x,y)$, a simple 
  calculation shows that $h(q^ax,y) = p(q^ax,q^byx^{b/a}) = ch(x,y)$.
  Viewing $h$ as a Laurent polynomial in $x^{1/a}$ over $\R[y]$, 
  Lemma~\ref{LEM:qconstant} implies that $h/x^{m/a}\in \R[y]$ for 
  some $m\in\set Z$. From the definition of $h$ we have that
  $i + (b/a)j = m/a$ for all $(i,j) \in\supp(p)$. Let $x^\alpha y^\beta$
  with $\alpha,\beta\in\set N$ be the trailing monomial in $p$, and 
  let $\lambda,\mu\in\set Z$ be such that $\lambda/\mu=-b/a$,
  $\gcd(\lambda,\mu) = 1$ and $\mu>0$. Then
  $\mu(i-\alpha) = \lambda (j-\beta)$ for all $(i,j) \in\supp(p)$. 
  By the coprimeness of $\lambda$ and $\mu$, one obtains that 
  for any $(i,j)\in\supp(p)$, there exists $k\in\set N$
  such that $(i,j) = (\alpha,\beta) + k(\lambda,\mu)$. 
  It thus follows that $p = x^\alpha y^\beta P(x^\lambda y^\mu)$
  for some $P\in\R[y]$. 
\end{proof}

From the above lemma, we are then able to establish the fact that the
problem of multivariate $q$-integer linearity is made up of a
collection of subproblems of bivariate $q$-integer linearity.
\begin{proposition}\label{PROP:multi2bi}
  Let $p\in \R[\bm x]$. Then there exists a univariate
  polynomial $P\in \R[y]$ and two vectors $\bm\alpha\in\set N^n$,
  $\bm\lambda\in\set Z^n\setminus\{\bm 0\}$ such that $p=\bx{\alpha}
  P(\bx{\lambda})$ if and only if for each pair $(i,j)$ with $1\leq
  i<j\leq n$, there is a polynomial $P_{ij}(y)\in
  \R[x_1,\dots,x_{i-1},x_{i+1},\dots,x_{j-1},x_{j+1},\dots,x_n][y]$
  and four integers $\beta_{ij},\beta_{ji},\mu_{ij},\mu_{ji}$ with
  $\mu_{ij},\mu_{ji}$ not both zero such that
  $p=x_i^{\beta_{ij}}x_j^{\beta_{ji}}P_{ij}(x_i^{\mu_{ij}}x_j^{\mu_{ji}})$.
\end{proposition}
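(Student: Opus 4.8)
The plan is to establish the two implications separately. The forward one is essentially a bookkeeping computation on supports; the backward one---reconstructing a single global integer-linear direction out of the $\binom{n}{2}$ planar ones supplied by the $P_{ij}$---is where the work lies.

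For the forward direction, assume $p=\bx{\alpha}P(\bx{\lambda})$ and discard at once the trivial cases $p=0$ and $p$ a monomial. Fix $i<j$. Every $\bm v\in\supp(p)$ has the form $\bm v=\bm\alpha+k\bm\lambda$ with $k$ ranging over a finite subset of $\set N$. If $(\lambda_i,\lambda_j)\neq(0,0)$, then grouping the factors coming from the other variables rewrites $p$ as $x_i^{\alpha_i}x_j^{\alpha_j}P_{ij}(x_i^{\lambda_i}x_j^{\lambda_j})$, where the $y$-coefficients of $P_{ij}$ lie in $\R[x_1,\dots,x_{i-1},x_{i+1},\dots,x_{j-1},x_{j+1},\dots,x_n]$ and are honest polynomials because every exponent $\alpha_l+k\lambda_l$ that occurs is nonnegative. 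If instead $\lambda_i=\lambda_j=0$, then $x_i$ and $x_j$ occur in every monomial of $p$ to the fixed powers $\alpha_i$ and $\alpha_j$, so $p=x_i^{\alpha_i}x_j^{\alpha_j}g$ with $g$ a polynomial in the remaining variables, and one takes $\mu_{ij}=1$, $\mu_{ji}=0$, $\beta_{ij}=0$, $\beta_{ji}=\alpha_j$ and $P_{ij}(y)=gy^{\alpha_i}$. (This is, of course, just the ``$\Rightarrow$'' content of Lemma~\ref{LEM:biqinvariant} read off in each pair of variables over the subring generated by the others, which is legitimate by the remark following Lemma~\ref{LEM:qconstant}.)

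For the backward direction I would reduce it to a purely combinatorial fact: \emph{a finite subset of $\set Z^n$ all of whose projections to coordinate pairs are collinear is itself collinear}. Write $\pi_{ij}:\set R^n\to\set R^2$ for the projection onto the $i$th and $j$th coordinates. Expanding $p=x_i^{\beta_{ij}}x_j^{\beta_{ji}}P_{ij}(x_i^{\mu_{ij}}x_j^{\mu_{ji}})$ shows that the $(i,j)$-part of every exponent vector of $p$ is $(\beta_{ij},\beta_{ji})+k(\mu_{ij},\mu_{ji})$ for some integer $k$, so the hypothesis with $(\mu_{ij},\mu_{ji})\neq(0,0)$ says precisely that $\pi_{ij}(\supp(p))$ lies on a line. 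Granting the combinatorial fact, $\supp(p)$ then lies on an affine line $\ell$ in $\set R^n$ (the cases $|\supp(p)|\le1$ being immediate). Since $\ell$ contains at least two points of $\set Z^n$, it has a primitive integer direction vector $\bm\lambda\in\set Z^n\setminus\{\bm 0\}$; fixing $\bm v_0\in\supp(p)$, each $\bm v\in\supp(p)$ equals $\bm v_0+k_{\bm v}\bm\lambda$ for a unique $k_{\bm v}\in\set Z$ (integrality following from primitivity of $\bm\lambda$). Choosing $\bm\alpha\in\supp(p)$ to minimise $k_{\bm v}$ and replacing $\bm v_0$ by $\bm\alpha$, we obtain $\bm\alpha\in\set N^n$, every $\bm v\in\supp(p)$ of the form $\bm\alpha+k_{\bm v}\bm\lambda$ with $k_{\bm v}\in\set N$, and hence $p=\bx{\alpha}P(\bx{\lambda})$ with $P(y)=\sum_{\bm v\in\supp(p)}a_{\bm v}y^{k_{\bm v}}\in\R[y]$. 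The combinatorial fact itself I would prove by contradiction: a non-collinear finite set contains three affinely independent points $\bm u,\bm v,\bm w$, so $\bm a=\bm v-\bm u$ and $\bm b=\bm w-\bm u$ are linearly independent, whence the $2\times n$ matrix with rows $\bm a^\top,\bm b^\top$ has some nonzero $2\times 2$ minor $a_ib_j-a_jb_i$; but then $\pi_{ij}(\bm u),\pi_{ij}(\bm v),\pi_{ij}(\bm w)$ fail to be collinear, contradicting the assumption for the pair $(i,j)$.

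I do not expect a severe obstacle. The linear-algebra observation in the last step---that two vectors are linearly independent iff some $2\times 2$ minor of the matrix they form is nonzero---is the crux, and it is elementary; the remaining care goes into the degenerate configurations ($p=0$ or a monomial, variables absent from $p$) and into confirming that the extracted data genuinely satisfies $\bm\alpha\in\set N^n$ and $P\in\R[y]$ rather than $\R[y,y^{-1}]$. I would also note that the route presumably anticipated by the two technical lemmas is instead an induction on $n$: extract a bivariate integer-linear structure in $x_1,x_2$ via Lemma~\ref{LEM:biqinvariant} and reconcile it with the pairs $(1,k)$ for $k\ge 3$. That route works too, but its variable-reduction case---when some $\mu_{1k}$ vanishes and $x_k$ is thereby pinned to a fixed power---forces one to verify that the polynomial obtained after dividing out that power of $x_k$ still satisfies every pairwise hypothesis, a bookkeeping step that the combinatorial argument sidesteps.
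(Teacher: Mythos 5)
Your proof is correct, but it follows a genuinely different route from the paper's. The paper proves the sufficiency by induction on the number of variables: it applies the induction hypothesis to $p$ viewed over $\R[x_n]$, converts the pairwise hypothesis for the pair $(1,n)$ into a $q$-shift invariance $p(q^{\mu_{n1}}x_1,\dots,q^{-\mu_{1n}}x_n)=cp$, and then invokes Lemma~\ref{LEM:biqinvariant} (hence also Lemma~\ref{LEM:qconstant}) to recover the bivariate structure of the inner polynomial $P^*$ and compose the two decompositions. You instead observe that both the hypothesis and the conclusion are purely statements about $\supp(p)$: the pairwise form forces every coordinate-pair projection $\pi_{ij}(\supp(p))$ onto a line, and your combinatorial lemma (a finite subset of $\set Z^n$ with all pairwise projections collinear is collinear, proved via a nonzero $2\times 2$ minor of the matrix of difference vectors) lets you pick a primitive integer direction $\bm\lambda$ and a minimal base point $\bm\alpha\in\supp(p)$ to assemble $p=\bx{\alpha}P(\bx{\lambda})$ directly; your checks that the $k_{\bm v}$ are integers (primitivity), that they can be made nonnegative, and that the degenerate cases $|\supp(p)|\le 1$ are trivial are all sound, as is your support-grouping argument for necessity. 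What your route buys is elementarity and generality: it needs no induction, never uses the parameter $q$ or any property of $\R$ beyond it being a (commutative) coefficient ring, and sidesteps exactly the variable-degeneration bookkeeping you flag at the end. What the paper's route buys is economy within its own development: Lemmas~\ref{LEM:qconstant} and \ref{LEM:biqinvariant} are needed anyway (e.g.\ for the discussion of Le's algorithm), so the induction reuses machinery already in place and mirrors the ordinary-shift argument of Abramov--Petkov\v{s}ek that the proposition is modeled on.
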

\begin{proof}
  The necessity is clear. For the sufficiency, we proceed by induction
  on the number $n$ of variables. There is nothing to show in the base
  case where $n=1$. Assume that $n>1$ and the assertion holds for~$n-1$.

  Consider $p$ as a polynomial in $x_1,\dots,x_{n-1}$ over
  $\R[x_n]$. By the induction hypothesis, there is a polynomial
  $P^*(y)\in \R[x_n][y]$ and two vectors
  $(\alpha_1^*,\dots,\alpha_{n-1}^*)\in\set N^{n-1}$,
  $(\lambda_1^*,\dots,\lambda_{n-1}^*)\in\set Z^{n-1}$ with the
  $\lambda_i^*$ not all zero such that
  \[
  p(x_n)(x_1,\dots,x_{n-1})=x_1^{\alpha_1^*}\cdots\,x_{n-1}^{\alpha_{n-1}^*}
  P^*(x_1^{\lambda_1^*}\cdots\,x_{n-1}^{\lambda_{n-1}^*}).
  \]
  We may assume without loss of generality that $\lambda_1^*\neq 0$.
  Regarding $P^*$ as an element of $\R[y,x_n]$, we rewrite the
  preceding equation as
  \begin{equation}\label{EQ:n-1}
  p(x_1,\dots,x_n)=x_1^{\alpha_1^*}\cdots\, x_{n-1}^{\alpha_{n-1}^*}
  P^*(x_1^{\lambda_1^*}\cdots\,x_{n-1}^{\lambda_{n-1}^*},x_n).
  \end{equation}
  By taking $i=1$ and $j=n$ in the assumption, we know that
  $p=x_1^{\beta_{1n}}x_n^{\beta_{n1}}P_{1n}(x_1^{\mu_{1n}}x_n^{\mu_{n1}})$
  for $P_{1n}\in \R[x_2,\dots,x_{n-1}][y]$ and
  $\beta_{1n},\beta_{n1},\mu_{1n},\mu_{n1}\in \set Z$ with
  $\mu_{1n},\mu_{n1}$ not both zero. Therefore,
  \[
  p(q^{\mu_{n1}}x_1,x_2,\dots,x_{n-1},q^{-\mu_{1n}}x_n)
  =cp(x_1,\dots,x_n) \quad\text{with}\
  c = q^{\beta_{1n}\mu_{n1}-\beta_{n1}\mu_{1n}}\in \R.
  \]
  It follows from \eqref{EQ:n-1} that
  $P^*(q^{\mu_{n1}\lambda_1^*}x_1^{\lambda_1^*}\cdots\,
  x_{n-1}^{\lambda_{n-1}^*},q^{-\mu_{1n}}x_n)
  =cq^{-\mu_{n1}\alpha_1^*}P^*(x_1^{\lambda_1^*}\cdots\,x_{n-1}^{\lambda_{n-1}^*},x_n)$,
  that is,
  \[  
  P^*(q^{\mu_{n1}\lambda_1^*}y,q^{-\mu_{1n}}x_n)=c q^{-\mu_{n1}\alpha_1^*}P^*(y,x_n).
  \]
  Applying Lemma~\ref{LEM:biqinvariant} to $P^*(y,x_n)$ yields that
  there is a univariate polynomial $P\in \R[y]$ and four integers
  $\alpha_n,\alpha_n^*,\lambda_n,\lambda_n^*$ with
  $\lambda_n,\lambda_n^*$ not both zero such that
  $P^*(y,x_n)=y^{\alpha_n^*}x_n^{\alpha_n}P(y^{\lambda_n^*}x_n^{\lambda_n})$.
  Substituting $y=x_1^{\lambda_1^*}\cdots\,x_{n-1}^{\lambda_{n-1}^*}$
  into this equation, together with \eqref{EQ:n-1}, implies that
  $p=\bx{\alpha}P(\bx{\lambda})$ with
  $\bm\alpha=(\alpha_1^*+\lambda_1^*\alpha_n^*,\dots,
  \alpha_{n-1}^*+\lambda_{n-1}^*\alpha_n^*,\alpha_n)$ and
  $\bm\lambda=(\lambda_1^*\lambda_n^*,\dots,\lambda_{n-1}^*\lambda_n^*,
  \lambda_n)$. The proof follows by noticing that $\bm\lambda\neq\bm 0$.
\end{proof}

Inspired by the above proposition, we propose an algorithm which takes
a multivariate polynomial as input and computes its $q$-integer linear
decomposition in an iterative fashion.  At each iteration step, only
two variables are used with the others treated as coefficient
parameters.

\smallskip\noindent{\bf MultivariateQILD$_2$.} Given a polynomial
$p\in\R[\bm x]$, compute its $q$-integer linear decomposition.

\begin{enumerate}
\item If $p \in \R$ then set $c = p$; and return $c$.

  \smallskip
\item Set $c = \cont(p)$ and $f = \prim(p)$. If $\supp(f)$ is a
  singleton then set $\bm\alpha$ to be the only element and update
  $c=cf/\bx{\alpha}$; and return $c\bx{\alpha}$.

  \smallskip
\item If $n=1$ then set $\alpha_1$ to be the lowest degree of $f$ with
  respect to $x_1$, $m=1$, $\lambda_{m1}=1$ and
  $P_m(y)=f(y)/y^{\alpha_1}$; and return
  $c\,x_1^{\alpha_1}\prod_{i=1}^mP_i(x_1^{\lambda_{i1}})$.

  \smallskip
\item If $n=2$ then call the algorithm {\bf MultivariateQILD$_1$} with
  input $f\in \R[x_1,x_2]$ to compute its $q$-integer linear
  decomposition
  \[
  f=x_1^{\alpha_1}x_2^{\alpha_2}P_0
  \prod_{i=1}^{m}P_i(x_1^{\lambda_{i1}}x_2^{\lambda_{i2}});
  \] 
  and then return $c\, x_1^{\alpha_1}x_2^{\alpha_2}P_0
  \prod_{i=1}^{m}P_i(x_1^{\lambda_{i1}}x_2^{\lambda_{i2}})$.

  \smallskip
\item Set $\bm\alpha=\bm 0$, $P_0=1$, $m=0$ and $g =
  \cont_{x_1,x_2}(f)$, and update $f = \prim_{x_1,x_2}(f)$.

  \smallskip
\item If $g \neq 1$ then call the algorithm recursively with input
  $g\in \R[x_3,\dots,x_n]$, returning
  \[
  g=x_3^{\tilde\alpha_3}\cdots\,x_n^{\tilde\alpha_n}\tilde P_0
  \prod_{i=1}^{\tilde m}\tilde
  P_i(x_3^{\tilde\lambda_{i3}}\cdots\,x_n^{\tilde\lambda_{in}}),
  \]
  update $\bm\alpha=\bm\alpha+(0,0,\tilde \alpha_3,\dots,\tilde
  \alpha_n)$, $P_0=P_0 \tilde P_0$, and for $i=1,\dots, \tilde m$
  iteratively update $m=m+1$, $\bm\lambda_m=(0,0,\tilde
  \lambda_{i3},\dots,\tilde\lambda_{in})$, $P_m(y)=\tilde P_i(y)$.

  \smallskip
\item If $\supp(f)$ is a singleton then set $\bm\alpha^*$ to be the
  only element and update $\bm\alpha=\bm\alpha+\bm\alpha^*$,
  $c=cf/\bx{\alpha^*}$; and return $c\, \bx{\alpha}P_0
  \prod_{i=1}^{m}P_i(\bx{\lambda_i})$.

  \smallskip
\item Set $\Lambda_1 = \{\big((1),f(y,x_2,\dots,x_n)\big)\}$.\\[.5ex]  
  For $k=1,\dots,n-1$ do
  \begin{itemize}
  \item[8.1] Set $\Lambda_{k+1}=\{\}$.
  \item[8.2] For
    $\big((\mu_1,\dots,\mu_k),h(y,x_{k+1},\dots,x_n)\big)$ in
    $\Lambda_k$ do
  \begin{quote}
    Call the algorithm {\bf MultivariateQILD$_1$} with input $h\in
    \R[x_{k+2},\dots,x_n][y, x_{k+1}]$ to compute its $q$-integer
    linear decomposition
    \begin{equation}\label{EQ:hstep8}
      h = y^{\alpha^*}x_{k+1}^{\beta^*}P_0^*\prod_{i=1}^{m^*}
      P_i^*(y^{\lambda_{i}^* } x_{k+1}^{\mu_{i}^*},x_{k+2},\dots,x_n),
    \end{equation}
    where $P_0^*\in \R[y,x_{k+1},\dots,x_n]$ and
    $P_i^*(y,x_{k+2},\dots,x_n) \in \R[y,x_{k+2},\dots,x_n]$; then
    update $\bm\alpha$ by adding the vector
    $(\mu_1\alpha^*,\dots,\mu_k\alpha^*,\beta^*,0,\dots,0)$, update $P_0$
    by multiplying
    $P_0^*(x_1^{\mu_1}\cdots\,x_k^{\mu_k},x_{k+1},\dots,x_n)$ and update
    $\Lambda_{k+1}$ by joining the elements
    $\big((\mu_1\lambda_{i}^*,\dots,\mu_k\lambda_{i}^*,\mu_{i}^*),
    P_i^*(y,x_{k+2},\dots,x_n)\big)$ for $i=1,\dots,m^*$.
  \end{quote}
  \end{itemize}

  \smallskip
\item Set $g\in\R[\bm x]$ to be the denominator of $P_0$.
  Update $P_0$ to be its numerator, update $\alpha_i=\alpha_i-\deg_{x_i}(g)$
  for $i=1,\dots,n-1$, and for $\big(\bm \mu,h(y)\big)$ in
  $\Lambda_{n}$ iteratively update $m = m+1$, $\bm\lambda_m=\bm\mu$
  and $P_m(y) = h(y)$.

  \smallskip
\item Return $c\, \bx{\alpha}P_0 \prod_{i=1}^{m}P_i(\bx{\lambda_i})$.
\end{enumerate}

\begin{theorem}\label{THM:multiqild2}
  Let $p\in \R[\bm x]$. Then the algorithm {\bf
    MultivariateQILD$_2$} correctly computes the $q$-integer linear
  decomposition of $p$.
\end{theorem}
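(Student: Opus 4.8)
The plan is to reduce everything to Definition~\ref{DEF:qild}: it suffices to show that the algorithm returns an expression of the form \eqref{EQ:multiqild} satisfying conditions~(1)--(4), since by the uniqueness recorded right after that definition such an expression is \emph{the} $q$-integer linear decomposition of $p$. I would argue by strong induction on the number $n$ of variables. Termination is immediate: Step~6 recurses on a polynomial in the $n-2$ variables $x_3,\dots,x_n$, Step~8 runs a finite loop whose body invokes {\bf MultivariateQILD$_1$} (which terminates by Theorem~\ref{THM:multiqild1}), and the remaining steps are loop-free. For the base cases, Steps~1--3 are checked by inspection, while Step~4 is precisely Theorem~\ref{THM:multiqild1} together with the trivial remark that multiplying a decomposition of $f=\prim(p)$ by the unit/constant $c=\cont(p)\in\R$ yields one for $p$. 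So assume $n\geq3$ and that the claim holds in fewer variables.

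For $n\geq3$, write $p=c\,f$ with $c=\cont(p)$ and $f=\prim(p)$; the factor $c$ is absorbed into the leading coefficient of \eqref{EQ:multiqild}. Next $f=g\cdot\prim_{x_1,x_2}(f)$ with $g=\cont_{x_1,x_2}(f)\in\R[x_3,\dots,x_n]$, so by the induction hypothesis Step~6 returns a correct decomposition of $g$, whose types have vanishing first two coordinates in accordance with~(3). Replacing $f$ by $\prim_{x_1,x_2}(f)$ we may assume $\cont_{x_1,x_2}(f)=1$, and if $\supp(f)$ is a singleton (Step~7) then $f$ is a monomial and we are done. The core is the loop of Step~8, for which I would prove the following invariant by induction on $k$ ($0\leq k\leq n-1$): after the $k$-th pass, with $\bm\alpha$, $P_0$, $\Lambda_{k+1}$ the current data, one has, as Laurent polynomials,
\[
f=\bx{\alpha}\,P_0\prod_{((\mu_1,\dots,\mu_{k+1}),h)\in\Lambda_{k+1}}h\bigl(x_1^{\mu_1}\cdots x_{k+1}^{\mu_{k+1}},x_{k+2},\dots,x_n\bigr),
\]
where every irreducible factor of positive total degree of $P_0$ fails to be $q$-integer linear over~$\R$; each pair of $\Lambda_{k+1}$ satisfies $\gcd(\mu_1,\dots,\mu_{k+1})=1$ with positive rightmost nonzero entry, and has $h\in\R[y,x_{k+2},\dots,x_n]$ with $h(0,x_{k+2},\dots,x_n)\neq0$ and $\deg_y h>0$; the partial types attached to distinct pairs of $\Lambda_{k+1}$ are distinct; and, crucially, for each pair and each irreducible factor $\pi$ of $h(x_1^{\mu_1}\cdots x_{k+1}^{\mu_{k+1}},x_{k+2},\dots,x_n)$ in $\R[\bm x]$, $\pi$ is $q$-integer linear over~$\R$ with a type whose first $k+1$ coordinates are a positive rational multiple of $(\mu_1,\dots,\mu_{k+1})$, while conversely every $q$-integer linear irreducible factor of $f$ comes from exactly one such $\pi$ and every non-$q$-integer linear one divides $P_0$. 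The case $k=0$ holds trivially for $\Lambda_1=\{((1),f(y,x_2,\dots,x_n))\}$; the inductive step applies Theorem~\ref{THM:multiqild1} to each $h\in\R[x_{k+2},\dots,x_n][y,x_{k+1}]$ to get \eqref{EQ:hstep8}, substitutes back $y=x_1^{\mu_1}\cdots x_k^{\mu_k}$, and verifies that the updates of Step~8.2 transform $\bm\alpha$, $P_0$, $\Lambda_{k+1}$ exactly as the invariant demands (a short $\gcd$ computation gives that the new partial types are again coprime with positive last nonzero entry and remain pairwise distinct). Finally Steps~9--10 push the monomial denominator of $P_0$ into $\bm\alpha$ and read off, from the invariant at $k=n-1$ (where the $h$'s are univariate), an expression of the required form; restoring $c$ gives the decomposition of $p$, and conditions~(1)--(4) are exactly the surviving clauses of the invariant.

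The main obstacle, and where I would concentrate, is the ``crucially'' clause: the chain of bivariate decompositions must be \emph{sound} (a factor that is not $q$-integer linear with respect to the current combined pair $\{y,x_{k+1}\}$ cannot be globally $q$-integer linear, so putting it into $P_0$ is safe) and \emph{complete} (each global $q$-integer linear type is recovered one coordinate at a time with the correct proportionality constants). Soundness is the contrapositive of the ``only if'' direction of Proposition~\ref{PROP:multi2bi}, but one has to apply it \emph{after} the substitution $y\mapsto x_1^{\mu_1}\cdots x_k^{\mu_k}$ and check that this monomial substitution never makes a non-$q$-integer linear polynomial appear $q$-integer linear; completeness is in essence a factored, iterated form of the induction proving Proposition~\ref{PROP:multi2bi} (which only ever uses one variable pair at a time), and the delicate point there is that a factor irreducible over $\R[x_{k+2},\dots,x_n]$ in the variables $y,x_{k+1}$ may split after the substitution, so one must control how the global type of each resulting irreducible piece relates to $(\mu_1,\dots,\mu_{k+1})$ --- which is exactly the content of Lemma~\ref{LEM:sub} (and, one level down, Lemma~\ref{LEM:biqinvariant}).
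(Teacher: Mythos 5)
Your proposal is correct and takes essentially the same route as the paper: the paper's entire proof is the single remark that correctness follows immediately from Proposition~\ref{PROP:multi2bi}, with Theorem~\ref{THM:multiqild1} covering the bivariate calls, which is exactly the skeleton of your argument. Your loop invariant and the soundness/completeness discussion (via Lemmas~\ref{LEM:biqinvariant} and~\ref{LEM:sub}) simply spell out in detail what the paper compresses into that one sentence.
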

\begin{proof}
  The correctness immediately follows from Proposition~\ref{PROP:multi2bi}.
\end{proof}

\begin{example}\label{EX:multiqild2}
  Consider the same polynomial $p$ given by \eqref{EQ:example} as
  Example~\ref{EX:multiqild1}. In order to compute its $q$-integer
  linear decomposition over $\set Z[q,q^{-1}]$, the algorithm
  {\bf MultivariateQILD$_2$} (mainly Step~8) proceeds in the following
  three stages with their respective Newton polytopes plotted in
  Figure~\ref{FIG:newtonpoly2}. Firstly, by viewing $p$ as a
  polynomial in $x_1,x_2$ over $\set Z[q,q^{-1},x_3,x_4]$, applying
  the algorithm {\bf MultivariateQILD$_1$} to $p$ gives
  \begin{equation}\label{EQ:stage1}
    p = x_1^{15}P^{(1)}(x_1^{-1}x_2^2,x_3,x_4)
  \end{equation}
  with 
  \begin{align*}
  P^{(1)}(y,x_3,x_4)&=
  7qy^{15}x_3x_4^{14}+7qy^{15}x_4^{15}+7q^2y^{14}x_3x_4^{14}
  +63qy^{13}x_3^4x_4^{19}+63qy^{13}x_3^3x_4^{20}\\
  &
  +63q^2y^{12}x_3^4x_4^{19}-2y^{11}x_3^7x_4^7-2y^{11}x_3^6x_4^8
  -2qy^{10}x_3^7x_4^7+21q^3y^9x_3^{10}x_4^{29}-18y^9x_3^{10}x_4^{12}\\
  &
  +21q^3y^9x_3^9x_4^{30}-18y^9x_3^9x_4^{13}
  +21q^4y^8x_3^{10}x_4^{29}-18qy^8x_3^{10}x_4^{12}
  +2qy^7x_3^{13}+2qy^7x_3^{12}x_4\\
  &
  +2q^2y^6x_3^{13}-6q^2y^5x_3^{16}x_4^{22}+18qy^5x_3^{16}x_4^5
  -6q^2y^5x_3^{15}x_4^{23}+18qy^5x_3^{15}x_4^6-6q^3y^4x_3^{16}x_4^{22}
  \\
  &
  +18q^2y^4x_3^{16}x_4^5+6q^3yx_3^{22}x_4^{15}
  +6q^3yx_3^{21}x_4^{16}+6q^4x_3^{22}x_4^{15}.
  \end{align*}
  There is only one $q$-integer linear type, namely $(-1,2)$, of $p$
  over $\set Z[q,q^{-1},x_3,x_4]$. Next, with input
  $P^{(1)}(y,x_3,x_4)\in\set Z[q,q^{-1},x_4][y,x_3]$, calling the
  algorithm {\bf MultivariateQILD$_1$} again and substituting
  $y = x_1^{-1}x_2^2$ yields
  \begin{equation}\label{EQ:stage2}
    p = x_2^{28}\cdot P_0\cdot P^{(2)}(x_1^2x_2^{-4}x_3^3,x_4),
  \end{equation}
  where $P_0=qx_1x_3+x_2^2x_3+x_2^2x_4$ and
  $
  P^{(2)}(y,x_4) = 6q^3y^7x_4^{15}-6q^2y^5x_4^{22}+18qy^5x_4^5+2qy^4
  +21q^3y^3x_4^{29} -18y^3x_4^{12}-2y^2x_4^7+63qyx_4^{19}+7qx_4^{14}.
  $
  The vector $(2,-4,3)$ is then the only $q$-integer linear type of
  $p$ over $\set Z[q,q^{-1},x_4]$. Finally, the last call to the
  algorithm {\bf MultivariateQILD$_1$} with input
  $P^{(2)}(y,x_4)\in\set Z[q,q^{-1}][y,x_4]$, along with the
  substitution $y=x_1^2x_2^{-4}x_3^3$, leads to the desired
  decomposition \eqref{EQ:exampleqild}. The two $q$-integer linear
  types $(2,-4,3,5)$ and $(-4,8,-6,7)$ of $p$ over $\set Z[q,q^{-1}]$
  have been correctly recovered.

  From \eqref{EQ:stage1} and \eqref{EQ:stage2}, one sees that $p$ is
  $q$-integer linear over $\set Z[q,q^{-1},x_3,x_4]$ but it is not
  $q$-integer linear over $\set Z[q,q^{-1},x_4]$. This last point
  indicates the non-$q$-integer linearity of $p$ over $\set
  Z[q,q^{-1}]$, even before starting the third stage.
\end{example}

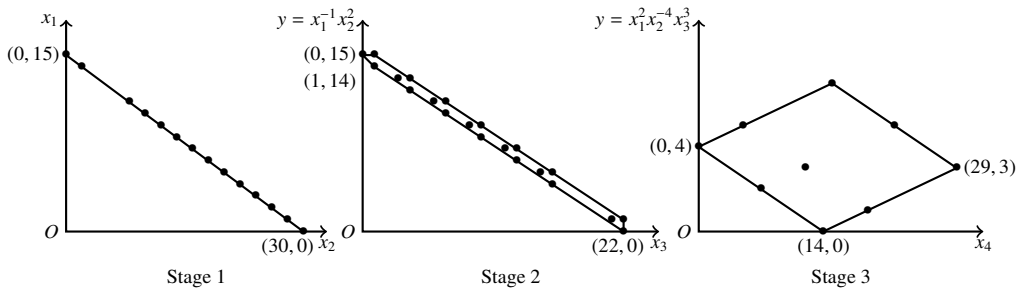
\begin{figure}[t]
    \centering
    \begin{tikzpicture}[scale=.13,every node/.style={scale=.75}]
      \begin{scope}[xscale=.8,yscale=1.2]
      \draw[<->,thick] (0,18) node[left]{$x_1$} -- (0,0) node[left]{$O$} -- (33,0) node[below]{$x_2$};
      \foreach \Point in {(30, 0), (28, 1), (26, 2), (24, 3), (22, 4),
        (20, 5), (18, 6), (16, 7), (14, 8), (12, 9), (10, 10), (8, 11),
        (2, 14), (0, 15)}{\node at \Point {\textbullet};
      }
  	  \node[left] at (0,15) {$(0,15)$};
  	  \node[below] at (28,0) {$(30,0)$};
      \draw[thick] (0,15) -- (30,0);
      \node at (16.5,-4) {Stage 1};
      \end{scope}
      \begin{scope}[xshift=30cm,xscale=1.2,yscale=1.2]
      \draw[<->,thick] (0,18) node[left]{$y=x_1^{-1}x_2^2$} -- (0,0) node[left]{$O$} -- (25,0) node[below]{$x_3$};
      \foreach \Point in {(22, 1), (22, 0), (21, 1), (16, 5), (16, 4),
        (15, 5), (13, 7), (13, 6), (12, 7), (10, 9), (10, 8), (9, 9), (7, 11),
        (7, 10), (6, 11), (4, 13), (4, 12), (3, 13), (1, 15), (1, 14), (0, 15)}{
        \node at \Point {\textbullet};
      }
      \node[left] at (0,15) {$(0,15)$};
      \node[below left] at (0,14) {$(1,14)$};
      \node[below] at (21.5,0) {$(22,0)$};
      \draw[thick] (0,15) -- (1,14) -- (22,0) -- (22,1) -- (1,15) -- (0,15);
      \node at (12.5,-4) {Stage 2};
      \end{scope}
      \begin{scope}[xshift=64cm,xscale=.9,yscale=2.16]
        \draw[<->,thick] (0,10) node[left]{$y=x_1^2x_2^{-4}x_3^3$} -- (0,0) node[left]{$O$} -- (32,0)
        node[below]{$x_4$};
      \foreach \Point in {(29, 3), (22, 5), (19, 1), (15, 7), (14, 0), (12, 3), (7, 2), (5, 5), (0, 4)}{\node at \Point {\textbullet};
      }
      \node[left] at (0,4) {$(0,4)$};
      \node[below] at (14,0) {$(14,0)$};
      \node[right] at (29,3) {$(29,3)$};
      \draw[thick] (0,4) -- (14,0) -- (29,3) -- (15,7) -- (0,4);
      \node at (16,-2.23) {Stage 3};
      \end{scope}
    \end{tikzpicture}
    \caption{Newton polytopes constructed in the three stages in
      Example~\ref{EX:multiqild2}.}\label{FIG:newtonpoly2}
\end{figure}

Once more, similar to Remark~\ref{REM:multinonqil}, the above
algorithm can be easily modified so as to determine the $q$-integer
linearity of a given polynomial only. In other words, the algorithm
can exit early and return a negative answer whenever one of the
following situations occurs.
\begin{itemize}
\item In Step~4 or in any iteration step of Step~8.2, any of the
  triggers listed in Remark~\ref{REM:multinonqil} is touched.
  
\item In Step~6, the polynomial $g$ turns out to be not $q$-integer
  linear.
\end{itemize}

\section{Complexity comparison}\label{SEC:complexity}
In this section, we give complexity analyses for the two algorithms
presented in Sections~\ref{SEC:1stapproach} and \ref{SEC:2ndapproach}
in the case of $\R = \set Z[q,q^{-1}]$.  In addition, we discuss two
more algorithms for the same purpose, namely for computing the
$q$-integer linear decomposition of polynomials, along with their
costs in the bivariate case for the sake of comparison.

\subsection{Complexity background}
We first collect some classical complexity notations and facts needed
in this paper. More background on these can be found in
\citep{vzGGe2013}.

Although our algorithms work in more general UFDs, we confine our
complexity analysis to the case of integer (Laurent) polynomials, that
is, when $\D$ is the ring of integers $\set Z$ and then $\R$ is equal
to $\set Z[q,q^{-1}]$. Here $q$ can be viewed as a variable in
addition to $x_1,\dots,x_n$. Note that operations in $\set Z[q,q^{-1}]$
can be easily transferred to those in $\set Z[q]$ with a negligible
cost. The cost is given in terms of number of word operations used so
that growth of coefficients comes into play. Recall that the
{\em word length} of a nonzero integer $a\in \set Z$ is defined as
$\bigO(\log|a|)$. In this paper, all complexity is analyzed in terms
of a function $\M(d)$ which bounds the cost required to multiply two
integers of word length at most $d$ or polynomials of degree at most
$d$. We take $\M(d) = d^2$ using classical arithmetic and $\M(d) =
\softO(d)$ using fast arithmetic, where the {\em soft-Oh notation}
\lq\lq $\softO$\rq\rq\ is basically \lq\lq $\bigO$\rq\rq\ but
suppressing logarithmic factors (see \cite[Definition~25.8]{vzGGe2013}
for a precise definition).  We assume that $\M$ is subadditive,
superlinear and subquadratic, that is, $\M(a)+\M(b)\leq \M(a+b)$ and
$a\M(b)\leq \M(ab) \leq a^2\M(b)$ for all $a,b\in\set N$.

Throughout this paper, we define the {\em max-norm} $||p||_\infty$ of
a Laurent polynomial $p\in \set Z[q,q^{-1}]$ as the maximum absolute
value of its coefficients with respect to $q$, and the {\em max-norm}
$||p||_\infty$ of a polynomial $p = \sum_{\bm i\in\set N}
p_{i_1,\dots,i_n}\bx{i}\in \set Z[q,q^{-1}][\bm x]$ as
$\max_{\bm i\in\set N}\{||p_{i_1,\dots,i_n}||_\infty\}$.
The GCD computation is fundamental for our algorithms. Before
analyzing the algorithm, let us recall some useful complexity results
on GCD computation.
\begin{lemma}[{\cite[Page 135-139]{Gelf1960}}]\label{LEM:facnorm}
  Let $p_1,\dots,p_m\in\set Z[\bm x]$. Let $p = p_1\cdots p_m$ and let
  $d_i=\deg_{x_i}(p)$ for all $i=1,\dots,n$. Then
  \[
  ||p_1||_\infty\cdots||p_m||_\infty\leq e^{d_1+\dots+d_n}||p||_\infty,
  \]
  where $e$ is the base of the natural logarithm. 
\end{lemma}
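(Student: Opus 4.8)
The plan is to route the argument through the Mahler measure. For a nonzero $g\in\set Z[\bm x]$ put $M(g)=\exp\!\left(\int_{[0,1]^n}\log\bigl|g\bigl(e^{2\pi i\theta_1},\dots,e^{2\pi i\theta_n}\bigr)\bigr|\,d\theta_1\cdots d\theta_n\right)$. Three classical facts about $M$ will be used: it is multiplicative, so $M(p)=\prod_{i=1}^m M(p_i)$; it controls the coefficients, in the sense that every coefficient $c_{\bm j}$ of $g$ obeys Mahler's inequality $|c_{\bm j}|\le\prod_{k=1}^n\binom{\delta_k}{j_k}\,M(g)$ with $\delta_k=\deg_{x_k}(g)$; and it is bounded by the Euclidean norm, $M(g)\le\|g\|_2$ (Landau's inequality), which follows from Jensen's inequality for $\log$ inside the defining integral together with Parseval. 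Taking $j_k=\lfloor\delta_k/2\rfloor$ in Mahler's inequality gives $\|g\|_\infty\le\bigl(\prod_{k}\binom{\delta_k}{\lfloor\delta_k/2\rfloor}\bigr)M(g)$.

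Next I would isolate a purely combinatorial ingredient: the central binomial coefficient is supermultiplicative, i.e.\ $\binom{a}{\lfloor a/2\rfloor}\binom{b}{\lfloor b/2\rfloor}\le\binom{a+b}{\lfloor(a+b)/2\rfloor}$ for all $a,b\in\set N$. This is read off from Vandermonde's identity $\binom{a+b}{r}=\sum_{k}\binom{a}{k}\binom{b}{r-k}$ with $r=\lfloor(a+b)/2\rfloor$: picking $k=\lfloor a/2\rfloor$ one checks case-by-case on the parities of $a$ and $b$ that $r-k\in\{\lfloor b/2\rfloor,\lceil b/2\rceil\}$, so that single nonnegative summand already dominates the left-hand product. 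Iterating this over the $m$ factors, and using that $\deg_{x_k}$ is additive on the domain $\set Z[\bm x]$ so that $\sum_{i=1}^m\deg_{x_k}(p_i)=\deg_{x_k}(p)=d_k$, yields $\prod_{i=1}^m\binom{\deg_{x_k}(p_i)}{\lfloor\deg_{x_k}(p_i)/2\rfloor}\le\binom{d_k}{\lfloor d_k/2\rfloor}$ for every $k$.

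Now I would assemble the pieces. Applying the coefficient estimate to each $p_i$, then multiplicativity of $M$, then the supermultiplicativity above, and finally Landau's inequality together with $\|p\|_2\le\sqrt{\prod_{k}(d_k+1)}\,\|p\|_\infty$, one obtains
\[
\prod_{i=1}^m\|p_i\|_\infty
\;\le\;\Bigl(\prod_{k=1}^n\prod_{i=1}^m\binom{\deg_{x_k}(p_i)}{\lfloor\deg_{x_k}(p_i)/2\rfloor}\Bigr)M(p)
\;\le\;\Bigl(\prod_{k=1}^n\binom{d_k}{\lfloor d_k/2\rfloor}\sqrt{d_k+1}\Bigr)\|p\|_\infty .
\]
It then remains to check the scalar inequality $\binom{d}{\lfloor d/2\rfloor}\sqrt{d+1}\le e^{d}$ for every $d\in\set N$, which is immediate for $d\le 2$ by inspection and for $d\ge 3$ follows from $\binom{d}{\lfloor d/2\rfloor}\le 2^{d}/\sqrt d$ and $\sqrt{(d+1)/d}\le\sqrt{4/3}<(e/2)^{3}\le(e/2)^{d}$. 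Multiplying these per-variable bounds produces the factor $e^{d_1+\dots+d_n}$ and finishes the proof.

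The genuinely substantial step is the bridge from coefficient sizes to the Mahler measure, namely Mahler's inequality $|c_{\bm j}|\le\prod_k\binom{\delta_k}{j_k}M(g)$: in one variable it is the standard bound $|e_\ell(\alpha_1,\dots,\alpha_\delta)|\le\binom{\delta}{\ell}\prod_j\max(1,|\alpha_j|)$ on elementary symmetric functions of the roots (obtained by discarding the factors with $|\alpha_j|\le1$ inside each product), and the several-variable version is gotten by iterating the one-variable estimate one $x_k$ at a time. Everything else — multiplicativity and Landau's inequality for $M$, the Vandermonde computation, and the elementary estimate on $\binom{d}{\lfloor d/2\rfloor}\sqrt{d+1}$ — is routine; the only delicate point is that pinning the constant down to exactly $e$ (rather than $2$ times a stray polynomial factor) forces one to use the sharp central-binomial bound rather than $\binom{d}{\lfloor d/2\rfloor}\le 2^d$. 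Of course, one may instead simply invoke \citep[pp.~135--139]{Gelf1960}, as in the statement.
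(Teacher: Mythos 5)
Your argument is correct, and it is doing something the paper does not attempt: the paper offers no proof of this lemma at all, but simply cites Gelfond's book (pp.~135--139), so your Mahler-measure derivation is a self-contained substitute for that citation rather than a variant of an in-paper argument. The route you take is the standard modern proof of Gelfond's inequality: multiplicativity of $M$, Mahler's coefficient bound $|c_{\bm j}|\le\prod_k\binom{\delta_k}{j_k}M(g)$, the multivariate Landau inequality $M(g)\le\|g\|_2$ via Jensen and Parseval, and the supermultiplicativity of central binomial coefficients via Vandermonde; all of these steps check out, including the parity case analysis and the final scalar estimate $\binom{d}{\lfloor d/2\rfloor}\sqrt{d+1}\le e^{d}$, so the constant $e^{d_1+\cdots+d_n}$ is indeed recovered (and your intermediate bound $\prod_k\binom{d_k}{\lfloor d_k/2\rfloor}\sqrt{d_k+1}$ is in fact slightly sharper than what the lemma asserts). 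Two small quibbles, neither affecting correctness: the phrase ``taking $j_k=\lfloor\delta_k/2\rfloor$'' should really be ``bounding every $\binom{\delta_k}{j_k}$ by the central one,'' since Mahler's inequality is applied at the index of whichever coefficient attains $\|g\|_\infty$; and your closing claim that the sharp bound $\binom{d}{\lfloor d/2\rfloor}\le 2^d/\sqrt d$ is forced is overstated --- with the crude bound $2^d$ only the case $d=1$ fails the scalar inequality, and you already dispose of $d\le 2$ by inspection. You might also note explicitly that the statement is trivial if some $p_i$ is zero and that $M(p_i)\ge 1$ is never needed, so no hidden hypotheses are used.
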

Note that when $n=1$ the above bound is actually worse than Mignotte's
factor bound for large $d$, which, however, leads to the same order of
magnitude for word lengths of the max-norms.

The lemma below provides bounds for the resultant of two multivariate
integer polynomials, which can be verified by following the proof of
\citep[Theorem~10]{BiLi2010} but arguing from the perspective of
multivariate polynomials.
\begin{lemma}\label{LEM:resultant}
  Let $f,g\in\set Z[\bm x]$ with $\deg_{x_i}(f),\deg_{x_i}(g)\leq d_i$
  for all $i = 1,\dots,n$.  Then
  \[
  ||\res_{x_n}(f,g)||_\infty\leq(2d_n)!(d_1+1)^{2d_n-1}\cdots
  (d_{n-1}+1)^{2d_n-1}||f||_\infty^{d_n}\,||g||_\infty^{d_n}.
  \]
\end{lemma}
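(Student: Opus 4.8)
The plan is to bound $\|\res_{x_n}(f,g)\|_\infty$ by viewing the resultant through its Sylvester matrix in the variable $x_n$, whose entries are the coefficients of $f$ and $g$ regarded as polynomials in $x_n$ over $\set Z[x_1,\dots,x_{n-1}]$. Writing $f=\sum_{k=0}^{d_n}f_k x_n^k$ and $g=\sum_{k=0}^{d_n}g_k x_n^k$ with $f_k,g_k\in\set Z[x_1,\dots,x_{n-1}]$, the resultant $\res_{x_n}(f,g)$ is the determinant of a square matrix of order at most $2d_n$ whose rows are shifts of the coefficient vectors $(f_{d_n},\dots,f_0)$ and $(g_{d_n},\dots,g_0)$. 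The key point is that each such determinant is a signed sum of at most $(2d_n)!$ products, each product being a product of $2d_n$ entries, half of them among the $f_k$'s and half among the $g_k$'s (this is the structural feature of the Sylvester matrix that forces a product to pick exactly $d_n$ entries from the $f$-block and $d_n$ from the $g$-block). Hence $\res_{x_n}(f,g)$ is, as a polynomial in $x_1,\dots,x_{n-1}$, a sum of at most $(2d_n)!$ terms, each of which is a product of $d_n$ of the $f_k$'s and $d_n$ of the $g_k$'s.

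First I would record the elementary estimate that for polynomials $h_1,\dots,h_r\in\set Z[x_1,\dots,x_{n-1}]$ with $\deg_{x_i}(h_j)\le d_i$ for $i=1,\dots,n-1$, the product satisfies
\[
\|h_1\cdots h_r\|_\infty \le (d_1+1)^{r-1}\cdots(d_{n-1}+1)^{r-1}\,\|h_1\|_\infty\cdots\|h_r\|_\infty,
\]
which follows by induction on $r$ from the fact that each coefficient of a product of two such polynomials is a sum of at most $(d_1+1)\cdots(d_{n-1}+1)$ products of coefficients. Applying this with $r=2d_n$ to each of the $(2d_n)!$ summands of $\res_{x_n}(f,g)$, and using that each $\|f_k\|_\infty\le\|f\|_\infty$ and $\|g_k\|_\infty\le\|g\|_\infty$, gives that every summand has max-norm at most $(d_1+1)^{2d_n-1}\cdots(d_{n-1}+1)^{2d_n-1}\|f\|_\infty^{d_n}\|g\|_\infty^{d_n}$. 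Summing over the at most $(2d_n)!$ summands and using the triangle inequality for the max-norm yields exactly the claimed bound.

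The main obstacle — really the only subtle point — is justifying the degree bookkeeping in $x_1,\dots,x_{n-1}$: I must check that each entry of the Sylvester matrix has $x_i$-degree at most $d_i$ (true since $\deg_{x_i}(f_k)\le\deg_{x_i}(f)\le d_i$ and likewise for $g$), and that the power $2d_n-1$ rather than $2d_n$ appears in the exponents. The latter comes from the inductive product estimate: a product of $r$ factors each of $x_i$-degree at most $d_i$ has $x_i$-degree at most $rd_i$, so each of its coefficients is a sum of at most $\prod_{i=1}^{n-1}(rd_i+1)$ terms — but to get the sharper $(d_i+1)^{r-1}$ one instead multiplies in one factor at a time, at each of the $r-1$ steps incurring a factor $\prod_{i=1}^{n-1}(d_i+1)$ since one of the two polynomials being multiplied always has $x_i$-degree at most $d_i$. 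This is precisely the argument in the proof of \citep[Theorem~10]{BiLi2010}, transcribed from the univariate-over-$\set Z$ setting to the multivariate setting by replacing "coefficient is an integer of bounded size" with "coefficient is a polynomial in $x_1,\dots,x_{n-1}$ of bounded degree and max-norm", so I would simply refer to that proof for the routine details and emphasize only the substitution of the norm estimate above in place of the scalar one used there.
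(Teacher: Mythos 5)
Your proof is correct and takes essentially the approach the paper intends: the paper offers no written argument of its own, deferring instead to the proof of \citep[Theorem~10]{BiLi2010} ``argued from the perspective of multivariate polynomials.'' Your Leibniz expansion of the Sylvester determinant in $x_n$, combined with the coefficient-norm estimate $\|h_1\cdots h_r\|_\infty\leq(d_1+1)^{r-1}\cdots(d_{n-1}+1)^{r-1}\|h_1\|_\infty\cdots\|h_r\|_\infty$ for the entries, is precisely that adaptation and yields the stated bound.
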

The next result is likely known in the literature, but we could not find
a suitable reference, so we included a proof here for completeness.

\begin{lemma}\label{LEM:Gn}
  Let $f,g\in\set Z[\bm x]$ with $\deg_{x_i}(f)\leq d_i,\deg_{x_i}(g)\leq d_i$
  for all $i=1,\dots,n$, $||f||_\infty\leq \beta$ and $||g||_\infty
  \leq \beta$. Let $d=\max\{d_1,\dots,d_n\}$ and $D_n=d_1\cdots d_n$.
  Then computing $\gcd(f,g)$ over $\set Z$ takes
  $\bigO(D_n\M(nd+\log\beta)\log(nd+\log\beta))$ word operations.
\end{lemma}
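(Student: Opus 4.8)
The plan is to compute $\gcd(f,g)$ by a dense modular algorithm: reduce modulo a single well-chosen prime, strip the variables $x_2,\dots,x_n$ by evaluation and interpolation, recurse down to univariate gcds over a prime field handled by the fast Euclidean algorithm, and lift the result back to $\set Z$; the point is that Lemma~\ref{LEM:facnorm} keeps every intermediate object of word length $\bigO(nd+\log\beta)$. As a preprocessing step I would pass to the case where $f$ is primitive with respect to $x_1$: the removed contents are gcds of $(n-1)$-variate integer polynomials whose coefficients are still bounded by $e^{\bigO(nd)}\beta$, so they fall under the induction hypothesis and stay within the claimed bound. Since $h:=\gcd(f,g)$ divides $f$, Lemma~\ref{LEM:facnorm} gives $||h||_\infty\le e^{d_1+\dots+d_n}\beta\le e^{nd}\beta$; hence $h$, together with the cofactors $f/h$ and $g/h$, has coefficients of word length $\bigO(nd+\log\beta)$.

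Next I would fix a prime $\ell$ of word length $\bigO(nd+\log\beta)$ with $\ell>2e^{nd}\beta$ that is \emph{lucky}, that is, does not divide $\lc_{x_1}(f)$, $\lc_{x_1}(g)$, or the subresultant coefficient controlling $\deg_{x_1}(h)$; since there are only $\bigO(nd+\log\beta)$ unlucky primes (the relevant product has word length $\bigO(nd+\log\beta)$ by Lemmas~\ref{LEM:facnorm} and \ref{LEM:resultant}), a lucky one is found by a few trials. Over $\mathbb{F}_\ell$ I would then compute $h\bmod\ell=\gcd(f\bmod\ell,g\bmod\ell)$ recursively: viewing the inputs in $\mathbb{F}_\ell[x_1,\dots,x_{n-1}][x_n]$, evaluate $x_n$ at $\bigO(d_n)$ points (with a few spares to discard unlucky evaluation points and to reconstruct the leading coefficient), solve the $(n-1)$-variate gcd problems obtained, and interpolate the answer in $x_n$; the base case $n=1$ is a univariate gcd in $\mathbb{F}_\ell[x_1]$ via the fast Euclidean algorithm in $\bigO(\M(d_1)\log d_1)$ operations in $\mathbb{F}_\ell$ \citep{vzGGe2013}. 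Finally, recover $h$ over $\set Z$ by symmetric range reduction of $h\bmod\ell$, and, if one wants a certificate, divide $f$ and $g$ by $h$, all within the same budget.

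For the cost, unrolling the recursion produces $\bigO(D_n)$ univariate gcds of degree $\bigO(d)$ and $\bigO(D_n)$ fast multipoint evaluations and interpolations of degree $\bigO(d)$; using that $\M$ is superadditive and superlinear to telescope, this is $\bigO(D_n\,\M(d)\log d)$ operations in $\mathbb{F}_\ell$. Each such operation costs $\bigO(\M(nd+\log\beta))$ word operations (the $\bigO(D_n)$ inversions are batched by Montgomery's trick), and the modular reduction and the reconstruction cost $\bigO(D_n\,\M(nd+\log\beta)\log(nd+\log\beta))$ word operations. Multiplying these together and repeatedly invoking the subadditivity, superlinearity and subquadraticity of $\M$ (in particular $\M(a)\M(b)=\bigO(\M(ab))$, valid in both the classical and the fast-arithmetic regime, together with $d,\log d\le nd+\log\beta$) collapses the total to $\bigO(D_n\,\M(nd+\log\beta)\log(nd+\log\beta))$.

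I expect the real work to lie in the accounting rather than in any single idea. One has to check that every intermediate polynomial --- the contents peeled off at the start, the univariate images over $\mathbb{F}_\ell$, all the partial interpolants --- has coefficients bounded by $e^{\bigO(nd)}\beta$, so that a fixed $\bigO(nd+\log\beta)$-bit modulus genuinely suffices; one has to verify that the product of the $\bigO(D_n)$ univariate/interpolation tasks, their logarithmic overheads, and the per-operation word cost, plus the lift back to $\set Z$, really lands inside $\bigO(D_n\,\M(nd+\log\beta)\log(nd+\log\beta))$ and is not inflated by stray factors of $d$ (here the use of fast polynomial arithmetic is what keeps the overhead down to the single logarithmic factor); and one has to confirm that excluding unlucky primes and evaluation points, and propagating primitivity through the recursion, never dominates.
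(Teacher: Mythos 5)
Your plan shares the paper's outer structure (modular reduction, then evaluation--interpolation down to univariate gcds over a prime field), but it diverges at the decisive point: you work modulo a \emph{single} prime $\ell$ of word length $\Theta(nd+\log\beta)$, whereas the paper uses many \emph{small} primes of word length $\bigO(\log(nd)+\log\log\beta)$ followed by a coefficientwise Chinese remainder reconstruction. Your cost accounting breaks exactly there. The recursion costs $\bigO((D_n/d)\M(d)\log D_n)$ field operations (note also that it produces about $D_n/d$, not $D_n$, univariate gcds; your count is a factor $d$ too high), and with your choice of $\ell$ each field operation costs $\bigO(\M(nd+\log\beta))$ word operations, so the total for the modular gcd is of order $(D_n/d)\M(d)\log D_n\cdot\M(nd+\log\beta)$. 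The ``collapse'' you invoke via $\M(a)\M(b)=\bigO(\M(ab))$ does not rescue this: $\M\bigl(d(nd+\log\beta)\bigr)$ is \emph{not} $\bigO(\M(nd+\log\beta))$ -- it exceeds it by a factor of order $\M(d)$ classically and of order $d$ with fast arithmetic. Concretely, with classical arithmetic ($\M(t)=t^2$) your scheme uses about $D_n\,d\,(nd+\log\beta)^2$ word operations up to logarithmic factors, a factor $\approx d$ above the claimed $\bigO\bigl(D_n\M(nd+\log\beta)\log(nd+\log\beta)\bigr)$; and with your own overcount of $D_n$ univariate gcds the fast regime overshoots by the same factor $d$. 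Since the lemma must hold for both instantiations of $\M$, this is a genuine gap, not a bookkeeping detail.

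The paper's many-small-primes design is what makes the bound attainable: the heavy field-operation count $\bigO((D_n/d)\M(d)\log D_n)$ is paid only at primes of polylogarithmic word length, so each such operation costs $\bigO(\M(\log(nd)+\log\log\beta))$ word operations, and full $(nd+\log\beta)$-bit integer arithmetic occurs only $\bigO(D_n)$ times, in the final CRT reconstruction, giving $\bigO(D_n\M(nd+\log\beta)\log(nd+\log\beta))$. To repair your argument you would have to adopt that strategy (or otherwise amortize the big-prime arithmetic, which your write-up does not do). Secondary points: the number of unlucky primes is controlled by the resultant bound of Lemma~\ref{LEM:resultant}, whose logarithm is $\bigO(nd\log d+d\log\beta)$ rather than $\bigO(nd+\log\beta)$ as you assert (harmless for your large prime, but the estimate should be stated correctly); and recovering $h$ from a single monic modular image requires normalizing by the leading coefficient before the symmetric lift, with the bound $B$ adjusted accordingly, which your sketch passes over.
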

\begin{proof}
  We proceed to compute $h=\gcd(f,g)$ by a small prime modular
  algorithm.  By Lemma~\ref{LEM:facnorm}, $||\gcd(f,g)||_\infty\leq
  e^{d_1+\dots+d_n} \beta\leq e^{nd}\beta=B$ with $e$ being the base
  of the natural logarithm.  Then $\log B\in\bigO(nd+\log\beta)$. Let
  $k = \lceil2\log_2((2d)!(d+1)^{(n-1)(2d-1)}\beta^{2d})\rceil$.  By
  Lemma~\ref{LEM:resultant}, the value $k$ is an upper bound on
  $2\log_2||\res_{x_n}(f/h,g/h)||_\infty$ and thus guarantees that at
  least $k/2$ of the first $k$ primes $p_1=2,\dots,p_k$ do not divide
  $\res_{x_n}(f/h,g/h)$. This means that at least half of the primes
  $p_1,\dots,p_k$ are \lq\lq lucky\rq\rq.  It is then sufficient to
  choose $\lceil\log_2(2B+1)\rceil\leq k/2$ \lq\lq lucky\rq\rq\ ones
  from these $k$ primes, each of word length $\bigO(\log k)$.  For
  every chosen prime $p$, we reduce all coefficients of $f$ and
  $g$ modulo $p$, using $\bigO(D_n\log\beta\log p)$ word operations,
  and compute $\gcd(f_p,g_p)$ with $f_p=f\mod p$ and $g_p=g\mod p$.
  The desired $\gcd(f,g)$ can be recovered by a final application of
  the Chinese remainder theorem, which takes
  $\bigO(D_n\M(nd+\log\beta)\log(nd+\log\beta))$ word operations.
  Neglecting the cost of computing primes, it remains to count the
  number of arithmetic operations, denoted by $G_p(n,d,D_n)$, used by
  the gcd computation in the field $\set Z_p$ for each prime $p$, with
  the rest following by the fact that each operation of these takes
  $\bigO(M(\log p))$ word operations and $\log p\in
  \bigO(\log(nd)+\log\log \beta)$.
	
  For each prime $p$, we compute $\gcd(f_p,g_p)$ with $f_p=f\mod p$
  and $g_p=g\mod p$ by an evaluation-interpolation scheme
  \citep{GCL1992}: evaluate coefficients of $f_p,g_p$ with respect to
  $x_1,\dots,x_{n-1}$ at $d_n$ points from $\set Z_p$ for $x_n$;
  compute $d_n$ GCDs over $\set Z_p$ of two $(n-1)$-variate
  polynomials of degrees at most $d_1,\dots,d_{n-1}$ in
  $x_1,\dots,x_{n-1}$, respectively; recover the final GCD by
  interpolation. Notice that there are at most $d_1\cdots
  d_{n-1}=D_n/d_n$ monomials in $x_1,\dots,x_{n-1}$ appearing in each
  of the polynomials $f_p$ and $g_p$. The process of evaluation and
  interpolation then takes $\bigO((D_n/d_n)\M(d_n)\log d_n)$
  arithmetic operations in the field $\set Z_p$.  The second step uses
  $\bigO(d_nG_p(n-1,d^{(n-1)}, D_{n-1}))$ arithmetic operations in
  $\set Z_p$, where $d^{(n-1)} = \max\{d_1,\dots,d_{n-1}\}$ and
  $D_{n-1}=d_1\cdots d_{n-1}$. Thus we obtain the recurrence relation
  \[
  \bigO(G_p(n,d,D_n))\subset \bigO((D_n/d_n)\M(d_n)\log d_n)
  +\bigO(d_nG_p(n-1,d^{(n-1)},D_{n-1})).
  \]
  From the initial condition that $G_p(1,d_1,d_1)$ is in
  $\bigO(\M(d_1)\log d_1)$, one concludes that $G_p(n,d,D_n)$ is in
  $\bigO((D_n/d)\M(d)\log D_n)$.
\end{proof}

\subsection{Cost analyses of our two algorithms}
We are now ready to present the cost of our first approach.  In order
to make it ready to use in the subsequent analysis of our second
approach, we analyze the cost in the case of
$\R = \set Z[q,q^{-1},z_1,\dots,z_v]$, where $v\in \set N$ is
arbitrary but fixed and the $z_i$ are additional parameters
independent of $q,x_1,\dots,x_n$.

\begin{theorem}\label{THM:multicost1}
  Let $p\in\set Z[q,q^{-1},z_1,\dots,z_v][\bm x]$. Assume that both
  the numerator and denominator of $p$ have maximum degree $d$ in each
  variable from $\{q,z_1,\dots,z_v,x_1,\dots,x_n\}$ separately, and
  let $||p||_\infty=\beta$. Then the algorithm {\bf MultivariateQILD$_1$}
  computes the $q$-integer linear decomposition of $p$ over
  $\set Z[q,q^{-1},z_1,\dots,z_v]$ using
  $$\bigO(n!d^{2n+v+2}\M((n^3+nv)d+n\log\beta)
  \log((n^2+v)d+\log\beta)+n!d^{n\lfloor n/2\rfloor}\M(n\log d)\log\log d)$$
  word operations.
\end{theorem}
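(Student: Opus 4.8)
The plan is to traverse the algorithm {\bf MultivariateQILD$_1$} step by step, bound the cost of a single invocation on a polynomial in the $n$ main variables $x_1,\dots,x_n$ (excluding the recursive calls of Step~4.2), and then sum over the recursion tree. All sizes are measured with respect to the $n+v+1$ variables $x_1,\dots,x_n,z_1,\dots,z_v,q$, and three facts will be used repeatedly: (i) the support of $p$ in $x_1,\dots,x_n$ has at most $s=(d+1)^n=\bigO(d^n)$ elements, so by Proposition~\ref{PROP:multiqiltypes} the number of candidate types that reach Step~8 is $\bigO(s)$; (ii) every direction vector $\bm\lambda$ produced in Steps~6--7 satisfies $||\bm\lambda||_\infty\le d$, since its two defining endpoints lie in $\supp(f)\subseteq\{0,\dots,d\}^n$; (iii) by Lemma~\ref{LEM:facnorm}, any divisor of an integer polynomial of total degree $D$ and max-norm $\gamma$ has max-norm at most $e^{D}\gamma$. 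Fact~(ii) controls the degree inflation caused by the substitution in Step~8.1, and fact~(iii) controls the coefficient growth through every gcd and through the recursion.

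First I would dispatch the content computations occurring in Steps~2, 4.1 and~9. Each of these is an iterated pairwise gcd of at most $\bigO(d^n)$ operands lying in at most $n+v$ variables of per-variable degree $\le d$; by fact~(iii) all operands and all intermediate results have max-norm $e^{\bigO((n+v)d)}\beta$, so Lemma~\ref{LEM:Gn} bounds the cost of each whole gcd by
\[
\bigO\!\big(d^{\,n+v+1}\,\M((n+v)d+\log\beta)\,\log((n+v)d+\log\beta)\big).
\]
Since only $\bigO(n)$ of them occur along one invocation, their total contribution is dominated by that of Step~8.

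Step~8, run for each of the $\bigO(d^n)$ surviving candidates $\bm\lambda$, is the dominant cost. For a fixed $\bm\lambda$ the monomial substitution $x_i\mapsto x_i^{\lambda_n}$ $(i<n)$, $x_n\mapsto y\,x_1^{-\lambda_1}\cdots x_{n-1}^{-\lambda_{n-1}}$, followed by clearing denominators, produces by fact~(ii) a polynomial of per-variable degree $\bigO(d^2)$ in each $x_i$, degree $\le d$ in $y$ and in each $z_j,q$, with at most $s=\bigO(d^n)$ monomials in $x_1,\dots,x_{n-1}$ (a monomial substitution cannot increase the number of terms) and with unchanged max-norm. Computing $P^{*}=\cont_{x_1,\dots,x_{n-1}}$ of this polynomial is therefore again an iterated pairwise gcd of $\bigO(d^n)$ operands, now over $\set Z[q,q^{-1},z_1,\dots,z_v][y]$ ($v+2$ variables, per-variable degree $\le d$); applying Lemma~\ref{LEM:Gn} and multiplying by the $\bigO(d^n)$ candidates yields the first term of the stated bound, once the degree inflation of the substitution and the Lemma~\ref{LEM:facnorm} norm growth have been absorbed, generously, into the arguments $(n^3+nv)d+n\log\beta$ and $(n^2+v)d+\log\beta$. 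The remaining work in Step~8.2 --- forming $P_m(\bx{\lambda})$, its numerator and denominator, and the exact division $f\mapsto f/f^{*}$ --- only touches polynomials whose sizes have already been accounted for, and is subsumed.

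It remains to include Step~6 and the recursion. Enumerating the edges of $\newt(f)$ together with their normalised direction vectors is a convex-hull / face-lattice computation on $s=\bigO(d^n)$ points of $\set R^n$; by the Upper Bound Theorem the whole face lattice, and in particular the set of edges, has size $\bigO(s^{\lfloor n/2\rfloor})=\bigO(d^{n\lfloor n/2\rfloor})$ \citep[Chapter~26]{GOT2018}, while each associated geometric primitive (a sign-of-determinant orientation test and a gcd-normalisation of a direction vector, all on integers of word length $\bigO(n\log d)$) costs $\bigO(\M(n\log d)\log\log d)$ word operations \citep{vzGGe2013}; this gives the second term. Finally the recursive calls of Step~4.2 form a tree in which a node on $k$ variables has at most $k$ children on $k-1$ variables, so there are $\bigO(n!)$ nodes altogether, and every recursive input is a content, hence by fact~(iii) applied along a root-to-node path of length $\le n$ has max-norm $e^{\bigO((n^2+nv)d)}\beta$ --- this is the source of the extra factor $n$ in the first-term argument $(n^3+nv)d+n\log\beta$. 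Summing the per-node bounds over the $\bigO(n!)$ nodes and using the subadditivity and superlinearity of $\M$ to combine them gives the claimed estimate. I expect the bookkeeping at this last stage to be the main obstacle: ensuring that the degree blow-up of the Step~8 substitution and the total-degree-exponential norm growth of Lemma~\ref{LEM:facnorm} do not compound across the $n$ recursion levels any worse than $(n^3+nv)d+n\log\beta$, and verifying that the computational-geometry routines of Step~6 can indeed be executed within $\bigO(\M(n\log d)\log\log d)$ per face.
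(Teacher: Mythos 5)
Your proposal follows essentially the same route as the paper's own proof: bound each step of one invocation (contents via Lemma~\ref{LEM:Gn} with norm growth controlled by Lemma~\ref{LEM:facnorm}, the $\bigO(d^n)$ candidates from Proposition~\ref{PROP:multiqiltypes} each handled by a content computation over $\set Z[q,q^{-1},z_1,\dots,z_v][y]$ giving the dominant $d^{2n+v+2}$ term, and the convex-hull cost from \citep[Chapter~26]{GOT2018} giving the $d^{n\lfloor n/2\rfloor}$ term), then account for the recursion of Step~4.2, which the paper does by the recurrence $T(n,d,\log\beta)\subset(\text{per-call cost})+\bigO(nT(n-1,d,nd+\log\beta))$ and you do equivalently by summing over the $\bigO(n!)$-node recursion tree with the same per-level norm inflation. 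The bookkeeping details you flag (degree/norm growth across levels and the factor $n$ inside $\M$) are handled in the paper exactly by unrolling that recurrence, and your generous estimates stay within the stated bound, so the argument is correct.
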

\begin{proof}
  Let $T(n,d,\log\beta)$ denote the number of word operations used by
  the algorithm applied to the polynomial~$p$. Steps~1 and 5 treat the
  trivial case, taking no word operations. In Step~2, finding the
  content $c$ amounts to computing a GCD of at most $(d+1)^n$
  polynomials in $\set Z[q,z_1,\dots,z_v]$ of degree at most $d$ in
  each variable separately and max-norm at most $\beta$. Thus by
  Lemma~\ref{LEM:Gn}, this step takes
  $\bigO(d^{n+v+1}\M((v+1)d+\log\beta)\log((v+1)d+\log\beta))$ word
  operations. Step~3 deals with the univariate case, yielding that the
  initial cost $T(1,d,\log\beta)$ is in
  $\bigO(d^{v+2}\M((v+1)d+\log\beta)\log((v+1)d+\log\beta))$.

  In Step~4, at each iteration of the loop, the computation of the
  content $g$ and its primitive part in Step~4.1 can be done using
  $\bigO(d^{n+v+1}\M((n+v)d+\log\beta)\log((n+v)d+\log\beta))$; while
  Step~4.2 takes $\bigO(T(n-1,d,nd+\log\beta))$ word operations as
  $g\in\set Z[q,x_1,\dots,x_{i-1},x_{i+1},\dots,x_n]$ of maximum
  degree at most $d$ in each variable separately and max-norm of word
  length $\bigO(nd+\log\beta)$ by Lemma~\ref{LEM:facnorm}. Since there
  are $n$ iterations, this step in total takes
  $\bigO(nd^{n+v+1}\M((n+v)d+\log\beta)\log((n+v)d+\log\beta))
  +\bigO(nT(n-1,d,nd+\log\beta))$ word operations.
  
  The computation of the Newton polytope of $f$ dominates the other
  costs in Steps~6-7, which, by \cite[Theorem~26.3.1]{GOT2018}, takes
  $\bigO((s\log s+s^{\lfloor n/2\rfloor})\M(\log d)\log\log d)$ word
  operations with $s$ denoting the cardinality of $\supp(f)$.  Since
  $s\leq (d+1)^n$, we obtain the total cost $\bigO((nd^n\log
  d+d^{n\lfloor n/2\rfloor})\M(\log d)\log\log d)$ for Steps~6-7. In
  Step~8, for each $\bm\lambda\in\Lambda$, a direct calculation shows
  that $f(x_1^{\lambda_n},\dots,x_{n-1}^{\lambda_n},
  yx_1^{-\lambda_1}\cdots\,x_{n-1}^{-\lambda_{n-1}})$ has degree in
  $y$ at most $d$, max-norm of word length $\bigO(nd+\log\beta)$ and
  at most $(d+1)^n$ nonzero monomials in $x_1,\dots,x_{n-1}$
  appearing. Thus by Lemma~\ref{LEM:Gn}, Step~8.1 takes
  $\bigO(d^{n+v+2}\M((n+v+2)d+\log\beta)\log((n+v+2)d+\log\beta))$
  word operations, which dominates the cost for Step~8.2. Since there
  are at most $s-1\leq (d+1)^n-1$ elements in the set $\Lambda$, this
  step takes
  $\bigO(d^{2n+v+2}\M((n+v+2)d+\log\beta)\log((n+v+2)d+\log\beta))$
  word operations. Steps~9 and 10 both take no word operations without
  expanding the product.

  In summary, we obtain the recurrence relation
  \begin{align*}
    \bigO(T(n,d,\log\beta))&\subset 
    \bigO(d^{2n+v+2}\M((n+v+2)d+\log\beta)\log((n+v+2)d+\log\beta)\\
    &\quad+d^{n\lfloor n/2\rfloor}\M(\log d)\log\log d)
    +\bigO(nT(n-1,d,nd+\log\beta)),
  \end{align*}
  along with
  $T(1,d,\log\beta)\in\bigO(d^{v+2}\M((v+1)d+\log\beta)\log((v+1)d+\log\beta))$.
  The cost follows.
\end{proof}
\begin{corollary}\label{COR:multicost1}
  With the assumptions of Theorem~\ref{THM:multicost1}, further let
  $v = 0$. Then the algorithm {\bf MultivariateQILD$_1$} computes the
  $q$-integer linear decomposition of $p$ over $\set Z[q,q^{-1}]$
  using $\softO(n!d^{2n+4}+d^{2n+2}\log^2\beta+n!d^{n\lfloor n/2\rfloor})$
  word operations with classical arithmetic and
  $\softO(n!d^{2n+3}+n!d^{2n+2}\log\beta+n!d^{n\lfloor n/2\rfloor})$
  with fast arithmetic.
\end{corollary}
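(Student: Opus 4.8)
The plan is to obtain the corollary directly from Theorem~\ref{THM:multicost1} by specialising to $v=0$ and then tidying up the estimate. With $v=0$ the cost of {\bf MultivariateQILD$_1$} on $p\in\set Z[q,q^{-1}][\bm x]$ is
\[
\bigO\!\left(n!\,d^{2n+2}\,\M(n^{3}d+n\log\beta)\,\log(n^{2}d+\log\beta)
\;+\;n!\,d^{\,n\lfloor n/2\rfloor}\,\M(n\log d)\,\log\log d\right),
\]
so all that remains is to substitute the two cost models for $\M$ and to collapse the lower-order factors into the soft-Oh notation.

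First I would treat the classical case $\M(k)=k^{2}$. Then $\M(n^{3}d+n\log\beta)=(n^{3}d+n\log\beta)^{2}$, and the elementary bound $(a+b)^{2}\le 2a^{2}+2b^{2}$ separates this into a pure-degree part of order $n^{6}d^{2}$ and a pure-coefficient part of order $n^{2}\log^{2}\beta$. Carrying these through the theorem's first summand splits it into a degree-dominated piece and a $\log^{2}\beta$-dominated piece, of orders roughly $d^{2n+4}$ and $d^{2n+2}\log^{2}\beta$ up to factors of $n!$, powers of $n$, and the factor $\log(n^{2}d+\log\beta)$; likewise $\M(n\log d)\,\log\log d=n^{2}\log^{2}d\,\log\log d$ merely decorates the zonotope term $d^{\,n\lfloor n/2\rfloor}$ with a factor $n!$, powers of $n$, and logarithmic-in-$d$ corrections. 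Collecting the three surviving dominant terms and discarding every factor that is at most polylogarithmic in the relevant inputs (a fixed power of $n$ being polylogarithmic in the accompanying $n!$, since $\log(n!)=\Theta(n\log n)$) gives the stated classical bound. The fast-arithmetic case $\M(k)=\softO(k)$ runs the same way but more quickly: from the first summand one gets $n!\,d^{2n+2}\,\softO(n^{3}d+n\log\beta)\,\log(n^{2}d+\log\beta)=\softO(n!\,d^{2n+3}+n!\,d^{2n+2}\log\beta)$, and from the second $n!\,d^{\,n\lfloor n/2\rfloor}\,\softO(n\log d)\,\log\log d=\softO(n!\,d^{\,n\lfloor n/2\rfloor})$, whose sum is the claimed fast bound.

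The only step needing real care, and hence the main obstacle, is this last collapse: one has to confirm that every factor swept into the soft-Oh is genuinely at most polylogarithmic in the term it is attached to, that expanding $(n^{3}d+n\log\beta)^{2}$ produces no cross term exceeding both the pure-degree and the pure-coefficient term, and that no super-polynomial-in-$n$ factor (in particular no spurious extra $n!$) is concealed in the process. Given the subadditivity, superlinearity and subquadraticity of $\M$ assumed at the start of Section~\ref{SEC:complexity}, and the explicit shape of the bound in Theorem~\ref{THM:multicost1}, this bookkeeping is routine.
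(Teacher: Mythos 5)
Your route is the same as the paper's: Corollary~\ref{COR:multicost1} carries no separate argument there either, being read off from Theorem~\ref{THM:multicost1} by setting $v=0$ and substituting $\M(k)=k^2$, respectively $\M(k)=\softO(k)$, exactly as you do, with the cross term $n^4d\log\beta$ absorbed by AM--GM and fixed powers of $n$ treated as polylogarithmic against the accompanying $n!$. The fast-arithmetic half of your computation is fine as stated.

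One point in the classical half deserves to be made explicit rather than swept into the final collapse. Expanding $(n^3d+n\log\beta)^2$ and multiplying by $n!\,d^{2n+2}$ gives, after discarding powers of $n$ and log factors, $\softO\bigl(n!\,d^{2n+4}+n!\,d^{2n+2}\log^2\beta\bigr)$; the factor $n!$ on the $\log^2\beta$ piece is \emph{not} polylogarithmic in any of the parameters and cannot be absorbed by $\softO$, so direct substitution into the theorem's stated bound does not literally yield the printed middle term $d^{2n+2}\log^2\beta$ (note the fast-arithmetic bound does retain $n!$ on its $\log\beta$ term). To recover a term free of $n!$ you would have to go back to the recurrence inside the proof of Theorem~\ref{THM:multicost1}, where the top level alone contributes the $d^{2n+2}\log^2\beta$ part and the recursive levels attach the factorial only to terms of lower degree in $d$; as written, your derivation proves the bound with $n!\,d^{2n+2}\log^2\beta$ in place of $d^{2n+2}\log^2\beta$, and the last sentence of your second paragraph quietly asserts more than the computation gives.
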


In the case of our second algorithm we have the following cost.
\begin{theorem}\label{THM:multicost2}
  Let $p\in \set Z[q,q^{-1}][\bm x]$. Assume that both the numerator
  and denominator of $p$ have maximum degree $d$ in each variable from
  $\{q,x_1,\dots,x_n\}$ separately, and let $||p||_\infty = \beta$.
  Then the algorithm {\bf MultivariateQILD$_2$} computes the
  $q$-integer linear decomposition of $p$ over $\set Z$ using
  $\bigO(d^{n+4}\M(n^4d+n^2\log\beta)\log(n^2d+\log\beta))$ word
  operations.
\end{theorem}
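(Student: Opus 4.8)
The plan is to trace through the algorithm {\bf MultivariateQILD$_2$} step by step, identify which steps dominate, and then solve the resulting recurrence in $n$. Let $T_2(n,d,\log\beta)$ denote the number of word operations used by the algorithm on the polynomial $p$. Steps~1--3 and 7 handle the trivial and univariate cases; as in the proof of Theorem~\ref{THM:multicost1}, computing the content and primitive part in Step~2 costs $\bigO(d^{n+1}\M(d+\log\beta)\log(d+\log\beta))$, and the base case gives $T_2(1,d,\log\beta)\in\bigO(d^{2}\M(d+\log\beta)\log(d+\log\beta))$. Step~4 invokes {\bf MultivariateQILD$_1$} on a bivariate polynomial over $\set Z[q,q^{-1}]$, whose cost we get by setting $n=2$, $v=0$ in Theorem~\ref{THM:multicost1}; after absorbing logarithmic factors this contributes $\softO(d^{6}+d^{4}\log^{2}\beta)$, which is comfortably within the claimed bound. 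Step~5 computes $\cont_{x_1,x_2}(f)$ and $\prim_{x_1,x_2}(f)$, a GCD of at most $(d+1)^{2}$ polynomials in $n-1$ variables, costing $\bigO(d^{n+1}\M(nd+\log\beta)\log(nd+\log\beta))$ by Lemma~\ref{LEM:Gn}. Step~6 recursively calls the algorithm on $g\in\set Z[q,q^{-1}][x_3,\dots,x_n]$, whose max-norm has word length $\bigO(nd+\log\beta)$ by Lemma~\ref{LEM:facnorm}, contributing $\bigO(T_2(n-2,d,nd+\log\beta))$.

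The crux is Step~8, the iterative elimination. Here the key bookkeeping observations are: (i) at iteration $k$, each element of $\Lambda_k$ carries a polynomial $h$ in the variables $y,x_{k+1},\dots,x_n$, so $h\in\set Z[q,q^{-1}][x_{k+2},\dots,x_n][y,x_{k+1}]$ has at most $n+1$ remaining variables beyond $q$; (ii) the call to {\bf MultivariateQILD$_1$} in Step~8.2 is on a bivariate-in-$(y,x_{k+1})$ polynomial over a coefficient ring with $v\leq n-2$ extra parameters $x_{k+2},\dots,x_n$, so its cost is bounded by Theorem~\ref{THM:multicost1} with the pair $(2,n-2)$ in place of $(n,v)$, giving $\bigO(d^{n+2}\M(n^{3}d+\log\beta)\log(n^{2}d+\log\beta))$ per call after simplification; and (iii) the degree bounds and max-norm stay controlled: each univariate factor $P_i^*$ extracted has degree at most $d$ in $y$, the substitution $y\mapsto x_1^{\mu_1}\cdots x_k^{\mu_k}$ at most multiplies exponents by a factor $\bigO(d)$ so degrees in the original $x_i$ remain $\bigO(d)$, and repeated application of Lemma~\ref{LEM:facnorm} across the at most $n$ elimination rounds keeps the max-norm word length at $\bigO(nd+\log\beta)$ — this is expanded to $\bigO(n^2 d + \log\beta)$ in the statement's $\M$-argument to be safe about the accumulated exponent growth. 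The total number of elements processed across all the sets $\Lambda_1,\dots,\Lambda_n$ is bounded by the total number of $q$-integer linear factors of $p$, which is $\bigO(d^{n})$ (a crude bound being the number of monomials), so Step~8 in aggregate costs $\bigO(d^{n}\cdot d^{n+2}\M(\cdots)\log(\cdots))=\bigO(d^{2n+2}\M(\cdots)\log(\cdots))$; one must check this is dominated by the claimed $d^{n+4}$ term — it is, because the number of elements in any single $\Lambda_k$ is at most $\bigO(d^{2})$ (bounded by the number of edges-worth of types extractable from a bivariate slice, i.e.\ by the support size in two variables), so the right count is $\bigO(d^{2})$ calls per round times $n$ rounds, yielding $\bigO(nd^{2}\cdot d^{n+2}\M(\cdots)\log(\cdots))$, and absorbing the $n$ into the polylogarithmic-type factor gives the $d^{n+4}$ form. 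Step~9 does only exponent arithmetic and content extraction, dominated by the above.

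Assembling the pieces yields the recurrence
\begin{align*}
\bigO(T_2(n,d,\log\beta)) &\subset \bigO\!\left(d^{n+4}\M(n^{4}d+n^{2}\log\beta)\log(n^{2}d+\log\beta)\right)\\
&\quad + \bigO\!\left(T_2(n-2,d,nd+\log\beta)\right),
\end{align*}
with the stated base case. Unrolling the recursion over the $\lfloor n/2\rfloor$ levels, the argument $\log\beta$ grows to $\bigO(n\log\beta)$ (each level adds $\bigO(nd)$, and there are $\bigO(n)$ levels, but the $d$-part is absorbed into the existing $n^{4}d$ term and the $\log\beta$-part multiplies by at most $n$), the per-level cost is monotone and already of the claimed shape, and summing $\lfloor n/2\rfloor$ copies is absorbed by enlarging the $\M$-argument's polynomial-in-$n$ coefficient. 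The main obstacle — and the step deserving the most care in the full write-up — is precisely the bookkeeping in (ii)--(iii) of the Step~8 analysis: correctly bounding how many elements accumulate in each $\Lambda_k$, and verifying that the degree and coefficient growth under the repeated substitutions $y\mapsto x_1^{\mu_1}\cdots x_k^{\mu_k}$ does not exceed the $\bigO(d)$ and $\bigO(n^2 d+\log\beta)$ envelopes, since each elimination round both refines the type vectors (multiplying coordinates) and re-expresses the remaining factors in more variables.
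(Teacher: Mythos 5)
Your overall architecture matches the paper's: base cases via Theorem~\ref{THM:multicost1} with $n=2$, a recursive call on the content contributing $T(n-2,d,nd+\log\beta)$, the elimination loop in Step~8 costed by Theorem~\ref{THM:multicost1} with parameters $(2,v)$, and a final recurrence unrolled over $\lfloor n/2\rfloor$ levels. But your accounting of Step~8 has a genuine gap. First, the per-call cost is misquoted: Theorem~\ref{THM:multicost1} with $n=2$ and $v=n-k-1$ gives exponent $2n+v+2=n-k+5$, i.e.\ $\bigO(d^{\,n-k+5}\M((n-k)d+\log\beta)\log((n-k)d+\log\beta))$ per invocation, which at $k=1$ is $d^{\,n+4}$, not the $d^{\,n+2}$ you claim. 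Second, and more importantly, your aggregation method --- number of elements of $\Lambda_k$ (bounded by $\bigO(d^{2})$) times the worst-case per-call cost --- does not prove the theorem once the per-call exponent is corrected: it yields $\bigO(d^{2}\cdot d^{\,n+4})=\bigO(d^{\,n+6})$ for the first round, exceeding the claimed $d^{\,n+4}$. That your two inaccuracies cancel to land on the stated exponent is a coincidence, not an argument.

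The missing idea is to charge the whole $k$th round as a single call at degree $d$, which is how the paper proceeds: the polynomials $h$ carried by $\Lambda_k$ are, up to monomial factors and the substitution encoded in the type vectors, factors of the one input $f$, so their degrees in each variable sum to $\bigO(d)$ (at worst $\bigO(nd)$ over all variables), their max-norms have word length $\bigO(nd+\log\beta)$ by Lemma~\ref{LEM:facnorm}, and since the bound of Theorem~\ref{THM:multicost1} is superlinear in the degree, the total cost over all elements of $\Lambda_k$ is dominated by one call at full degree, $\bigO(d^{\,n-k+5}\M((n-k)d+\log\beta)\log((n-k)d+\log\beta))$. Summing this geometric-in-$d$ series over $k=1,\dots,n-1$ gives the per-level term $\bigO(d^{\,n+4}\M(n^{2}d+n\log\beta)\log(nd+\log\beta))$, after which your recurrence and unrolling (essentially the paper's) do finish the proof. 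A smaller misquote: the bivariate base case costs $\bigO(d^{6}\M(d+\log\beta)\log(d+\log\beta))$, i.e.\ $\softO(d^{8}+d^{6}\log^{2}\beta)$ with classical arithmetic, not $\softO(d^{6}+d^{4}\log^{2}\beta)$; this is harmless since it still fits under the claimed bound at $n=2$, but it should be corrected.
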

\begin{proof}
  Let $T(n,d,\log\beta)$ denote the number of word operations used by
  the algorithm applied to the polynomial $p$. The first three steps
  are exactly the same as the algorithm {\bf MultivariateQILD$_1$}.
  Thus, as before, Step~1 takes no word operations, Step~2 uses
  $\bigO(d^{n+1}\M(d+\log\beta)\log(d+\log\beta))$ word operations,
  and Step~3 gives the initial cost $T(1,d,\log\beta)\in
  \bigO(d^2\M(d+\log\beta)\log(d+\log\beta))$.  Step~4 deals with the
  bivariate case. By Theorem~\ref{THM:multicost1} with $n=2$ and
  $v=0$, this step yields that $T(2,d,\log\beta)$ is in
  $\bigO(d^6\M(d+\log\beta)\log(d+\log\beta))$.

  In Step~5, by Lemma~\ref{LEM:Gn}, the computation of the content and
  primitive part can be done within
  $\bigO(d^{n+1}\M(nd+\log\beta)\log(nd+\log\beta))$ word operations.
  Notice that $g\in \set Z[q,x_3,\dots,x_n]$ has maximum degree at
  most $d$ in each variable separately and max-norm of word length
  $\bigO(nd+\log\beta)$ by Lemma~\ref{LEM:facnorm}.  Then Step~6 takes
  $\bigO(T(n-2,d,nd+\log\beta))$ word operations. Step~7 takes linear
  time in the cardinality of $\supp(f)$, which is at most
  $(d+1)^n$. In Step~8, notice that for the $k$th iteration, the
  polynomial $h\in\set Z[q,x_{k+2},\dots,x_n][y,x_{k+1}]$ has maximum
  degree at most $d$ in each variable separately and max-norm of word
  length $\bigO(nd+\log\beta)$. Thus by Theorem~\ref{THM:multicost1}
  with $n=2$ and $v=n-k-1$, the $k$th iteration requires
  $\bigO(d^{n-k+5}\M((n-k-1)d+\log\beta)\log((n-k-1)d+\log\beta))$
  word operations. Since $1\leq k\leq n-1$, this step in total takes
  $\bigO(d^{n+4}\M(n^2d+n\log\beta)\log(nd+\log\beta))$ word
  operations, dominating the costs of Steps~9-10.

  In summary, we obtain the recurrence relation
  \[
  \bigO(T(n,d,\log\beta))\subset
  \bigO(d^{n+4}\M(n^2d+n\log\beta)\log(nd+\log\beta))
  +\bigO(T(n-2,d,nd+\log\beta)),
  \]
  along with
  $T(1,d,\log\beta)\in\bigO(d^2\M(d+\log\beta)\log(d+\log\beta))$ and
  $T(2,d,\log\beta)\in\bigO(d^6\M(d+\log\beta)\log(d+\log\beta))$. The
  announced cost follows.
\end{proof}
\begin{corollary}\label{COR:multicost2}
  With the assumptions of Theorem~\ref{THM:multicost2}, the algorithm
  {\bf MultivariateQILD$_2$} computes the $q$-integer linear
  decomposition of $p$ over $\set Z[q,q^{-1}]$ using
  $\bigO(d^{n+6}+d^{n+4}\log^2\beta)$ word operations with classical
  arithmetic and $\softO(d^{n+5}+d^{n+4}\log\beta)$ with fast
  arithmetic.
\end{corollary}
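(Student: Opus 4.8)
The plan is to derive both estimates by specializing the general cost bound of Theorem~\ref{THM:multicost2},
\[
\bigO\big(d^{n+4}\,\M(n^4d+n^2\log\beta)\,\log(n^2d+\log\beta)\big),
\]
to the two standard multiplication models, treating the number $n$ of variables as a fixed parameter so that the coefficients depending only on $n$ are absorbed by the asymptotic notation and only the dependence on $d$ and $\beta$ is tracked.

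First I would treat the fast-arithmetic case. Substituting $\M(m)=\softO(m)$ and using that the soft-Oh notation swallows the trailing factor $\log(n^2d+\log\beta)$ together with every power of $n$, the bound reduces to $\softO(d^{n+4}(n^4d+n^2\log\beta))=\softO(d^{n+4}(d+\log\beta))$; distributing the product via $\softO(a)+\softO(b)=\softO(a+b)$ then gives $\softO(d^{n+5}+d^{n+4}\log\beta)$. For the classical case I would substitute $\M(m)=m^2$, so that $\M(n^4d+n^2\log\beta)=(n^4d+n^2\log\beta)^2$, and apply the elementary inequality $(a+b)^2\le 2(a^2+b^2)$ to obtain $\M(n^4d+n^2\log\beta)=\bigO(n^8d^2+n^4\log^2\beta)$. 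Multiplying by $d^{n+4}$ and expanding yields $\bigO(n^8d^{n+6}+n^4d^{n+4}\log^2\beta)$; folding the $n$-dependent constants and the residual logarithmic factor $\log(n^2d+\log\beta)$ into the asymptotic notation then leaves the claimed $\bigO(d^{n+6}+d^{n+4}\log^2\beta)$.

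Since both estimates are pure substitutions into Theorem~\ref{THM:multicost2}, there is no genuine mathematical difficulty; the only thing to watch is the uniform bookkeeping of the $\bigO$/$\softO$ conventions and of the factor $\log(n^2d+\log\beta)$, handled in exactly the same way as for the analogous classical bound in Corollary~\ref{COR:multicost1}. Concretely, I would check that the two displayed summands in each bound are precisely the images of $d^{n+4}\cdot d^2$ (resp. $d^{n+4}\cdot d$) and of $d^{n+4}\cdot\log^2\beta$ (resp. $d^{n+4}\cdot\log\beta$) under the chosen arithmetic model, with all lower-order terms and polynomial-in-$n$ coefficients absorbed. Nothing beyond this verification is needed.
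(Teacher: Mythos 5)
Your derivation is essentially the paper's route: Corollary~\ref{COR:multicost2} is stated without a separate proof, being an immediate specialization of Theorem~\ref{THM:multicost2} to $\M(m)=m^2$ and $\M(m)=\softO(m)$ with the polynomial-in-$n$ factors absorbed, exactly as you do. The one caveat is your final classical step of folding $\log(n^2d+\log\beta)$ into a plain $\bigO$, which is not literally valid since $\bigO$ does not suppress logarithmic factors; that bound should be read as in Corollary~\ref{COR:multicost1} (i.e.\ up to logarithmic factors, $\softO$), or justified by noting that with classical arithmetic the GCD costs underlying Lemma~\ref{LEM:Gn} are quadratic without the extra $\log$ factor.
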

\begin{remark}\label{REM:conflitti}
The complexity of our both approaches could be further improved if one
finds a multivariate version of the GCD algorithm of \cite{conf2003}.
This is the algorithm which randomly reduces computing the GCD of
several polynomials over a finite field to computing a single GCD of
two polynomials over the same field.
\end{remark}

\subsection{Cost analysis of the resultant-based algorithm}
In this subsection, we review the algorithm of \cite{Le2001}.  As
mentioned in the introduction, this algorithm is based on resultant
and completely focused on bivariate polynomials. So we will further
extend it to also tackle polynomials having more than two variables.

As we proceed with our first approach, the algorithm of \cite{Le2001}
first finds candidates for $q$-integer linear types of a given
bivariate polynomial and then obtains the corresponding univariate
polynomials by going through these candidates. The difference is that
it uses resultants to determine candidates and performs bivariate GCD
computations for detecting each candidate.

In order to state its main idea, let $p\in\R[x,y]$ be a polynomial of
positive total degree which is primitive with respect to its either
variable. By Lemma~\ref{LEM:biqinvariant}, an integer pair
$(\lambda,\mu)$ with $\lambda\mu\neq 0$ is a $q$-integer linear type
of $p$ if and only if there exists a factor $f\in\R[x,y]\setminus\R$
of $p$ with the property that $f$ divides $f(q^{\mu}x,q^{-\lambda}y)$
in $\R[x,y]$. Note that such an $f$ must satisfy $\deg_x(f)\deg_y(f)>0$
and $f(x,0)f(0,y)\neq 0$ because $p$ is assumed to be primitive with
respect to its either variable. By a careful study on the structure of
the factor $f$, it is then not hard to see that $f$ divides
$f(q^{\mu}x,q^{-\lambda}y)$ in $\R[x,y]$ if and only if $f$ divides
$f(qx,q^{-\lambda/\mu}y)$ in $\R[x,y]$. Observe that any integer pair
$(\lambda,\mu)$ with $\lambda\mu\neq 0$ is uniquely determined by the
rational $r=-\lambda/\mu$. We have thus shown the following.
\begin{lemma}\label{LEM:res}
  With $p$ given above, a nonzero rational number $r$ gives rise to a
  $q$-integer linear type of $p$ if and only if $\gcd(p,p(qx,q^ry))
  \notin \R$.
\end{lemma}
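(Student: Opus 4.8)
The plan is to reduce the lemma to the factor criterion that has just been extracted from Lemma~\ref{LEM:biqinvariant}. Writing $r=-\lambda/\mu$ in lowest terms with $\mu>0$, the phrase ``$r$ gives rise to a $q$-integer linear type of $p$'' means exactly that $(\lambda,\mu)$ is a $q$-integer linear type of $p$, which by the discussion preceding the lemma is equivalent to the existence of a factor $f\in\R[x,y]\setminus\R$ of $p$ with $f\mid f(qx,q^{r}y)$. The substitution $x\mapsto qx$, $y\mapsto q^{r}y$ is understood after passing to the UFD $\D[q^{1/\mu},q^{-1/\mu}]$, where it becomes a coefficient-fixing automorphism $\sigma$; I abbreviate $\sigma(h)=h(qx,q^{r}y)$ and read $\gcd(p,\sigma(p))$ there, noting that ``$\notin\R$'' amounts to having positive total degree. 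It then suffices to show that $p$ has such a factor $f$ if and only if $g:=\gcd\bigl(p,\sigma(p)\bigr)$ has positive total degree; the translation between this $q^{r}$-version over the radical extension and the integral version with $(q^{\mu}x,q^{-\lambda}y)$ over $\R$ is precisely the bookkeeping already carried out in the ``careful study'' remark above, which I would reuse.

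The ``only if'' direction I would do by hand, and it is the easy one. Given $f\notin\R$ with $f\mid p$ and $f\mid\sigma(f)$: since $\deg(f)=\deg(\sigma(f))$, the divisibility forces $\sigma(f)=uf$ for a unit $u$, so $f$ and $\sigma(f)$ are associates; from $f\mid p$ we get $\sigma(f)\mid\sigma(p)$, hence $f\mid\sigma(p)$, so $f\mid g$. Moreover $\deg(f)>0$: since $p$ is primitive with respect to $x$ and to $y$ it is divisible by neither variable, so no factor of $p$ outside $\R$ is a monomial. Therefore $g$ has positive total degree.

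The ``if'' direction is where the actual work sits, and it is the step I expect to be the main obstacle. One must turn a $g$ of positive degree into an honest $\sigma$-semi-invariant factor of $p$. The natural candidate is $h:=\gcd_{j\in\set Z}\sigma^{j}(g)$: this is in fact a finite gcd, because the divisors of $g$ up to units are finite in number, so the gcd over $j$ stabilizes; it divides $g$ and hence $p$, and $\sigma(h)$ is an associate of $h$ directly from the defining property. Lemma~\ref{LEM:biqinvariant} then yields $h=x^{\alpha}y^{\beta}P(x^{\lambda}y^{\mu})$, and provided $\deg(h)>0$ this -- after descending back to $\R$ by Gau{\ss}' lemma, using $\mu>0$ and the primitivity of $p$, in the same manner as in the proof of Proposition~\ref{PROP:multitype2poly} -- exhibits $(\lambda,\mu)$ as a $q$-integer linear type of $p$. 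Thus everything hinges on proving $\deg(h)>0$, i.e.\ that the common factor $g$ of $p$ and $\sigma(p)$ does not evaporate once intersected with all of its $\sigma$-shifts. This is the delicate point, and the one that must genuinely exploit the hypotheses on $p$: I would track how $\sigma$ permutes the finitely many irreducible factors of $p$ that divide $g$, and combine the rigidity coming from $p$ being primitive in each variable with the degree bookkeeping $\deg(\sigma^{j}(g))=\deg(g)$ to exclude the scenario in which $\sigma$ merely shuffles these factors without fixing any one of them -- or any product of them -- up to a unit. Making that exclusion airtight is the crux of the argument.
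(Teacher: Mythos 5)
Your reduction to the factor criterion and your ``only if'' half are fine, and they coincide with what the paper actually establishes in the paragraph preceding the lemma: a factor of type $(\lambda,\mu)$ is semi-invariant under $\sigma$, hence divides both $p$ and $p(qx,q^r y)$, so the gcd is nontrivial. The genuine gap is exactly where you put it: your ``if'' direction ends with the unproven claim $\deg(h)>0$, so the proposal is not a proof. More importantly, no orbit argument can close that gap, because the ``if'' direction is false in the stated generality. Take $\R=\set Z[q,q^{-1}]$, $p=(x+y+1)(qx+qy+1)$ and $r=1$. This $p$ has positive total degree and is primitive with respect to each variable (its content in either variable divides the unit $q$), so it satisfies the standing hypotheses. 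Since $p(qx,qy)=(qx+qy+1)(q^{2}x+q^{2}y+1)$, we get $\gcd\bigl(p,p(qx,qy)\bigr)=qx+qy+1\notin\R$. Yet neither irreducible factor of $p$ is $q$-integer linear---each has support $\{(0,0),(1,0),(0,1)\}$, which is not collinear---so $p$ has no $q$-integer linear type at all, and in particular $r=1$ gives rise to none. In your own notation, $g=qx+qy+1$ and $\sigma(g)=q^{2}x+q^{2}y+1$ with $\gcd(g,\sigma(g))=1$, so $h\in\R$: the common factor ``evaporates'' in precisely the way you were hoping to exclude. (A correct equivalence would, as in the ordinary shift case, have to quantify over infinitely many powers of the substitution, i.e.\ require $\gcd(p,p(q^{t}x,q^{rt}y))\notin\R$ for infinitely many integers $t$; then the pigeonhole argument on the finitely many irreducible factors of $p$ does produce a semi-invariant factor and Lemma~\ref{LEM:biqinvariant} applies.)

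For comparison, the paper does not prove the converse either: its ``proof'' is the preceding discussion, which uses Lemma~\ref{LEM:biqinvariant} to show that $(\lambda,\mu)$ is a type of $p$ if and only if some factor $f\notin\R$ of $p$ divides $f(q^{\mu}x,q^{-\lambda}y)$, equivalently $f(qx,q^{r}y)$, and this yields only the necessity of the gcd condition; the ``if and only if'' in the statement is an overstatement. This is harmless for Le's algorithm as the paper uses it, because the resultant/gcd test serves only to produce \emph{candidate} rationals $r$, each of which is afterwards verified by the content computation of Proposition~\ref{PROP:multitype2poly} (Step~8 of {\bf MultivariateQILD$_1$}), which discards fake candidates such as $r=1$ above---note that the surrounding text and the complexity analysis speak only of candidates. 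So your instinct about where the difficulty sits was exactly right; the resolution is not a cleverer argument for $\deg(h)>0$ but the recognition that only the necessity direction holds (and is all that is needed).
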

This implies that for any integer-linear type $(\lambda,\mu)$ of $p$
with $\lambda\mu\neq 0$, the rational number $-\lambda/\mu$ must be a
root of the resultant $\res_y(p,p(qx,q^ry)) \in \R[q^r,x]$ in terms of
$r$, or equivalently, it is eliminated by the content in $\R[q^r]$ of
the resultant with respect to~$x$. Note that such a rational root of a
polynomial in $\R[q^r]$ can be found by matching powers of $q$
appearing in the given polynomial in pairs along with a subsequent
substitution for zero testing. One can find more details in
\citep[\S 5]{Le2001}. Accordingly, we derive a way to produce
candidates for the rationals $-\lambda/\mu$ (and then the $q$-integer
linear types $(\lambda,\mu)$).  After generating candidates, the
algorithm of \cite{Le2001} continues to compute the possible
corresponding univariate polynomial for each candidate
$r=-\lambda/\mu$ by finding a factor $f$ of $p$ that stabilizes
$\gcd(f,f(qx,q^{r}y))$, or more efficiently,
$\gcd(f,f(q^{\mu}x,q^{-\lambda}y))$. This operation actually induces
bivariate polynomial arithmetic over $\R$ and thus may take
considerably more time than Step~8.1 of our algorithm
{\bf MultivariateQILD$_1$}. In order to improve the performance, we
instead proceed by using Step~8 of our algorithm.

We remark that Lemma~\ref{LEM:res} cannot be literally carried over to
polynomials in more than two variables. It is actually not clear how
to directly generalize the algorithm of \cite{Le2001} to the
multivariate case. Nevertheless, using the bivariate-based scheme
indicated by Proposition~\ref{PROP:multi2bi}, this algorithm extends
to the case of polynomials in any number of variables in the same
fashion as our second approach.

The following theorem gives a complexity analysis for the algorithm of
\cite{Le2001} when applied to a polynomial in $\set Z[q,q^{-1}][x,y]$.
\begin{theorem}\label{THM:rescost}
  Let $p\in\set Z[q,q^{-1}][x,y]$. Assume that both the numerator and
  denominator of $p$ have maximum degree $d$ in each variable from
  $\{q,x,y\}$ separately, and let $||p||_\infty=\beta$. Then the
  algorithm of Le takes $\bigO((d^6\log d+d^6\log\beta)\M(d^2)\M(\log
  d+\log\log\beta) \log d\log(\log d+\log\log\beta)+d^6\M(d\log d+d\log
  \beta)\log(d\log d+d\log \beta))$ word operations.
\end{theorem}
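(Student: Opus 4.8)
The plan is to run through the two phases of Le's algorithm --- generating candidate $q$-integer linear types via a resultant, and then extracting the corresponding univariate polynomials --- and to bound the number of word operations used by each.

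First I would record the sizes of the intermediate objects. Promoting $q^r$ to a fresh indeterminate $t$, the polynomial $p(qx,q^ry)=p(qx,ty)$ lies in $\set Z[q,t,x,y]$ with degree at most $2d$ in $q$ and at most $d$ in each of $t,x,y$, and with max-norm $\beta$; forming it from $p$ only permutes $\bigO(d^3)$ coefficients. Set $R:=\res_y\!\bigl(p,p(qx,ty)\bigr)\in\set Z[q,t,x]$. The standard degree bound for resultants ($\deg_v\res_y(f,g)\le\deg_y(g)\deg_v(f)+\deg_y(f)\deg_v(g)$ for every other variable $v$) gives $\deg_q R,\deg_t R,\deg_x R\in\bigO(d^2)$, while Lemma~\ref{LEM:resultant} applied with $y$ as the eliminated variable and all remaining variables of degree at most $2d$ yields $\log||R||_\infty\in\bigO(d\log d+d\log\beta)$.

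The dominant step is the computation of $R$ itself, which I would carry out by a small-prime modular scheme coupled with evaluation--interpolation (equivalently, a subresultant computation with modular coefficient reduction). For each of $\bigO\bigl((d\log d+d\log\beta)/(\log d+\log\log\beta)\bigr)$ primes of word length $\bigO(\log d+\log\log\beta)$, one reduces $p$ and $p(qx,ty)$ modulo the prime, evaluates at $\bigO(d^2)$ points for each of $q,t,x$, computes the univariate resultant in $y$ of the resulting polynomials of degree at most $d$ in $y$ at each of the $\bigO(d^6)$ points, and interpolates to recover $R$ modulo the prime; finally one reconstructs the $\bigO(d^6)$ integer coefficients of $R$ by the Chinese remainder theorem. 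Summing the per-point univariate-resultant cost and the per-prime multivariate interpolation cost over all points and primes produces the first displayed term, while the gcd extracting $\cont_x(R)\in\set Z[q,t]$ (a gcd of $\bigO(d^2)$ coefficient polynomials, bounded using Lemma~\ref{LEM:Gn}) together with the CRT reconstruction of the $\bigO(d^6)$ coefficients of word length $\bigO(d\log d+d\log\beta)$ produces the second.

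It then remains to check that the rest of the algorithm is no more expensive: from $\cont_x(R)$ one reads off the candidate rationals $r=-\lambda/\mu$ by matching powers of $q$ against powers of $t$ and running a substitution-and-zero-test for each, as in \citep[\S 5]{Le2001}, the number of candidates being polynomial in $d$ and each test dominated by the earlier bounds; and for each surviving candidate the corresponding univariate polynomial is obtained from Step~8 of {\bf MultivariateQILD$_1$}, a single content computation whose cost is covered by Theorem~\ref{THM:multicost1} with $n=2$, $v=0$. Adding the contributions yields the stated estimate. The main obstacle I anticipate is the bookkeeping in the resultant phase: deriving sharp degree and coefficient bounds for $R$ with $q^r$ treated as an indeterminate, and then balancing the number of evaluation points (fixed by the $\bigO(d^2)$ degrees of $R$) against the number and size of the CRT primes (fixed by $||R||_\infty$) so that the per-point resultant cost, the interpolation cost, and the integer reconstruction cost combine into precisely the two terms displayed; a secondary subtlety is confirming that the rational-root search over $\set Z[q,q^r]$ and the content-based extraction are genuinely of lower order.
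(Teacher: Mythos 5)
Your plan follows the paper's proof essentially step for step: promote $q^r$ to a fresh indeterminate, bound the degrees of $\res_y(p,p(qx,q^ry))$ by $\bigO(d^2)$ in each variable and its coefficient length by $\bigO(d\log d+d\log\beta)$, compute it by a small-prime modular evaluation--interpolation scheme (evaluation at $\bigO(d^2)$ points per variable, $\bigO(d^6)$ univariate resultants in $y$ per prime, primes of word length $\bigO(\log d+\log\log\beta)$), reconstruct the $\bigO(d^6)$ coefficients by the Chinese remainder theorem to obtain the second displayed term, and note that the rational-root matching and the content-based extraction of the univariate polynomials are of lower order. The only cosmetic difference is that you bound $\|R\|_\infty$ by applying Lemma~\ref{LEM:resultant} directly in the four-variable setting, whereas the paper first observes that the Sylvester-matrix entries are monomials in $q^r$ and so reduces to the three-variable bound; both give the same estimate and lead to the same final count.
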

\begin{proof}
  With a slight abuse of notation, let $p$ be the input polynomial
  with content with respect to its either variable being removed.
  Then $p\in\set Z[q,x,y]$ and $\log||p||_\infty\in\bigO(d+\log\beta)$.
  The algorithm proceeds to compute the resultant
  $\res_y(p,p(qx,q^ry))$ with $r$ undetermined. By definition, it is
  readily seen that $\res_y(p,p(qx,q^ry))$ is a polynomial in $\set
  Z[q,q^r,x]$ of degree in $q$ at most $3d^2$, degree in $q^r$ at most
  $d^2$ and degree in $x$ at most $2d^2$. Observe that every entry in
  the Sylvester matrix is a monomial in $q^r$.  Thus we have
  $||\res_y(p,p(qx,q^ry))||_\infty\leq||\res_y(p,p(qx,y))||_\infty$,
  which, by Lemma~\ref{LEM:resultant}, is at most
  $B=(2d)!(2d+1)^{2d-1}(d+1)^{2d-1}||p||_\infty^{2d}$. Then $\log B\in
  \bigO(d\log d+d\log\beta)$. Viewing $q^r$ as a new indeterminate $u$
  independent of $q$, we can compute this resultant using a small
  prime modular algorithm, along with an evaluation-interpolation
  scheme: (1) choose $\lceil\log_2(2B+1)\rceil$ primes, each of word
  length $\bigO(\log\log B)$; (2) for every chosen prime $h$, do the
  following: reduce all coefficients of $p$ and $p(qx,uy)$ modulo $h$,
  evaluate both modular images successively at $3d^2$ points for $q$,
  $d^2$ points for $u$ and $2d^2$ points for $x$, compute $6d^6$
  resultants of two polynomials in $\set Z_h[y]$ of degrees in $y$ at
  most $d$, and recover the modular resultant by interpolation; (3)
  reconstruct the desired resultant using the Chinese remainder
  theorem. Neglecting the cost for choosing primes in Step (1), we
  analyze the costs used by Steps~(2)-(3). In Step~(2), the cost per
  prime $h$ for reducing all coefficients modulo $h$ is
  $\bigO(d^2\log\beta\log h)$ word operations. The process of
  evaluation and interpolation is performed in $\bigO(d^5\M(d^2)\log d)$
  arithmetic operations in~$\set Z_h$. Each resultant over $\set Z_h[y]$
  can be computed using $\bigO(\M(d)\log d)$ arithmetic operations 
  in~$\set Z_h$, yielding $\bigO(d^6\M(d)\log d)$ arithmetic operations
  in $\set Z_h$ in total for this step. Notice that the cost for each
  arithmetic operation in $\set Z_h$ is $\bigO(\M(\log h)\log\log h)$
  word operations. Also notice that every chosen prime $h$ is of word
  length $\log h\in\bigO(\log d+\log\log \beta)$.  Thus Step~(2) in
  total takes $\bigO((d^6\log d+d^6\log\beta)\M(d^2)\M(\log d+\log\log\beta)
  \log d\log(\log d+\log\log\beta))$ word operations.  In Step~(3),
  the Chinese remainder theorem requires $\bigO(d^6\M(d\log d+d\log\beta)
  \log(d\log d+d\log \beta))$ word operations. Therefore, computing
  the resultant $\res_y(p,p(qx,q^ry))$ takes
  $\bigO((d^6\log d+d^6\log\beta)\M(d^2)\M(\log d+\log\log\beta)\log d
  \log(\log d+\log\log\beta)
  +d^6\M(d\log d+d\log \beta)\log(d\log d+d\log\beta))$
  word operations. This dominates the costs for subsequent steps
  including finding the rational roots and computing corresponding
  univariate polynomials. The claimed cost follows.
\end{proof}
\begin{corollary}\label{COR:rescost}
  With the assumptions of Theorem~\ref{THM:rescost}, the algorithm of
  Le takes $\softO(d^{10}\log\beta+d^8\log^2\beta)$ word operations
  with classical arithmetic and $\softO(d^8\log\beta)$ with fast
  arithmetic.
\end{corollary}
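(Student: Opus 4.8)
The plan is to read Corollary~\ref{COR:rescost} straight off Theorem~\ref{THM:rescost} by instantiating the multiplication function $\M$ in the two standard ways and then collapsing into the soft-Oh every factor that is polylogarithmic in the size parameters, which here are $d$ and $\log\beta$ (the variables $q,x,y$ entering only through $d$). Before substituting, I would first tidy the arguments inside the occurrences of $\M$ and of $\log$ in the bound of Theorem~\ref{THM:rescost}: the quantities $\log d$, $\log\log\beta$, $\log(\log d+\log\log\beta)$ and $\log(d\log d+d\log\beta)$ are all polylogarithmic in $d$ and $\beta$, hence each is $\softO(1)$; likewise $d^6\log d+d^6\log\beta=\softO(d^6\log\beta)$. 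Using subadditivity of $\M$ together with the assumed bound $\M(ab)\le a^2\M(b)$ and $d\log d+d\log\beta\le 2d(\log d+\log\beta)$, I would also record that $\M(d\log d+d\log\beta)=\softO\big(\log^2\beta\cdot\M(d)\big)$.

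Next I would carry out the two substitutions. For classical arithmetic, set $\M(t)=t^2$, so $\M(d^2)=d^4$, $\M(\log d+\log\log\beta)=\softO(1)$, and $\M(d\log d+d\log\beta)=\softO(d^2\log^2\beta)$; feeding these into the first summand of the Theorem~\ref{THM:rescost} bound gives $\softO(d^6\log\beta\cdot d^4)=\softO(d^{10}\log\beta)$, and into the second gives $\softO(d^6\cdot d^2\log^2\beta)=\softO(d^8\log^2\beta)$, whose sum is the claimed $\softO(d^{10}\log\beta+d^8\log^2\beta)$. For fast arithmetic, set $\M(t)=\softO(t)$, so $\M(d^2)=\softO(d^2)$ and $\M(d\log d+d\log\beta)=\softO(d\log\beta)$; the two summands then become $\softO(d^6\log\beta\cdot d^2)=\softO(d^8\log\beta)$ and $\softO(d^6\cdot d\log\beta)=\softO(d^7\log\beta)$, whose sum is $\softO(d^8\log\beta)$.

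I do not expect a genuine obstacle: the argument is pure bookkeeping with the soft-Oh. The only point needing care is to fix at the outset that the size parameters are $d$ and $\log\beta$, so that factors polynomial in $\log d$ and in $\log\log\beta$ are legitimately suppressed while $\log\beta$ itself is retained; keeping that convention consistent is exactly what makes the exponents come out as stated.
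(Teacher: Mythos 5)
Your proposal is correct and is exactly how the corollary follows in the paper: Corollary~\ref{COR:rescost} is stated there without separate proof as the immediate instantiation of the bound in Theorem~\ref{THM:rescost} with $\M(t)=t^2$ (classical) and $\M(t)=\softO(t)$ (fast), suppressing all factors polylogarithmic in $d$ and $\beta$ while retaining $\log\beta$. Your bookkeeping of the two summands ($\softO(d^{10}\log\beta)$ and $\softO(d^{8}\log^{2}\beta)$ classically; $\softO(d^{8}\log\beta)$ and $\softO(d^{7}\log\beta)$ with fast arithmetic) matches the intended derivation.
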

\subsection{Cost analysis of the factorization-based algorithm}
\label{SUBSEC:factor}
In the current subsection, we introduce another algorithm which is
based on full irreducible factorization of polynomials and works for
polynomials in any number of variables. In order to analyze its cost,
we will briefly describe its main ideas.

The key observation of this algorithm is that, for any $q$-integer
linear polynomial $p\in\R[\bm x]$ of only one type
$(\lambda_1,\dots,\lambda_n)$, the difference of any two vectors from
$\supp(p)$ can be written into the form
$k\cdot(\lambda_1,\dots,\lambda_n)$ for some $k\in\set Z$. This allows
one to readily determine the $q$-integer linearity of any irreducible
polynomial. That is, given an irreducible polynomial $p\in\R[\bm x]$,
take $\bm \alpha\in \supp(p)$ to be such that $\bx{\alpha}$ is the
trailing monomial of $p$ and investigate whether the difference
between $\bm \alpha$ and any other vector from $\supp(p)$ is equal to
a scalar multiple of the same integer vector. One thus immediately
establishes a factorization-based algorithm for computing the
$q$-integer linear decomposition of a polynomial in $\R[\bm x]$: (1)
first perform the full irreducible factorization of the input
polynomial over~$\R$, then (2) determine the $q$-integer linearity of
each irreducible factor and finally (3) regroup all factors of the
same $q$-integer linear type.

A careful study of the above algorithm leads to the following
complexity.
\begin{theorem}\label{THM:faccost}
  Let $p$ be a polynomial in $\set Z[q,q^{-1}][x,y]$. Assume that both
  the numerator and denominator of $p$ have maximum degree $d$ in each
  variable from $\{q,x,y\}$ separately, and let $||p||_\infty=\beta$.
  Then the factorization-based algorithm described above requires
  $\softO(d^9\log^2\beta)$ word operations with classical arithmetic
  and $\softO(d^8\log\beta)$ with fast arithmetic.
\end{theorem}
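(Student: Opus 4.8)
The plan is to split the running time of the algorithm into its three phases — the irreducible factorization of $p$ over $\R=\set Z[q,q^{-1}]$, the $q$-integer linearity test for each irreducible factor, and the regrouping of factors by type — and to show that the factorization phase dominates.

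I would begin by normalizing inside the factorization phase. Clearing powers of $q$ replaces $p$ by a polynomial $\bar p\in\set Z[q,x,y]$ of degree $\bigO(d)$ in each of $q,x,y$ with $||\bar p||_\infty=\beta$, and removing $\cont_{x,y}(\bar p)$ — a polynomial in $q$ alone, hence trivially $q$-integer linear and absorbed into the constant factor $c$ of the decomposition \eqref{EQ:multiqild} — is a GCD of $\bigO(d^2)$ polynomials in $\set Z[q]$ of degree $\le d$ and word length $\bigO(\log\beta)$, costing at most $\softO(d^5+d^3\log^2\beta)$ word operations by Lemma~\ref{LEM:Gn}. After this, $\bar p$ is primitive over $\set Z[q]$, its irreducible factors over $\set Z$ coincide with its irreducible factors over $\R$ up to units $\pm q^k$, and by Lemma~\ref{LEM:facnorm} each of them has degree $\bigO(d)$ in each variable and max-norm $e^{\bigO(d)}\beta$ (word length $\bigO(d+\log\beta)$), while the total degrees of the factors sum to $\deg(\bar p)=\bigO(d)$, so there are only $\bigO(d)$ of them counted with multiplicity.

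The core of the proof is factoring the trivariate polynomial $\bar p$ over $\set Z$. Here I would invoke a known complexity bound for multivariate integer polynomial factorization (see \citep{vzGGe2013} and the references therein): one reduces to a univariate factorization over $\set Z$ of a polynomial of degree $\bigO(d)$ whose coefficients — got by evaluating two of the three variables at small integers — have word length $\bigO(d+\log\beta)$, then recovers the multivariate factors by Hensel lifting to precision $\bigO(d)$ in the two evaluated variables and recombining the lifted factors. Specializing this scheme to three variables of degree $\bigO(d)$ and coefficients of word length $\bigO(\log\beta)$, and tracking the coefficient bit-lengths (which enter through $\M$ and account for the $\log^2\beta$ versus $\log\beta$ difference between classical and fast arithmetic), should yield the asserted bounds $\softO(d^9\log^2\beta)$ and $\softO(d^8\log\beta)$ for this phase.

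It then remains to check that the other two phases are subsumed. The $q$-integer linearity test for an irreducible factor $f_i$ looks only at exponent vectors: fixing the exponent $\bm\alpha$ of the trailing monomial of $f_i$, it decides by iterated integer GCDs on the $\bigO(d^2)$ differences $\bm\beta-\bm\alpha$, $\bm\beta\in\supp(f_i)$ (each of word length $\bigO(\log d)$), whether they all lie on one line through the origin; over the $\bigO(d)$ factors this is $\softO(d^3)$ word operations. In the regrouping phase, from each $q$-integer linear factor of type $\bm\lambda_i$ one reads off the corresponding univariate polynomial, then forms $P_i$ as the product of those univariate polynomials and $P_0$ as the product of the remaining factors; since every partial product divides $\bar p$, each intermediate result has $\bigO(d^3)$ terms and coefficients of word length $\bigO(d+\log\beta)$ by Lemma~\ref{LEM:facnorm}, so the $\bigO(d)$ multiplications stay within $\softO(d^8+d^6\log^2\beta)$ word operations with classical arithmetic. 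Hence the total cost is governed by the factorization phase, as claimed. The main obstacle is the third paragraph: one must commit to a concrete multivariate factorization algorithm and carefully propagate the degree and height bounds through it, verifying which step — reduction, lifting, or recombination — sets the exponents $d^9$ (classical) and $d^8$ (fast).
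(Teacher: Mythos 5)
Your overall structure---the irreducible factorization dominates, while the linearity test and the regrouping are cheap---is the same as the paper's, and your accounting of the ancillary phases (content removal, the $\bigO(d)$ factors with supports of size $\bigO(d^2)$, the partial products bounded via Lemma~\ref{LEM:facnorm}) is sound and in fact more explicit than the paper's one-line dismissal of those costs. The genuine gap is exactly where you flag it yourself: the entire content of the theorem is the pair of exponents $\softO(d^9\log^2\beta)$ and $\softO(d^8\log\beta)$, and your third paragraph never derives them. You appeal to an unspecified ``known complexity bound'' for dense trivariate factorization over $\set Z$ via evaluation, univariate factorization, Hensel lifting and recombination, and then assert that it ``should yield'' the claimed bounds. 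Besides being unverified, the route you gesture at is delicate: the factor-recombination step of the classical Hensel scheme is not polynomially bounded in the worst case without extra machinery (lattice reduction or a Kaltofen/Gao-type device), and \citep{vzGGe2013} does not hand you a ready-made bound of the required shape for this setting; the paper itself remarks that no explicit analysis of factoring in $\set Z[q][x,y]$ appears to be available beyond polynomial time \citep{Kalt1985}.

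What the paper does at precisely this point is commit to a concrete algorithm whose cost can be tracked: after removing contents it regards $p$ as a bivariate polynomial in $x,y$ over the function field $\set Q(q)$ (Gau{\ss}' lemma transfers the factorization back to $\set Z[q,q^{-1}][x,y]$) and applies the partial-differential-equation factorization algorithm of \citep{Gao2003}, which avoids recombination and reduces the problem to linear algebra and GCD computations whose sizes and coefficient growth in $q$ and in the integer coefficients lead to the stated $\softO(d^9\log^2\beta)$ classical and $\softO(d^8\log\beta)$ fast bounds. To complete your argument you would have to either carry out such an analysis for a concrete Hensel-based scheme (including a polynomial bound on recombination) or switch, as the paper does, to Gao's algorithm over $\set Q(q)$.
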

\begin{proof}
  Computing a complete factorization of $p$ into irreducibles over
  $\set Z[q,q^{-1}]$ dominates the other costs of the algorithm. This
  is essentially the complexity of factoring in $\set Z[q][x,y]$, for
  polynomials bounded by degree $d$ in all variables ($q$, $x$
  and~$y$).  While we do not know of an explicit analysis of this
  complexity (beyond being in polynomial-time, since \citep{Kalt1985}),
  the algorithm of \cite{Gao2003} can be applied and analyzed over the
  function field $\set Q(q)$, and appears to require
  $\softO(d^9\log^2\beta)$ word operations with classical arithmetic
  and $\softO(d^8\log\beta)$ with fast arithmetic.
\end{proof}

\begin{remark}
  Recall from Corollary~\ref{COR:rescost} that the algorithm of Le
  takes $\softO(d^{10}\log\beta+d^8\log^2\beta)$ word operations with
  classical arithmetic and $\softO(d^8\log\beta)$ with fast arithmetic. 
  This compares to the above algorithm based on factorization which 
  requires $\softO(d^9\log^2\beta)$ word operations with classical 
  arithmetic and $\softO(d^8\log\beta)$ with fast arithmetic.  
  All of these compare to Corollary~\ref{COR:multicost1} 
  (or Corollary~\ref{COR:multicost2}) with $n=2$, which reads that 
  our algorithm when restricted to the bivariate case takes 
  $\softO(d^{8}+d^{6}\log^2\beta)$ word operations with classical 
  arithmetic and $\softO(d^{7}+d^{6}\log\beta)$ with fast arithmetic.
\end{remark}

\section{Implementation and timings}\label{SEC:tests}
We have implemented both of our algorithms in {\sc Maple~2018} in the
case where the domain $\R$ is the ring of polynomials over $\set
Z[q,q^{-1}]$.  The code is available by email request. In order to get
an idea about the efficiency of our algorithms, we have compared their
runtimes, as well as the memory requirements, to the performance of
our Maple implementations of the two algorithms discussed in the
preceding section.

The test suite was generated by
\begin{equation}\label{EQ:test}
  p = P_0\prod_{i=1}^m\num(P_i(\bx{\lambda_i})),
\end{equation}
where $n,m\in\set N$,
\begin{itemize}
\item $P_0\in \set Z[q][x_1,\dots,x_n]$ is a random polynomial with
  $\deg_{x_1,\dots,x_n}(P_0)=\deg_q(P_0)=d_0$,
\item the $\bm\lambda_i\in\set Z^n$ are random integer vectors each of
  which has entries of maximum absolute value no more than 10 (note
  that they may not be distinct),
\item $P_i(z)=f_{i1}(z)f_{i2}(z)$ with $f_{ij}(z)\in\set Z[q][z]$ a
  random polynomial of degree $j\cdot d$ for some $d\in\set N$, and
  $\num(\,\cdots)$ denotes the numerator of the argument.
\end{itemize}
Note that, in all tests, the algorithms take the expanded forms of
examples given above as input. All timings are measured in seconds on
a Linux computer with 128GB RAM and fifteen 1.2GHz Dual core
processors. The computations for the experiments did not use any
parallelism.

For a selection of random polynomials of the form \eqref{EQ:test} for
different choices of $n,m,d_0,d$, Table~\ref{TAB:timing} collects the
timings of the algorithm of Le (LQILD), the algorithm based on
factorization (FQILD) and our two algorithms (MQILD$_1$, MQILD$_2$).
The dash in the table indicates that with this choice of
$(m,n,d_0,d)$, the corresponding procedure reached the CPU time limit
(which was set to 12 hours) and yet did not return.

\begin{table}[ht]
  \centering
  \begin{tabular}{l|rrrr}
    $(n,m,d_0,d)$ & LQILD & FQILD & MQILD$_1$ & MQILD$_2$\\\hline
    $(2, 1, 1, 1)$ & 5408.48 & 0.04 & 0.01 & 0.01 \\
    $(2, 1, 5, 1)$ & 8381.99 & 0.06 & 0.03 & 0.03 \\
    $(2, 1, 10, 1)$ & --~~ & 0.19 & 0.04 & 0.04 \\
    $(2, 1, 20, 1)$ & --~~ & 0.63 & 0.09 & 0.09 \\
    $(2, 1, 30, 1)$ & --~~ & 1.47 & 0.13 & 0.10 \\
    $(2, 1, 40, 1)$ & --~~ & 2.55 & 0.24 & 0.21 \\
    $(2, 1, 50, 1)$ & --~~ & 6.64 & 0.42 & 0.39 \\
    $(2, 2, 10, 1)$ & --~~ & 0.92 & 0.10 & 0.08 \\
    $(2, 3, 10, 1)$ & --~~ & 3.29 & 0.31 & 0.26 \\
    $(2, 4, 10, 1)$ & --~~ & 5.74 & 0.67 & 0.54 \\
    $(2, 5, 10, 1)$ & --~~ & 18.83 & 2.01 & 1.54 \\
    $(2, 2, 10, 2)$ & --~~ & 4.55 & 0.27 & 0.20 \\
    $(2, 4, 10, 2)$ & --~~ & 114.82 & 4.98 & 4.53 \\
    $(2, 5, 10, 2)$ & --~~ & 264.02 & 25.63 & 24.29 \\
    $(2, 3, 10, 2)$ & --~~ & 36.14 & 1.38 & 1.21 \\
    $(2, 3, 10, 3)$ & --~~ & 169.13 & 4.28 & 3.80 \\
    $(2, 3, 10, 4)$ & --~~ & 649.03 & 12.15 & 12.86 \\
    $(2, 3, 10, 5)$ & --~~ & 1554.31 & 31.54 & 33.50 \\
    $(2, 2, 5, 1)$ & --~~ & 0.32 & 0.05 & 0.05 \\
    $(3, 2, 5, 1)$ & --~~ & 1.99 & 0.14 & 0.12 \\
    $(4, 2, 5, 1)$ & --~~ & 11.46 & 0.35 & 0.20 \\
    $(5, 2, 5, 1)$ & --~~ & 183.17 & 0.99 & 0.63 \\
    $(6, 2, 5, 1)$ & --~~ & 1141.32 & 2.58 & 0.98 \\
    $(7, 2, 5, 1)$ & --~~ & 11759.89 & 6.07 & 1.74 \\
    $(8, 2, 5, 1)$ & --~~ & 18153.45 & 10.60 & 5.29 \\
    $(9, 2, 5, 1)$ & --~~ & --~~ & 65.53 & 38.12 \\
    $(10, 2, 5, 1)$ & --~~ & --~~ & 176.25 & 89.87 \\
    \hline
  \end{tabular}
  \medskip
  \caption{\small Comparison of all four algorithms for a collection
    of polynomials $p$ of the form \eqref{EQ:test}.}
  \label{TAB:timing}
\end{table}

\section{Conclusion}\label{SEC:conclusion}
In this paper we have presented two new algorithms for computing the
$q$-integer linear decomposition of a multivariate polynomial over any
UFD of characteristic zero. When restricted to the bivariate case,
both algorithms reduce to the same algorithm.  For the sake of
comparison, we included an algorithm based on full irreducible
factorization of polynomials. Compared with the known algorithm of
\cite{Le2001} and this factorization-based algorithm in the bivariate
case, our algorithm is considerably faster.  In practice, both our
algorithms are also more efficient than these two algorithms. In
addition, we have extended and improved the original contribution of
Le and provided complexity analysis for the improved version. We
remark that both our algorithms have much better performances than the
other two algorithms in the case where the coefficient domain contains
algebraic numbers.

\section*{Acknowledgments}
We would like to thank the anonymous referees for many useful and
constructive suggestions. In particular, we would like to express our
gratitude to one referee, whose comments led to the current simple
forms of the proofs of Lemmas~4.1-4.2 and considerably improved the
structure and content of this work.  Most of the work presented in
this paper was carried out while Hui Huang was a Post Doctoral Fellow
at the University of Waterloo.
This research was partly supported by the Natural Sciences and
Engineering Research Council (NSERC) Canada. Hui Huang was also
supported by the Fundamental Research Funds for the Central
Universities.

\bibliographystyle{elsarticle-harv}

\begin{thebibliography}{27}
\expandafter\ifx\csname natexlab\endcsname\relax\def\natexlab#1{#1}\fi
\expandafter\ifx\csname url\endcsname\relax
  \def\url#1{\texttt{#1}}\fi
\expandafter\ifx\csname urlprefix\endcsname\relax\def\urlprefix{URL }\fi

\bibitem[{Abramov and Le(2002)}]{AbLe2002}
Abramov, S.~A., Le, H.~Q., 2002. A criterion for the applicability of
  {Z}eilberger's algorithm to rational functions. Discrete Math. 259~(1-3),
  1--17.
\newline\urlprefix\url{http://dx.doi.org/10.1016/S0012-365X(02)00442-9}

\bibitem[{Abramov and Petkov{\v{s}}ek(2002)}]{AbPe2002a}
Abramov, S.~A., Petkov{\v{s}}ek, M., 2002. On the structure of multivariate
  hypergeometric terms. Adv. in Appl. Math. 29~(3), 386--411.
\newline\urlprefix\url{http://dx.doi.org/10.1016/S0196-8858(02)00022-2}

\bibitem[{Andrews(1976)}]{Andr1976}
Andrews, G.~E., 1976. The {T}heory of {P}artitions. Addison-Wesley Publishing
  Co., Reading, Mass.-London-Amsterdam, encyclopedia of Mathematics and its
  Applications, Vol. 2.

\bibitem[{Andrews(1986)}]{Andr1986}
Andrews, G.~E., 1986. {$q$}-{S}eries: {T}heir {D}evelopment and {A}pplication
  in {A}nalysis, {N}umber {T}heory, {C}ombinatorics, {P}hysics, and {C}omputer
  {A}lgebra. Vol.~66 of CBMS Regional Conference Series in Mathematics.
  Published for the Conference Board of the Mathematical Sciences, Washington,
  DC; by the American Mathematical Society, Providence, RI.

\bibitem[{Bistritz and Lifshitz(2010)}]{BiLi2010}
Bistritz, Y., Lifshitz, A., 2010. Bounds for resultants of univariate and
  bivariate polynomials. Linear Algebra Appl. 432~(8), 1995--2005.
\newline\urlprefix\url{https://doi.org/10.1016/j.laa.2009.08.012}

\bibitem[{Bostan and Yurkevich(2020)}]{BoYu2020}
Bostan, A., Yurkevich, S., 2020. Fast computation of the $n$-th term of a
  $q$-holonomic sequence and applications. Preprint: arXiv:2012.08656.

\bibitem[{Chen and Koutschan(2019)}]{ChKo2019}
Chen, S., Koutschan, C., 2019. Proof of the {W}ilf-{Z}eilberger conjecture for
  mixed hypergeometric terms. J. Symbolic Comput. 93, 133--147.
\newline\urlprefix\url{http://www.sciencedirect.com/science/article/pii/S0747717118300671}

\bibitem[{Chen et~al.(2005)Chen, Hou, and Mu}]{CHM2005}
Chen, W. Y.~C., Hou, Q.-H., Mu, Y.-P., 2005. Applicability of the
  {$q$}-analogue of {Z}eilberger's algorithm. J. Symbolic Comput. 39~(2),
  155--170.
\newline\urlprefix\url{http://dx.doi.org/10.1016/j.jsc.2004.09.002}

\bibitem[{Conflitti(2003)}]{conf2003}
Conflitti, A., 2003. On computation of the greatest common divisor of several
  polynomials over a finite field. Finite Fields Appl. 9~(4), 423--431.
\newline\urlprefix\url{https://doi.org/10.1016/S1071-5797(03)00022-4}

\bibitem[{Du and Li(2019)}]{DuLi2019}
Du, H., Li, Z., 2019. The {O}re-{S}ato theorem and shift exponents in the
  {$q$}-difference case. J. Syst. Sci. Complex. 32~(1), 271--286.
\newline\urlprefix\url{https://doi.org/10.1007/s11424-019-8355-1}

\bibitem[{Gao(2003)}]{Gao2003}
Gao, S., 2003. Factoring multivariate polynomials via partial differential
  equations. Math. Comp. 72~(242), 801--822.
\newline\urlprefix\url{https://doi.org/10.1090/S0025-5718-02-01428-X}

\bibitem[{\Gathen{von zur Gathen} and Gerhard(2013)}]{vzGGe2013}
\Gathen{von zur Gathen}, J., Gerhard, J., 2013. {M}odern {C}omputer {A}lgebra,
  3rd Edition. Cambridge University Press, Cambridge.
\newline\urlprefix\url{http://dx.doi.org/10.1017/CBO9781139856065}

\bibitem[{Geddes et~al.(1992)Geddes, Czapor, and Labahn}]{GCL1992}
Geddes, K.~O., Czapor, S.~R., Labahn, G., 1992. Algorithms for computer
  algebra. Kluwer Academic Publishers, Boston, MA.
\newline\urlprefix\url{http://dx.doi.org/10.1007/b102438}

\bibitem[{Gel{\cprime}fond(1960)}]{Gelf1960}
Gel{\cprime}fond, A.~O., 1960. Transcendental and algebraic numbers. Translated
  from the first Russian edition by Leo F. Boron. Dover Publications, Inc., New
  York.

\bibitem[{Giesbrecht et~al.(2019)Giesbrecht, Huang, Labahn, and
  Zima}]{GHLZ2019}
Giesbrecht, M., Huang, H., Labahn, G., Zima, E., 2019. Efficient integer-linear
  decomposition of multivariate polynomials. In: Proceedings of {ISSAC}'19.
  ACM, New York, pp. 171--178.
\newline\urlprefix\url{http://doi.acm.org/10.1145/3326229.3326261}

\bibitem[{Giesbrecht et~al.(2021)Giesbrecht, Huang, Labahn, and
  Zima}]{GHLZ2021}
Giesbrecht, M., Huang, H., Labahn, G., Zima, E., 2021. Efficient rational
  creative telescoping. To appear in Journal of Symbolic Computation.

\bibitem[{Goodman et~al.(2018)Goodman, O'Rourke, and T\'{o}th}]{GOT2018}
Goodman, J.~E., O'Rourke, J., T\'{o}th, C.~D. (Eds.), 2018. Handbook of
  {D}iscrete and {C}omputational {G}eometry. Discrete Mathematics and its
  Applications (Boca Raton). CRC Press, Boca Raton, FL, third edition of [
  MR1730156].

\bibitem[{Gr\"{u}nbaum(2003)}]{Grun2003}
Gr\"{u}nbaum, B., 2003. Convex polytopes, 2nd Edition. Vol. 221 of Graduate
  Texts in Mathematics. Springer-Verlag, New York, prepared and with a preface
  by Volker Kaibel, Victor Klee and G\"{u}nter M. Ziegler.
\newline\urlprefix\url{https://doi.org/10.1007/978-1-4613-0019-9}

\bibitem[{Kaltofen(1985)}]{Kalt1985}
Kaltofen, E., 1985. Polynomial-time reductions from multivariate to bi- and
  univariate integral polynomial factorization. SIAM J. Comput. 14~(2),
  469--489.
\newline\urlprefix\url{https://doi.org/10.1137/0214035}

\bibitem[{Le(2001)}]{Le2001}
Le, H.~Q., 2001. On the $q$-analogue of {Z}eilberger's algorithm to rational
  functions. Program.\ Comput.\ Softw. 27~(1), 35--42.
\newline\urlprefix\url{http://dx.doi.org/10.1023/A:1007186703355}

\bibitem[{Le et~al.(2001)Le, Abramov, and Geddes}]{LAG2001}
Le, H.~Q., Abramov, S.~A., Geddes, K.~O., 2001. A direct algorithm to construct
  the minimal telescopers for rational functions (q-difference case). Technical
  Report CS-2001-25, Department of Computer Science, University of Waterloo,
  ON, Canada.

\bibitem[{Li and Zhang(2013)}]{LiZh2013}
Li, Z., Zhang, Y., 2013. An algorithm for decomposing multivariate
  hypergeometric terms. A contributed talk in CM'13.

\bibitem[{Ostrowski(1921)}]{Ostr1921}
Ostrowski, A.~M., 1921. {\"U}ber die {B}edeutung der {T}heorie der konvexen
  {P}olyeder f\"ur die formale {A}lgebra. Jahresberichte Deutsche Math. Verein
  30, 98--99.

\bibitem[{Ostrowski(1975)}]{Ostr1975}
Ostrowski, A.~M., 1975. On multiplication and factorization of polynomials.
  {I}. {L}exicographic orderings and extreme aggregates of terms. Aequationes
  Math. 13~(3), 201--228.
\newline\urlprefix\url{https://doi.org/10.1007/BF01836524}

\bibitem[{Paule and Riese(1997)}]{PaRi1997}
Paule, P., Riese, A., 1997. A {M}athematica {$q$}-analogue of {Z}eilberger's
  algorithm based on an algebraically motivated approach to
  {$q$}-hypergeometric telescoping. In: Special functions, {$q$}-series and
  related topics ({T}oronto, {ON}, 1995). Vol.~14 of Fields Inst. Commun. Amer.
  Math. Soc., Providence, RI, pp. 179--210.

\bibitem[{Wilf and Zeilberger(1992)}]{WiZe1992a}
Wilf, H.~S., Zeilberger, D., 1992. An algorithmic proof theory for
  hypergeometric (ordinary and ``{$q$}'') multisum/integral identities. Invent.
  Math. 108~(3), 575--633.
\newline\urlprefix\url{http://dx.doi.org/10.1007/BF02100618}

\bibitem[{Ziegler(1995)}]{Zieg1995}
Ziegler, G.~M., 1995. Lectures on polytopes. Vol. 152 of Graduate Texts in
  Mathematics. Springer-Verlag, New York.
\newline\urlprefix\url{https://doi.org/10.1007/978-1-4613-8431-1}

\end{thebibliography}
\newcommand{\Gathen}{\relax}\newcommand{\Hoeij}{\relax}\newcommand{\Hoeven}{\relax}\def\cprime{$'$}
  \def\cprime{$'$} \def\cprime{$'$} \def\cprime{$'$} \def\cprime{$'$}
  \def\cprime{$'$} \def\cprime{$'$} \def\cprime{$'$} \def\cprime{$'$}
  \def\polhk#1{\setbox0=\hbox{#1}{\ooalign{\hidewidth
  \lower1.5ex\hbox{`}\hidewidth\crcr\unhbox0}}} \def\cprime{$'$}

\end{document}